\newcommand{\dl}{\textit{dl}}
\newcommand{\Dom}{\ensuremath{\textit{Dom}\,}}
\newcommand{\BF}{\ensuremath{\textit{BF}_A}}
\newcommand{\PS}{\ensuremath{C_A}}
\newcommand{\fr}{\ensuremath{\mathit{se}}}
\newcommand{\export}{\mathbin{\setlength{\unitlength}{1ex}
     \begin{picture}(2.0,1.8)(-.8,0)
     \put(-.5,1.6){\line(1,0){1.4}}
     \put(-.5,-0.2){\line(1,0){1.4}}
     \put(-.44,-0.2){\line(0,1){1.8}}
     \put(.84,-0.2){\line(0,1){1.8}}
     \end{picture}
     }}
\newcommand{\SigSCL}{\ensuremath{\Sigma_{\textup{SCL}}(A)}}
\newcommand{\SigCP}{\ensuremath{\Sigma_{\textup{CP}}(A)}}
\newcommand{\Mse}{\ensuremath{\mathbb{M}_{\textit{se}}}}
\newcommand{\leftand}{~
     \mathbin{\setlength{\unitlength}{.9ex}
     \begin{picture}(1.6,1.8)(-.4,0)
     \put(-.8,0){\small$\wedge$}
     \put(-.68,-0.16){\textcolor{white}{\circle*{0.6}}}
     \put(-.68,-0.16){\circle{0.6}}
     \end{picture}
     }}
\newcommand{\leftands}{~
     \mathbin{\setlength{\unitlength}{.9ex}
     \begin{picture}(1.6,1.8)(-.4,0)
     \put(-.8,0){\small$\wedge$}
     \put(-.65,-0.1){\textcolor{white}{\circle*{0.6}}}
     \put(-.65,-0.1){\circle{0.6}}
     \end{picture}
     }}
\newcommand{\fulland}{~
     \mathbin{\setlength{\unitlength}{.9ex}
     \begin{picture}(1.6,1.8)(-.4,0)
     \put(-.8,0){\small$\wedge$}
     \put(-.68,-0.16){\circle*{0.66}}
     \end{picture}
     }}
\newcommand{\leftor}{~
     \mathbin{\setlength{\unitlength}{.9ex}
     \begin{picture}(1.6,1.8)(-.4,0)
     \put(-.8,0){\small$\vee$}
     \put(-.66,1.48){\textcolor{white}{\circle*{0.6}}}
     \put(-.66,1.48){\circle{0.6}}
     \end{picture}
     }}
\newcommand{\leftors}{~
     \mathbin{\setlength{\unitlength}{.9ex}
     \begin{picture}(1.6,1.8)(-.4,0)
     \put(-.8,0){\small$\vee$}
     \put(-.64,1.5){\textcolor{white}{\circle*{0.6}}}
     \put(-.64,1.55){\circle{0.6}}
     \end{picture}
     }}
\newcommand{\fullor}{~
     \mathbin{\setlength{\unitlength}{.9ex}
     \begin{picture}(1.6,1.8)(-.4,0)
     \put(-.8,0){\small$\vee$}
     \put(-.66,1.484){\circle*{0.66}}
     \end{picture}
     }}
\let \@sverbatim \@verbatim
\def \@verbatim {\@sverbatim \verbatimplus}
{\catcode`'=13 \gdef \verbatimplus{\catcode`'=13 \chardef '=13 }} 
\leaders\hbox{\normalfont$\m@th\mkern \@dotsep mu\hbox{.}\mkern \@dotsep mu$}\hfill}
\newcommand{\sub}[2]{\ensuremath{[#1 \mapsto #2]}}
\newcommand{\ssub}[4]{\ensuremath{[#1 \mapsto #2, #3 \mapsto #4]}}
\newcommand{\M}{\ensuremath{\mathbb{M}}}
\newcommand{\ST}{\ensuremath{\SP}}
\newcommand{\SNF}{\ensuremath{\textit{SNF}\,}}
\newcommand{\SE}{se}
\newcommand{\nfs}{\ensuremath f}
\newcommand{\EqFSCL}{\SCLe}
\newcommand{\T}{\NT}
\newcommand{\Tone}{\ensuremath{{\mathcal{T}_{A,\triangle}}}}
\newcommand{\subT}[2]{\ensuremath{[#1 \mapsto #2]}}
\newcommand{\cd}{cd}
\newcommand{\dd}{dd}
\newcommand{\tsd}{tsd}
\newcommand{\invs}{\ensuremath g}
\newcommand{\FSCL}{\axname{FSCL}}
\newcommand{\FreeSCL}{\FSCL}
\newcommand{\axname}[1]{\textup{\ensuremath{\textrm{#1}}}}
\newcommand{\CP}{\axname{CP}}
\newcommand{\CPandneg}{\CP(\neg,\leftand,\leftor)}
\newcommand{\SCLe}{\axname{EqFSCL}}
\newcommand{\SCLi}{\axname{EqFSCL}^-}
\newcommand{\SCL}{\axname{SCL}}
\newcommand{\RPSCL}{\axname{RPSCL}}
\newcommand{\SP}{\ensuremath{{\mathcal{S}_A}}}
\newcommand{\NT}{\ensuremath{{\mathcal{T}_A}}}
\newcommand{\tr}{\ensuremath{{\sf T}}}
\newcommand{\fa}{\ensuremath{{\sf F}}}
\newcommand{\true}{\ensuremath{\textit{true}}}
\newcommand{\false}{\ensuremath{\textit{false}}}
\newcommand{\lef}{\ensuremath{\triangleleft}}
\newcommand{\rig}{\ensuremath{\triangleright}}
\renewcommand{\unlhd}{\ensuremath{\scalebox{0.78}{\raisebox{.1pt}[0pt][0pt]{$\;\trianglelefteq\;$}}}}
\renewcommand{\unrhd}{\ensuremath{\scalebox{0.78}{\raisebox{.1pt}[0pt][0pt]{$\;\trianglerighteq\;$}}}}
\newcommand{\tlef}{\unlhd}
\newcommand{\trig}{\unrhd}
\newtheorem{theorem}{Theorem}[subsection]
\newtheorem{lemma}[theorem]{Lemma}
\newtheorem{proposition}[theorem]{Proposition}
\newtheorem{definition}[theorem]{Definition} 
\theoremstyle{definition}
\newtheorem*{thm:sclcpl*}{Theorem~\ref{thm:sclcpl}}
\newtheorem*{thm:nfs*}{Theorem~\ref{thm:nfs}}
\begin{document}

\title{An independent axiomatisation for free short-circuit logic}

\author{Alban Ponse \qquad Daan J.C. Staudt\\[1mm]
 {\small Section Theory of Computer Science, Informatics Institute}\\
 {\small  Faculty of Science, University of Amsterdam}\\
 {\small \url{https://staff.science.uva.nl/a.ponse}\qquad\url{https://www.daanstaudt.nl}}
}
\date{}
\maketitle

\begin{abstract}
Short-circuit evaluation denotes the semantics of 
propositional connectives in which the second
argument is evaluated only if the first argument does not suffice 
to determine the value of the expression.
Free short-circuit logic is the equational logic in which compound
statements are evaluated from left to right, while atomic evaluations
are not memorised throughout the evaluation, 
i.e., evaluations of distinct occurrences of
an atom in a compound statement may yield different truth values. 
We provide a simple
semantics for free SCL and an independent axiomatisation.
Finally, we discuss evaluation strategies, some other SCLs, and side effects.
\\[2mm]
\emph{Keywords:}
logic in computer science;
short-circuit evaluation;
non-commutative conjunction;
sequential connectives;
conditional composition;
side effect
\end{abstract}

{\small\tableofcontents}
\section{Introduction}

Short-circuit(ed) evaluation denotes the semantics of binary propositional connectives in which the second
argument is evaluated only if the first argument does not suffice 
to determine the value of the expression. In the setting of computer science, connectives
that prescribe short-circuit evaluation tend to have specific names or notations, such as 
Dijkstra's \textbf{cand} (conditional and) and \textbf{cor} (see~\cite{Dij76,Gries81}), 
or the short-circuited connectives \texttt{\&\&} and \texttt{||} as used in programming 
languages such as C, Go, Java, and Perl.
Short-circuit evaluation in C is discussed in e.g.,
\cite{ZD03}, or in a context with partial predicates, in~\cite{McCarthy}.

A motivation for short-circuit evaluation arises in the setting in which 
the evaluation of atomic propositions can be state dependent and 
atomic evaluations may change the (evaluation) state.
As an example, consider the short-circuit evaluation of this program fragment: 
\[
\texttt{(f(x)$\:$>$\;$5)}\texttt{ \&\& } \texttt{(g(x)$\:$<$\;$3)}
\]
(which can occur in the condition of an if-then-else or while construct),
the result of which can be different from 
the short-circuit evaluation of 
\[
\texttt{(g(x)$\:$<$\;$3)}\texttt{ \&\& }\texttt{(f(x)$\:$>$\;$5)}\]
if a 
\emph{side effect} in the evaluation of the atomic propositions
\texttt{(f(x)$\:$>$\;$5)} or \texttt{(g(x)$\:$<$\;$3)}
changes the value of $\texttt x$.
Note that in this example, the expressions \texttt{(f(x)$\:$>$\;$5)}
and \texttt{(g(x)$\:$<$\;$3)} are considered to be propositional
variables, or, as we will henceforth call these, ``atoms''.
In Section~\ref{sec:4} we briefly discuss side effects and an example 
that shows that short-circuit conjunction is not a commutative operation. 

Following~\cite{BBR95} we write $P\leftand Q$ 
for the sequential conjunction of $P$ and $Q$ that prescribes short-circuit evaluation 
(the small circle indicates that the left argument must be evaluated first). 
Similarly, we write $P\leftor Q$ for the sequential disjunction of $P$ and $Q$ 
that prescribes short-circuit evaluation.

Another motivation for short-circuit evaluation arises in the setting
in which intermediate evaluation results are not at all memorised 
throughout the evaluation
of a propositional statement, i.e., evaluations of distinct occurrences of
an atom in a propositional statement may yield different truth values. 
A simple example of this phenomenon, taken from~\cite{BP10}, is the compound statement a 
pedestrian evaluates before crossing a road with two-way traffic driving on the right:
\[
\textit{look-left-and-check}\leftand(\textit{look-right-and-check}\leftand
\textit{look-left-and-check}).
\]
This statement requires one, or two, or three atomic evaluations and cannot be
simplified to one that requires less. In particular, the evaluation result of the
second occurrence of the atom \textit{look-left-and-check} may be \false, while its
first occurrence was evaluated \true. 
Observe that the associative variant 
\[(\textit{look-left-and-check}\leftand\textit{look-right-and-check})\leftand
\textit{look-left-and-check}\]
prescribes the same short-circuit evaluation.

In this paper we restrict evaluations to the truth values \true\ and \false\
(in the conclusions we include a few words on a truth value for ``undefined'').
Given this restriction, a natural question is 
``which logical laws axiomatise short-circuit evaluation?'', and in 
this paper we provide an answer by considering \emph{short-circuit logic} (SCL).
Different SCLs can be distinguished based on the extent to which atomic evaluation
results are memorised. We discuss in detail the SCL associated
with the last example, which is called \emph{free short-circuit logic} (\FSCL),
thus the short-circuit logic in which the second evaluation of an atom can be
different from its first evaluation.
With help of \emph{evaluation trees} we can give a simple and natural definition of 
short-circuit evaluation, 
and we provide a complete and independent equational axiomatisation of \FSCL.

The paper is structured as follows:
in Section~\ref{sec:2}, we define evaluation trees, equational axioms for \FSCL, and normal forms.
In Section~\ref{sec:3} we prove that these axioms are complete for the restriction to closed terms. 
In Section~\ref{sec:4} we recall \FSCL\ as defined earlier,
and consider evaluation strategies and
some other variants of SCL that identify more
propositional statements, and side effects.
We end the paper in Section~\ref{sec:Conc} with some conclusions.
The paper contains three appendices, containing detailed proofs on independence,
normalisation, and a quote of an earlier completeness proof.
\\[2mm]
\textit{Note.}
Considerable parts of the text in the forthcoming  sections stem from~\cite{Daan,BPS13}.

\section{Evaluation trees, axioms for free short-circuit logic, and normal forms}
\label{sec:2}
We define evaluation trees and 
provide an equational axiomatisation for free short-circuit logic
(Section~\ref{subsec:SCLe}).
Then we define  normal
forms for closed propositional statements (Section~\ref{subsec:snf}).

\subsection{Evaluation trees and axioms}
\label{subsec:SCLe}
Given a non-empty set $A$ of atoms, we define evaluation trees,
where \tr\ represents the truth value \true, and \fa\ represents 
the truth value \false.

\begin{definition}
\label{def:treesN}
The set \NT\ of \textbf{evaluation trees} over $A$ with leaves in 
$\{\tr, \fa\}$ is defined inductively by
\[\tr\in\NT,\quad\fa\in\NT, \quad (X\unlhd a\unrhd Y)\in\NT 
~\text{ for any }X,Y \in \NT \text{  and } a\in A.\]
The operator $\_\unlhd a\unrhd\_$ is called 
\textbf{tree composition over $a$}.
In the evaluation tree $X \unlhd a \unrhd Y$, 
the root is represented by $a$,
the left branch by $X$ and the right branch by $Y$. 
The \textbf{depth} $d(..)$ of an evaluation tree is defined by 
\[d(\tr) = d(\fa) = 0\quad\text{and}\quad
d(Y \unlhd a \unrhd Z) = 1 + \max(d(Y ), d(Z)).\] 
\end{definition}
We refer to trees in \NT\ as evaluation trees, or trees for short. 
Next to the formal notation for evaluation
trees we will also use a more pictorial representation. For example,
the tree
\[\fa\unlhd b\unrhd(\tr\unlhd a\unrhd\fa)\]
can be depicted as follows, where $\unlhd$ yields a left branch and $\unrhd$ a right branch:
\begin{equation}
\label{plaatje}
\hspace{-12mm}
\begin{tikzpicture}[%
      level distance=7.5mm,
      level 1/.style={sibling distance=15mm},
      level 2/.style={sibling distance=7.5mm},
      baseline=(current bounding box.center)]
      \node (a) {$b$}
        child {node (b1) {$\fa$}
        }
        child {node (b2) {$a$}
          child {node (d1) {$\tr$}} 
          child {node (d2) {$\fa$}}
        };
      \end{tikzpicture}
\end{equation}

In order to define a short-circuit semantics for negation and the sequential 
connectives, we first define the \emph{leaf replacement} operator, 
`replacement' for short, on trees in \NT\ as follows. 
For $X\in\NT$, the replacement of \tr\ with $Y$ and $\fa$ with $Z$ in $X$, denoted
\[X[\tr\mapsto Y, \fa \mapsto Z]\]
is defined recursively by 
\begin{align*}
\tr[\tr\mapsto Y,\fa\mapsto Z]&= Y,\\
\fa[\tr\mapsto Y,\fa\mapsto Z]&= Z, \\
(X_1\unlhd a\unrhd X_2)[\tr\mapsto Y,\fa\mapsto Z]
&=X_1[\tr\mapsto Y,\fa\mapsto Z]\unlhd a\unrhd X_2[\tr\mapsto Y,\fa\mapsto Z].
\end{align*}
We note that the order in which the replacements of leaves of 
$X$ is listed
is irrelevant and we adopt the convention of not listing  
identities inside the brackets, e.g., 
$X[\fa\mapsto Z]=X[\tr\mapsto \tr,\fa\mapsto Z]$.
Repeated replacements satisfy the following identity:
\begin{align}
\nonumber
X[\tr\mapsto Y_1,\fa\mapsto &Z_1][\tr\mapsto Y_2,\fa\mapsto Z_2]=\\
\label{id:rp}
&\quad
X[\tr\mapsto Y_1[\tr\mapsto Y_2,\fa\mapsto Z_2],~
\fa\mapsto Z_1[\tr\mapsto Y_2,\fa\mapsto Z_2]].
\end{align}
This identity easily follows by structural induction on evaluation trees. 

We define the
set \SP\ of closed (sequential) propositional statements over $A$
by the following grammar:
\[P ::= a\mid\tr\mid\fa\mid \neg P\mid P\leftand P\mid P\leftor P,\]
where $a\in A$, \tr\ is a constant for the truth  value \true, and \fa\ for \false,
and $\neg$ is negation. The underlying signature of \SP\ is
$\SigSCL=\{\leftand,\leftor,\neg,\tr,\fa,a\mid a\in A\}$.
We now have the terminology and notation to formally define the 
interpretation of propositional statements in \SP\ as evaluation trees
by a function $se$ (abbreviating  short-circuit evaluation).

\begin{definition}
\label{def:se}
The unary \textbf{short-circuit evaluation function} $se : \SP \to\NT$ 
is defined as
follows, where $a\in A$:
\begin{align*}
se(\tr) &= \tr,
&se(\neg P)&=se(P)[\tr\mapsto \fa,\fa\mapsto \tr],\\
se(\fa) &= \fa,
&se(P \leftand Q)&= se(P)[\tr\mapsto se(Q)],\\
se(a)&=\tr\unlhd a\unrhd \fa,
\quad&
se(P \leftor Q)&= se(P)[\fa\mapsto se(Q)].
\end{align*}
\end{definition}

The overloading of the notation \tr\ in $se(\tr)=\tr$ is harmless
and will turn out to be useful (and similarly for \fa).
As a simple example we derive the evaluation tree for $\neg b\leftand a$:
\[se(\neg b\leftand a)=se(\neg b)[\tr\mapsto se(a)]=
(\fa\unlhd b\unrhd\tr)[\tr\mapsto se(a)]=\fa\unlhd b\unrhd(\tr\unlhd a\unrhd\fa),\] 
which can be depicted as in~\eqref{plaatje}.
Also, $se(\neg(b\leftor\neg a))=\fa\unlhd b\unrhd(\tr\unlhd a\unrhd\fa)$.
An evaluation tree $se(P)$ represents short-circuit evaluation in a way that can be
compared to the notion of a truth table for propositional logic (PL) in that it 
represents each possible evaluation of $P$. However, there are some important differences with
truth tables: in $se(P)$, the sequentiality
of $P$'s evaluation is represented, and
the same atom may occur multiple times in $se(P)$.

We are interested in the set $\{P=Q\mid P,Q\in\SP \text{ and } se(P)=se(Q)\}$ and we
will show that all equations in this set are derivable from the 
axioms in Table~\ref{tab:SCL} by equational logic.
In~\cite{BP12a,BPS13} we defined \emph{free short-circuit logic}, notation \FSCL, 
and that is why this set of axioms 
is named \SCLe. In Section~\ref{subsec:4.2} we will return to the definition of \FSCL.

\begin{table}[t]
{ \small
\centering
\rule{1\textwidth}{.4pt}
\begin{align}
\tag{$E$}
\text{Axioms:}\phantom{Rules:}&
s=t
\quad\text{for all equations $s=t$ in $E$}\\
\label{el1}
\tag{Reflexivity}
&s=s
\quad\text{for every term $t$}\\
\label{el2}
\tag{Symmetry}
\text{Rules:}\phantom{Axioms:}&\dfrac{s=t}{t=s}
\\
\label{el3}
\tag{Transitivity}
&\dfrac{s=t, ~t=v}{s=v}
\\
\label{el4}
\tag{Congruence}
&\dfrac{s_1=t_1,~...,~s_n=t_n}{f(s_1,...,s_n)=f(t_1,...,t_n))}
\quad\text{for every n-ary $f$}
\\
\label{el5}
\tag{Substitution}
&\dfrac{s=t}{\sigma(s)=\sigma(t)} 
\quad\text{for $\sigma$ a substitution}
\end{align}
\hrule
}
\caption{Axioms and rules of the equational logic $E$} 
\label{tab:el}
\end{table}

The axioms and rules for equational logic for axiom set $E$ and terms $s,t$ are those
in Table~\ref{tab:el} (cf.~\cite{Burris}),
where we write $\sigma(s)$ for the application of substitution $\sigma$ 
to term $s$.
If for $\SigSCL$-terms $s$ and $t$, the equation $s=t$ 
is derivable from the axioms in \SCLe\ by equational
logic we write $\SCLe\vdash s=t$. 
Some comments on these axioms:

\begin{table}[t]
{ \small
\centering
\rule{1\textwidth}{.4pt}
\begin{align}
\label{SCL1}
\tag{F1}
\fa&=\neg\tr\\[0mm]
\label{SCL2}
\tag{F2}
x\leftors y&=\neg(\neg x\leftands\neg y)\\[0mm]
\label{SCL3}
\tag{F3}
\neg\neg x&=x\\[0mm]
\label{SCL4}
\tag{F4}
\tr\leftands x&=x\\[0mm]
\label{SCL5}
\tag{F5}
x\leftors\fa&=x\\[0mm]
\label{SCL6}
\tag{F6}
\fa\leftands x&=\fa\\[0mm]
\label{SCL7}
\tag{F7}
(x\leftands y)\leftands z&=x\leftands (y\leftands z)
\\[0mm]
\label{SCL8}
\tag{F8}
\qquad
\neg x\leftands \fa&= x \leftands\fa
\\[0mm]
\label{SCL9}
\tag{F9}
(x\leftands\fa)\leftors y
&=(x\leftors\tr)\leftands y\\
\label{SCL10}
\tag{F10}
(x\leftands y)\leftors(z\leftands\fa)&=
(x\leftors (z\leftands\fa))\leftands(y\leftors (z\leftands\fa))
\end{align}
\hrule
}
\caption{\SCLe, a set of axioms for \FSCL} 
\label{tab:SCL}
\end{table}

\begin{itemize}\setlength\itemsep{-1mm}
\item
Axioms~\eqref{SCL1} and~\eqref{SCL2} can be seen as defining equations for 
\fa\ and $\leftor$. 
\item
Axioms~\eqref{SCL1}-\eqref{SCL3} imply sequential versions of 
De~Morgan's laws, which implies a left-sequential version of
the duality principle. Below, we elaborate on this.
\item
Axioms~\eqref{SCL4}-\eqref{SCL7}
define some standard identities. 
\item
Axiom~\eqref{SCL8} illustrates 
a typical property of a logic that models immunity for side effects: 
although it is the case that for each $P\in\SP$,
the evaluation result of $P\leftand\fa$ is
\false, the evaluation of $P$ might also yield
a side effect. 
However, the same side effect and evaluation result
are obtained upon evaluation of $\neg P\leftand\fa$.
\item
Axiom~\eqref{SCL9} characterises another property that concerns
possible side effects: because the 
evaluation result of $P\leftand\fa$ for each possible evaluation  of 
the atoms in $P$ is \false, $Q$ is always evaluated in $(P\leftand\fa)\leftor Q$
and determines the evaluation result. For a similar reason, $Q$ is always evaluated 
in $(P\leftor\tr)\leftand Q$ and determines the evaluation result.
Note that the evaluations of $P\leftor\tr$ and $P\leftand\fa$ accumulate the same side effects,
which perhaps is more easily seen if one replaces $Q$ by either \tr\ or \fa.
\item
Axiom~\eqref{SCL10} defines a restricted form of 
right-distributivity of ${\leftor}$ over ${\leftand}$. 
This axiom holds because if $x$ evaluates to \true, both sides further evaluate
$y\leftor (z\leftand\fa)$, and if $x$ evaluates to \false,
$z\leftands\fa$ determines the further evaluation result (which is then \false,
and by axiom~\eqref{SCL6},  $y\leftor (z\leftand \fa)$ is not evaluated in the right-hand side).
\end{itemize} 

The dual of $P\in\SP$, notation $P^{\dl}$, is defined as follows (for $a\in A$):
\begin{align*}
\tr^{\dl}&=\fa,
&a^{\dl}&=a,
&(P\leftand Q)^{\dl}&= P^{\dl}\leftor Q^{\dl},\\
\fa^{\dl}&=\tr,
&(\neg P)^{\dl}&=\neg P^{\dl},
&(P\leftor Q)^{\dl}&= P^{\dl}\leftand Q^{\dl}.
\end{align*} 
The duality mapping $(\:)^{\dl}:\SP\to\SP$ is an involution, that is, $(P^{\dl})^{\dl}=P$.
Setting $x^{\dl}=x$ for each variable $x$, the duality principle
extends to equations, e.g.,
the dual of axiom~\eqref{SCL7} is $(x\leftor y)\leftor z = x\leftor (y\leftor z)$.
It immediately follows that \SCLe\ satisfies the duality principle, that is,
for all terms $s,t$ over \SigSCL,
\[\SCLe\vdash s=t\quad\iff\quad\SCLe\vdash s^{\dl}=t^{\dl}.\]

In order to use some standard notation and terminology of model theory,
we define a model that follows the definition of the function $se$ 
from Definition~\ref{def:se}.

\begin{definition}[The short-circuit evaluation model]
\label{def:Mse} 
Let \Mse\ be the $\SigSCL$-algebra with domain $\Dom(\Mse)=\{se(P)\mid P\in\SP\}$ in which
the interpretation of the constants is defined by
\begin{align*}
\llbracket \tr\rrbracket^{\Mse}&=\tr,\quad
\llbracket \fa\rrbracket^{\Mse}=\fa,\quad
\llbracket a\rrbracket^{\Mse}=\tr\unlhd a\unrhd \fa\quad\text{for all $a\in A$},
\end{align*}
and in which the interpretation of the connectives has equal notation and is defined by
\[\neg X=X[\tr\mapsto Y,\fa\mapsto\tr], 
\quad X\leftand Y=X[\tr\mapsto Y], 
\quad X\leftor Y=X[\fa\mapsto Y].\]
\end{definition}

We will show that $\Mse$ is an initial model for $\SCLe$, that is, 
if $X\in \Dom(\Mse)$ then for some $P\in\SP$, 
$X=\llbracket P\rrbracket^{\Mse}$,
and if $\Mse\models P=Q$, then $\SCLe \vdash P=Q$.
The first property holds by construction of $\Mse$.
Note that for all $P\in\SP,~\llbracket P\rrbracket^{\Mse}=se(P)$ 
(this follows easily by structural induction).
Hence, for all $P,Q\in\SP$,
$\Mse\models P=Q$ if, and only if, $se(P)=se(Q)$.
However, first of all, we have to show that \Mse\ is indeed a model of 
the equational logic \SCLe.

\begin{theorem}[Soundness]
\label{thm:sclsnd}
For all \SigSCL-terms $s,t$, if $\SCLe \vdash s=t$ then 
\(\Mse \models s=t.\)
\end{theorem}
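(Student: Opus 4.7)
The plan is a standard soundness argument: since $\SCLe \vdash s = t$ iff $s = t$ can be derived by the rules of equational logic from the ten axioms F1--F10, and since $\Mse$ is by construction a $\SigSCL$-algebra, the five rules of Table~\ref{tab:el} automatically preserve validity in $\Mse$. Therefore the whole theorem reduces to verifying, for each of the ten axioms in Table~\ref{tab:SCL} separately, that under any assignment of the variables to elements of $\Dom(\Mse)$, the evaluation trees denoted by the two sides coincide. I would state this reduction at the outset and then go axiom by axiom.

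The main technical tool will be identity~\eqref{id:rp} for iterated leaf replacements, together with the fact that negation and the two sequential connectives are interpreted in $\Mse$ by instances of leaf replacement. Each axiom then becomes a purely equational calculation on trees. The trivial cases are F1 (both sides denote \fa), F3 (two swaps of $\tr$ and $\fa$ cancel), and F4--F6 (immediate from the definitions of $\tr \unlhd a \unrhd \fa$ composition via $[\tr\mapsto\_]$ and $[\fa\mapsto\_]$, taking $\tr$ or $\fa$ as the outer tree). For F2 one writes out both sides as iterated replacements on a single tree and applies~\eqref{id:rp} to see that both reduce to the same double-swap. For F7 (associativity of $\leftand$), both $(X \leftand Y) \leftand Z$ and $X \leftand (Y \leftand Z)$ become $X[\tr \mapsto Y[\tr \mapsto Z]]$ after one application of~\eqref{id:rp}.

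The more interesting cases are F8, F9, and F10, because the equality of trees is not visually obvious. For F8, one computes $\neg X \leftand \fa = X[\tr\mapsto\fa,\fa\mapsto\tr][\tr\mapsto\fa]$; by~\eqref{id:rp} this collapses to $X[\tr\mapsto\fa,\fa\mapsto\fa]$, which is exactly $X[\tr\mapsto\fa] = X \leftand \fa$ (all leaves flattened to $\fa$, structure preserved). F9 is analogous: both $(X\leftand\fa)\leftor Y$ and $(X\leftor\tr)\leftand Y$ reduce via~\eqref{id:rp} to $X[\tr\mapsto Y,\fa\mapsto Y]$. F10 is the one I expect to be the main obstacle, both because of its size and because it involves four subtrees; the strategy is the same, though: expand both sides using the defining clauses for $\leftand$ and $\leftor$, apply~\eqref{id:rp} as needed to merge nested replacements, and observe that both sides reduce to $X[\tr \mapsto Y[\fa\mapsto Z[\tr\mapsto\fa]],\; \fa \mapsto Z[\tr\mapsto\fa]]$ (using $Z \leftand \fa = Z[\tr\mapsto\fa]$ on the right factor that is common to both sides). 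Once all ten axiom-checks are in place, the theorem follows by induction on the length of the derivation of $s = t$ in $\SCLe$.
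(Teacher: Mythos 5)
Your proposal is correct and follows essentially the same route as the paper: reduce soundness to checking that the equational-logic rules preserve validity in $\Mse$ and that each axiom of \SCLe\ holds under every assignment, the latter by computing both sides as iterated leaf replacements and collapsing them with identity~\eqref{id:rp}. In fact you supply more detail than the paper, which spells out only the congruence rule and the axiom~\eqref{SCL3} case, declaring the remaining axiom checks ``cumbersome, but not difficult''; your reductions for \eqref{SCL8}--\eqref{SCL10} are the correct ones (for \eqref{SCL10} the last step also uses that $Z[\tr\mapsto\fa]$ has only $\fa$-leaves, so further replacement of $\tr$ in it is vacuous).
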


\begin{proof}
It is immediately clear that reflexivity, symmetry, and transitivity hold.
For congruence we show only that 
for each \SigSCL-term $u$,
\[\Mse\models s=t \quad\Rightarrow\quad
\Mse\models u\leftand s=u\leftand t.
\]
Fix $u$ and let $i$ be an interpretation of variables in $\Mse$. 
We have to show $\Mse,i\models u\leftand s=u\leftand t$ if $\Mse\models s=t$.
Now there must be $P,Q,R\in\SP$ with $i(s)=se(P)$, $i(t)=se(Q)$, and $i(u)=se(R)$.
If $\Mse\models s=t$, then $\Mse,i\models s=t$, and thus $se(P) = se(Q)$. 
Hence,
\begin{equation*}
se(R)\subT{\tr}{se(P)}=
se(R)\subT{\tr}{se(Q)},
\end{equation*}
and thus $\Mse,i\models u\leftand s=u\leftand t$.
In a similar way it follows that substitution holds. 

Verifying the validity of the axioms in $\EqFSCL$ is cumbersome, but not
difficult. As an example we show this for \eqref{SCL3}. 
Fix some interpretation $i$ 
of variables and assume $i(x)=se(P)$, then 
\begin{equation*}
\Mse,i\models\neg\neg\SE(P) = \SE(P)\ssub{\tr}{\fa}{\fa}{\tr}
\ssub{\tr}{\fa}{\fa}{\tr}  = \SE(P),
\end{equation*}
where the latter equality follows from identity~\eqref{id:rp} for repeated 
replacements.
\end{proof}

The following result is non-trivial and proved in Section~\ref{subsec:cpl}.
\begin{theorem}[Completeness of \SCLe\ for closed terms]
\label{thm:sclcplP}
For all $P,Q\in\SP$,
\[\SCLe\vdash P=Q 
\quad\iff\quad
 \Mse\models P=Q.\]
\end{theorem}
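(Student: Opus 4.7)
The soundness direction $(\Rightarrow)$ is already handled by Theorem~\ref{thm:sclsnd}, so the plan is to prove the converse: if $\Mse\models P=Q$ (equivalently, $\SE(P)=\SE(Q)$), then $\SCLe\vdash P=Q$. My strategy is the standard ``normalise-and-match'' route, relying on the normal forms promised in Section~\ref{subsec:snf}. Concretely, I would (i) define a syntactic normal form $\SNF\!(P)$ for each $P\in\SP$ such that the map $\SNF\!(\cdot)$ is injective on evaluation trees, i.e.\ $\SE(P)=\SE(Q)$ implies $\SNF\!(P)\equiv\SNF\!(Q)$ syntactically; (ii) prove the \emph{normalisation lemma} $\SCLe\vdash P=\SNF\!(P)$ for every $P\in\SP$ by induction on the structure of $P$ (or equivalently on the depth of $\SE(P)$); and then (iii) close the argument by the chain
\[
\SCLe\vdash P=\SNF\!(P)\equiv \SNF\!(Q)=Q.
\]

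For step (i), the natural shape of a normal form mirrors the evaluation tree: a closed term whose top-level structure is a ``conditional'' $X\unlhd a\unrhd Y$ encoded by $(a\leftand \SNF\!(X'))\leftor(\neg a\leftand\fa\leftor\SNF\!(Y'))$-style schemes, with \tr\ and \fa\ as the only base cases. Injectivity then reduces to a straightforward induction: the root atom and the two subtrees are recoverable from the normal form.

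For step (ii), the induction handles the base cases and negation, conjunction, and disjunction cases separately. The constants and atoms are immediate; the negation case is handled by pushing $\neg$ inward using \eqref{SCL1}, \eqref{SCL2}, and \eqref{SCL3} (sequential De~Morgan), while the $\leftor$ case is reduced to the $\leftand$ case by duality (or again by \eqref{SCL2}). The real work is the $\leftand$ case: given $P\leftand Q$ with $P$ and $Q$ already in normal form, one must rebuild a normal form corresponding to $\SE(P)[\tr\mapsto\SE(Q)]$, i.e.\ splice $\SNF\!(Q)$ into every ``\tr-leaf slot'' of $\SNF\!(P)$. This is where the technical axioms earn their keep: associativity \eqref{SCL7} lets one reassociate conjunctions, \eqref{SCL4}--\eqref{SCL6} clean up constants on the boundary, \eqref{SCL8} absorbs a negation in front of a right-$\fa$ context, and \eqref{SCL9}--\eqref{SCL10} perform the crucial distribution of $\leftand Q$ across a top-level $\leftor$-structure without disturbing accumulated side effects represented by subterms of the form $z\leftand\fa$.

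The main obstacle I expect is precisely this splicing step: proving, using only \eqref{SCL1}--\eqref{SCL10}, that $\SNF\!(P)\leftand R$ can be rewritten into the normal form obtained by grafting $R$ at each $\tr$-leaf. A clean way to organise this is to prove an auxiliary lemma that pushes a trailing conjunction $\leftand R$ through the normal-form constructors by induction, with \eqref{SCL9}--\eqref{SCL10} providing the distributive steps over $\leftor$ and \eqref{SCL7}--\eqref{SCL8} providing the associative/negation cases; duality then gives the corresponding splicing lemma for $\leftor$. Once the normalisation lemma is in place, the completeness theorem follows directly as indicated, with the forward direction already supplied by Theorem~\ref{thm:sclsnd}.
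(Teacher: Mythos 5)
Your high-level skeleton is the same as the paper's: normalise both sides (Theorem~\ref{thm:nfs}) and then argue that normal forms with equal $se$-images are syntactically identical (Theorem~\ref{thm:sclinv}), which is exactly the chain you write in step (iii). Your step (ii) corresponds to the paper's normalisation theorem and Appendix~\ref{app:nf}, and your remarks about which axioms do the splicing work are broadly on target. The problem is in step (i), and it is not a cosmetic one.

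First, the normal form you propose does not exist in \SigSCL. You want a term whose syntax mirrors the evaluation tree node by node, with a constructor that puts $\SNF(X')$ on the left branch of a root atom $a$ and $\SNF(Y')$ on the right. But
\[
\SE((a\leftand R)\leftor S)=\bigl(\SE(R)\sub{\fa}{\SE(S)}\bigr)\unlhd a\unrhd \SE(S),
\]
so the trailing $\leftor S$ necessarily contaminates the left branch: you cannot graft two \emph{independent} subtrees onto the two sides of an atom using $\neg,\leftand,\leftor$ alone (Hoare's conditional is not expressible here). This is precisely why the paper's \SNF\ has the layered \tr-term/\fa-term/$\ell$-term/$*$-term grammar rather than a tree-shaped one. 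Second, and consequently, your claim that injectivity ``reduces to a straightforward induction: the root atom and the two subtrees are recoverable'' is where the real difficulty of the whole completeness proof lives, and your proposal skips it. With the actual normal forms, $\SE(P\leftand Q)$ contains a copy of $\SE(Q)$ grafted at \emph{every} \tr-leaf of $\SE(P)$, and recovering the constituents requires showing that a suitable decomposition $(\SE(P)\sub{\tr}{\triangle},\SE(Q))$ is the \emph{unique} one of its kind; the paper spends all of Section~\ref{subsec:tree} on this (Lemmas~\ref{lem:sperttf} and~\ref{lem:snondectf}, Theorems~\ref{thm:scddd} and~\ref{thm:stsd}), and even exhibits ``candidate'' decompositions that are not the intended ones, so that a minimality condition on the depth of the second component must be imposed. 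Without an argument of this kind your step (iii) does not go through: equality of trees gives you nothing about the normal forms until you know that $\SE$ is injective on \SNF, and that is the theorem, not an observation.
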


Observe that \SCLe\ does \emph{not} imply the following properties:
\begin{itemize}\setlength\itemsep{-1mm}
\item idempotence, e.g., $se(a\leftand a)\ne se(a)$,
\item commutativity, e.g., $se(a\leftand b)\ne se(b\leftand a)$,
\item absorption, e.g., $se(a\leftand (a\leftor b))\ne se(a)$,
\item distributivity, 
e.g., $se((a\leftand b)\leftor c)\ne se((a\leftor c)\leftand (b\leftor c))$.
\end{itemize}

The following lemma is used in the proof of Theorem~\ref{thm:sclcplP}.
We note that the lemma's identity
was presented as an \SCLe-axiom in~\cite{BP12a}
and is now replaced by the current axiom~\eqref{SCL8}.
\begin{lemma}
\label{lem:seqs}
$\EqFSCL\vdash (x \leftor y) \leftand (z \leftand \fa) = 
(\neg x \leftor (z
  \leftand \fa)) \leftand (y \leftand (z \leftand \fa))$.
\end{lemma}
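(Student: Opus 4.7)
Write $w$ for $z \leftand \fa$. The plan is to reduce both sides of the equation to expressions of the shape $B \leftand \fa$ in which the inner $B$'s agree at all but one sub-position, and then close that last gap by a short computation pivoting on axiom~\eqref{SCL8}.

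I first collect two derived identities. Instantiating~\eqref{SCL9} at $y := \fa$ and simplifying by~\eqref{SCL5} yields the \emph{sink law} $z \leftand \fa = (z \leftor \tr) \leftand \fa$; together with the De~Morgan equalities (standard consequences of~\eqref{SCL1}--\eqref{SCL3}) and one further use of~\eqref{SCL8} it follows that $\neg w = \neg z \leftor \tr$ and $(\neg z \leftor \tr) \leftand \fa = w$. I will also use the dual of~\eqref{SCL10}, which is derivable via the duality principle already established.

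For the left-hand side, the sink law with~\eqref{SCL7} produces $((x \leftor y) \leftand (z \leftor \tr)) \leftand \fa$, and the dual of~\eqref{SCL10} then distributes to $((x \leftand (z \leftor \tr)) \leftor (y \leftand (z \leftor \tr))) \leftand \fa$. Applying~\eqref{SCL8} at the outermost $\leftand \fa$ replaces the whole inner expression by its negation; I then push the negation inwards by repeated De~Morgan (using~\eqref{SCL3} to cancel double negations) and replace each $\neg(z \leftor \tr)$ by $w$, arriving at $((\neg x \leftor w) \leftand (\neg y \leftor w)) \leftand \fa$. For the right-hand side, two applications of~\eqref{SCL7} suffice to rewrite it as $((\neg x \leftor w) \leftand (y \leftand z)) \leftand \fa$.

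It therefore only remains to verify the single auxiliary identity
\[
(\neg y \leftor w) \leftand \fa = (y \leftand z) \leftand \fa,
\]
after which congruence and~\eqref{SCL7} close the argument. For this identity one applies~\eqref{SCL8} and De~Morgan to obtain $(y \leftand \neg w) \leftand \fa$, then~\eqref{SCL7} and a second application of~\eqref{SCL8} on $\neg w \leftand \fa$ to reach $y \leftand (w \leftand \fa)$, and finally unfolds $w = z \leftand \fa$ and tidies up with~\eqref{SCL7} and~\eqref{SCL6}.

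The main difficulty is guessing the correct common shape: one must deploy axiom~\eqref{SCL8} (to flip negations freely under an outer $\leftand \fa$), the sink law (to shuttle between $z \leftand \fa$ and $z \leftor \tr$), and the dual of~\eqref{SCL10} together, since no one of these tools on its own aligns the two sides. The entire derivation turns on~\eqref{SCL8}, which is exactly the new axiom introduced to replace this very identity in the earlier axiomatisation.
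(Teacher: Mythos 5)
Your proof is correct: the sink law $z \leftand \fa = (z \leftor \tr) \leftand \fa$ does follow from \eqref{SCL9} with $y:=\fa$ plus \eqref{SCL5}, the dual of \eqref{SCL10} is legitimately available via the duality principle established from \eqref{SCL1}--\eqref{SCL3}, both sides do reduce to the common shapes you claim, and the closing identity $(\neg y \leftor w)\leftand\fa = (y\leftand z)\leftand\fa$ goes through exactly as described. The route is, however, genuinely different from the paper's. The paper gives a single left-to-right equational chain that never leaves the ``$z\leftand\fa$ world'': it pads the left-hand side with an extra $\leftand\,\fa$, uses \eqref{SCL8} to replace the padded conjunct by its negation, applies De~Morgan so that $x\leftor y$ becomes $\neg x\leftand\neg y$ and \eqref{SCL10} applies \emph{in its stated form}, and then undoes the negation on the second factor; axiom \eqref{SCL9} and the duality principle are never invoked. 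You instead shuttle between $z\leftand\fa$ and $z\leftor\tr$ via \eqref{SCL9}, distribute with the \emph{dual} of \eqref{SCL10} over the disjunction $x\leftor y$ as given, and meet in the middle. What your version buys is that the distributivity step needs no prior De~Morgan massaging of $x\leftor y$; the cost is a larger toolkit (\eqref{SCL9}, duality, an extra auxiliary identity) and a two-sided reduction rather than a one-directional calculation that can be checked line by line. Both proofs pivot on the same essential trick---flipping negations freely under an outer $\leftand\,\fa$ via \eqref{SCL8}---so the difference is one of routing rather than of underlying idea.
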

\begin{proof}
\begin{align*}
(x\leftor y) \leftand (z \leftand \fa)
&= (x \leftor y) \leftand ((z \leftand \fa) \leftand \fa)
&&\text{by \eqref{SCL6}, \eqref{SCL7}} \\
&= ((x \leftor y) \leftand \neg(z \leftand \fa)) \leftand \fa
&&\text{by \eqref{SCL8}, \eqref{SCL7}} \\
&= ((\neg x \leftand \neg y) \leftor (z \leftand \fa)) \leftand \fa
&&\text{by \eqref{SCL8}, \eqref{SCL2}, \eqref{SCL3}} \\
&= ((\neg x \leftor (z \leftand \fa)) \leftand (\neg y \leftor (z
  \leftand \fa))) \leftand \fa
&&\text{by \eqref{SCL10}} \\
&= (\neg x \leftor (z \leftand \fa)) \leftand ((\neg y \leftor (z
  \leftand \fa)) \leftand \fa)
&&\text{by \eqref{SCL7}} \\
&= (\neg x \leftor (z \leftand \fa)) \leftand ((y \leftand \neg(z
  \leftand \fa)) \leftand \fa)
&&\text{by \eqref{SCL8}, \eqref{SCL2}, \eqref{SCL3}} \\
&= ((\neg x \leftor (z \leftand \fa)) \leftand y) \leftand (\neg(z
  \leftand \fa) \leftand \fa)
&&\text{by \eqref{SCL7}} \\
&= ((\neg x \leftor (z \leftand \fa)) \leftand y) \leftand ((z
  \leftand \fa) \leftand \fa)
&&\text{by \eqref{SCL8}} \\
&= ((\neg x \leftor (z \leftand \fa)) \leftand y) \leftand (z
  \leftand \fa)
&&\text{by \eqref{SCL7}, \eqref{SCL6}} \\
&= (\neg x \leftor (z \leftand \fa)) \leftand (y \leftand (z
  \leftand \fa)).
&&\text{by \eqref{SCL7}} 
\end{align*}
\end{proof}

We conclude this section with some facts about $\SCLe$.
First, we prove that axioms~\eqref{SCL1} and~\eqref{SCL3} are derivable from
the remaining axioms, and then we show that these remaining axioms are independent.
For both results, we used tools from~\cite{BirdBrain}.
We note that Lemma~\ref{lem:seqs} was also checked with the theorem
prover \emph{Prover9} from~\cite{BirdBrain}.

\begin{definition}
\label{def:SCLe'}
Let $\SCLi=\SCLe\setminus\{\eqref{SCL1},\eqref{SCL3}\}$.
\end{definition}

\begin{proposition}
\label{prop:extraF}
$\SCLi\setminus\{\eqref{SCL8},\eqref{SCL10}\}
\vdash\eqref{SCL1},\eqref{SCL3}$.
\end{proposition}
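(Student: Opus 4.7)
The plan is to derive \eqref{SCL1} and \eqref{SCL3} from the six remaining axioms \eqref{SCL2}, \eqref{SCL4}, \eqref{SCL5}, \eqref{SCL6}, \eqref{SCL7}, and \eqref{SCL9}. The underlying idea is that \eqref{SCL9} controls the behaviour of the ``absorbing'' factors $x\leftand\fa$ and $x\leftor\tr$, while \eqref{SCL2} plays the role of a definition of $\leftor$ in terms of $\leftand$ and $\neg$; together these should already force $\neg\tr$ to behave like $\fa$ and $\neg$ to be an involution.

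First I would assemble a small arsenal of consequences of \eqref{SCL9}. Instantiating $y:=\fa$ and simplifying via \eqref{SCL5} gives $x\leftand\fa=(x\leftor\tr)\leftand\fa$; instantiating $x:=\tr$ and using \eqref{SCL4} gives $\fa\leftor y=(\tr\leftor\tr)\leftand y$; and instantiating $x:=\fa$ and using \eqref{SCL6} gives $\fa\leftor y=(\fa\leftor\tr)\leftand y$. The last two together yield $(\tr\leftor\tr)\leftand y=(\fa\leftor\tr)\leftand y$ for all $y$, an identity that already starts to blur the distinction between $\tr$ and $\fa$ on the ``forcing'' side of a conjunction.

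Next I would use \eqref{SCL2} to rewrite each of the $\leftor$-terms appearing above, explicitly introducing the subterms $\neg\tr$ and $\neg\fa$. For instance, \eqref{SCL2} at $(x,y)=(\tr,\fa)$ combined with \eqref{SCL5} yields $\tr=\neg(\neg\tr\leftand\neg\fa)$, and at $(x,y)=(\fa,\tr)$ it gives $\fa\leftor\tr=\neg(\neg\fa\leftand\neg\tr)$. Threading these through the consequences of \eqref{SCL9} listed above, and using associativity \eqref{SCL7} to regroup the resulting conjunctions, one obtains a chain of equalities ending in \eqref{SCL1}, i.e.\ $\fa=\neg\tr$. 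Once \eqref{SCL1} is available, \eqref{SCL2} at $(x,y)=(\tr,y)$ reduces via \eqref{SCL6} to $\tr\leftor y=\neg\fa$; the right-hand side does not depend on $y$, so taking $y:=\fa$ and invoking \eqref{SCL5} yields $\tr=\neg\fa$, the dual ``base-case'' involution.

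Finally, to obtain \eqref{SCL3} I would substitute $y:=\fa$ in \eqref{SCL2} and use \eqref{SCL5} together with the derived $\neg\fa=\tr$ to get $a=\neg(\neg a\leftand\tr)$, and then iterate the same rewrite with $a$ replaced by $\neg a$. Combining the two instances---using once more the first corollary of \eqref{SCL9} together with \eqref{SCL7} to convert $\neg a\leftand\tr$ into a form matching $\neg\neg a$---should peel off the outer $\neg\neg$. The main obstacle is that the precise sequence of rewrites is genuinely intricate, and very little of it is forced by syntactic shape; the authors explicitly remark that they used the Prover9 tool from~\cite{BirdBrain} to discover the derivation, so a readable presentation essentially amounts to unpacking an automatically-found certificate into a digestible sequence of equational steps that cite only the six allowed axioms.
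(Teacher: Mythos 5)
Your overall architecture matches the paper's (derive \eqref{SCL1} first, then $\neg\fa=\tr$, then \eqref{SCL3} using \eqref{SCL2} and \eqref{SCL9}), and several individual steps are correct: the three instances of \eqref{SCL9} you list, and the derivation of $\tr=\neg\fa$ once \eqref{SCL1} is available. But the two steps that carry all the weight are exactly the ones you leave unspecified, and in both cases the ingredients you name do not combine in the way you describe. For \eqref{SCL1}, the sentence ``threading these through the consequences of \eqref{SCL9}\dots one obtains a chain of equalities ending in \eqref{SCL1}'' is an assertion, not a derivation. Your \eqref{SCL9}-instances all concern the absorbing contexts $\_\leftand\fa$ and $\_\leftor\tr$ and never touch $\neg\tr$, while your two \eqref{SCL2}-instances bury $\neg\tr$ and $\neg\fa$ under an outer negation that cannot be removed without \eqref{SCL3} — which is not yet available. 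The missing idea is an auxiliary law that moves a negation across a conjunction; the paper's proof rests on
\[
\neg x\leftand\neg\fa \;=\; \neg(x\leftand\neg\fa),
\]
obtained from \eqref{SCL5} and a double application of \eqref{SCL2} (rewrite $\neg x\leftand\neg\fa$ as $(\neg x\leftand\neg\fa)\leftor\fa$, then as $\neg(\neg(\neg x\leftand\neg\fa)\leftand\neg\fa)$, collapse the inner negation by \eqref{SCL2} and \eqref{SCL5}). From this one gets $\neg\neg x\leftand\neg\fa=x$, $\neg\neg\fa=\neg\tr\leftand\neg\fa$, $\neg\fa=\neg\fa\leftand\neg\fa$, and with \eqref{SCL7} the identification $\fa=\neg\tr\leftand\neg\fa$, whence $\neg\fa=\tr$ and $\fa=\neg\tr$. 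Note also that \eqref{SCL9} is not needed for \eqref{SCL1} at all.

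For \eqref{SCL3} there is a second, independent gap: combining $x=\neg(\neg x\leftand\tr)$ with its instance at $\neg x$ and then ``peeling off the outer $\neg\neg$'' cannot work, because no available axiom cancels an outermost negation — that cancellation \emph{is} \eqref{SCL3}. The difficulty is that you instantiated \eqref{SCL2} with $y:=\fa$, which only reproduces \eqref{SCL5}; the working move is the instantiation $x:=\fa$, read right-to-left:
\[
\neg\neg x=\neg(\tr\leftand\neg x)=\neg(\neg\fa\leftand\neg x)=\fa\leftor x,
\]
after which your own third \eqref{SCL9}-instance $\fa\leftor y=(\fa\leftor\tr)\leftand y$, together with $\fa\leftor\tr=\tr$ (which follows from \eqref{SCL2}, \eqref{SCL1} and $\neg\fa=\tr$) and \eqref{SCL4}, gives $\neg\neg x=\tr\leftand x=x$. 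So you had the right \eqref{SCL9}-instance in hand but paired it with the wrong use of \eqref{SCL2}. As it stands the proposal is a plan whose two essential links are missing, not a proof.
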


\begin{proof}
Distilled from output of \emph{Prover9}~\cite{BirdBrain}.
In order to derive axiom~\eqref{SCL1} we start with some auxiliary results:
\begin{align}
\nonumber
\neg x\leftand\neg\fa
&=(\neg x\leftand\neg\fa)\leftor\fa
&&\text{by~\eqref{SCL5}}\\
\nonumber
&=\neg(\neg(\neg x\leftand\neg\fa)\leftand\neg\fa)
&&\text{by~\eqref{SCL2}}\\
\nonumber
&=\neg((x\leftor\fa)\leftand\neg\fa)
&&\text{by~\eqref{SCL2}}\\
\label{Aux1}
\tag{Aux1}
&=\neg(x\leftand\neg\fa),
&&\text{by~\eqref{SCL5}}
\end{align}
hence, 
\begin{align}
\label{Aux2}
\tag{Aux2}
&\neg\neg x\leftand\neg\fa
\stackrel{\eqref{Aux1}}=\neg(\neg x\leftand\neg\fa)
\stackrel{\eqref{SCL2}}=x\leftor\fa
\stackrel{\eqref{SCL5}}=x,
\\
\label{Aux3}
\tag{Aux3}
&\neg\neg\fa
\stackrel{\eqref{SCL4}}=\neg(\tr\leftand\neg\fa)
\stackrel{\eqref{Aux1}}=\neg\tr\leftand\neg\fa,
\\
\label{Aux4}
\tag{Aux4}
&\neg \fa
\stackrel{\eqref{SCL6}}= \neg(\fa\leftand\neg\fa)
\stackrel{\eqref{Aux1}}=\neg\fa\leftand\neg\fa.
\end{align}
Next, 
\begin{align}
\nonumber
\fa
&= \neg\neg \fa\leftand\neg\fa
&&\text{by \eqref{Aux2}}\\
\nonumber
&= (\neg\tr\leftand\neg\fa)\leftand\neg\fa
&&\text{by \eqref{Aux3}}\\
\nonumber
&=\neg\tr\leftand(\neg\fa\leftand\neg\fa)
&&\text{by \eqref{SCL7}}\\
\label{Aux5}
\tag{Aux5}
&=\neg\tr\leftand\neg\fa,
&&\text{by \eqref{Aux4}}
\end{align}
hence,
\begin{align}
\label{Aux6}
\tag{Aux6}
&\neg\fa
\stackrel{\eqref{Aux5}}=
\neg(\neg\tr\leftand\neg\fa)
\stackrel{\eqref{SCL2}}=
\tr\leftor\fa\stackrel{\eqref{SCL5}}=\tr.
\end{align}
With these auxiliary results we derive axiom~\eqref{SCL1}:
\[\fa
\stackrel{\eqref{Aux5}}=
\neg\tr\leftand\neg\fa
\stackrel{\eqref{Aux1}}=
\neg(\tr\leftand\neg\fa)
\stackrel{\eqref{SCL4}}=
\neg\neg\fa\stackrel{\eqref{Aux6}}=\neg\tr.
\]

Finally, we derive axiom~\eqref{SCL3} and start with an auxiliary result:
\begin{equation}
\label{Aux7}
\tag{Aux7}
\fa \leftor \tr
\stackrel{\eqref{SCL2}}= \neg(\neg \fa\leftand\neg\tr)
\stackrel{\eqref{Aux6}}= \neg(\tr\leftand\neg\tr)
\stackrel{\eqref{SCL4}}=\neg\neg\tr
\stackrel{\eqref{SCL1}}=\neg\fa
\stackrel{\eqref{Aux6}}=\tr,
\end{equation}
and thus
\begin{align*}
\neg\neg x
&=\neg(\tr\leftand\neg x)
&&\text{by \eqref{SCL4}}\\
&=\neg(\neg\fa\leftand\neg x)
&&\text{by \eqref{Aux6}}\\
&=\fa\leftor x
&&\text{by~\eqref{SCL2}}\\
&=(\fa\leftand\fa)\leftor x
&&\text{by~\eqref{SCL6}}\\
&=(\fa\leftor\tr)\leftand x
&&\text{by~\eqref{SCL9}}\\
&=\tr\leftand x
&&\text{by~\eqref{Aux7}}\\
&=x.
&&\text{by~\eqref{SCL4}}
\end{align*}
\end{proof}

\begin{theorem}
\label{thm:indepSCLe'}
The axioms of $\SCLi$ are independent if $A$ contains at least two atoms.
\end{theorem}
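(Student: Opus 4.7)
The plan is to prove independence by the standard model-theoretic method: for each axiom $\alpha\in\SCLi$, exhibit a $\SigSCL$-algebra $\mathcal{M}_\alpha$ that validates every equation in $\SCLi\setminus\{\alpha\}$ but refutes $\alpha$. Since $\SCLi$ consists of the eight axioms \eqref{SCL2} and \eqref{SCL4}--\eqref{SCL10}, this amounts to producing eight such ``independence algebras''. Given Theorem~\ref{thm:sclsnd}, each $\mathcal{M}_\alpha$ is automatically \emph{not} isomorphic to (a quotient of) $\Mse$, and conversely any such algebra, together with a witnessing assignment showing that $\alpha$ fails, suffices to conclude that $\alpha$ is not derivable from the other axioms by equational logic.

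For each model I would specify a finite carrier, the interpretations of the constants $\tr$, $\fa$ and of each $a\in A$, and tables for $\neg$, $\leftand$ and $\leftor$; then I would verify the seven remaining axioms by a routine (finite) case analysis, and exhibit a concrete assignment under which $\alpha$ evaluates to distinct elements on its two sides. The hypothesis $|A|\geq 2$ plays an essential role: for several axioms---in particular \eqref{SCL8}, \eqref{SCL9}, and \eqref{SCL10}---the two sides of the equation share their free variables in such a way that any counter-example requires simultaneously choosing two ``independent'' atomic behaviours. If $A$ contained only one atom, the algebra of interpretable closed terms would be too rigidly constrained by the remaining axioms to separate the sides of those axioms. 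With two distinct atom interpretations available, we obtain enough flexibility in the substitution used to refute $\alpha$.

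In practice, rather than guessing eight ad hoc algebras, I would use a finite-model finder such as \emph{Mace4} (from the same suite~\cite{BirdBrain} already used for Proposition~\ref{prop:extraF}): for each $\alpha$, feed it the axioms of $\SCLi\setminus\{\alpha\}$ together with the negation of a Skolemised instance of $\alpha$ (where free variables are replaced by fresh constants that may be interpreted as $se(a)$ and $se(b)$ for two distinct atoms), and harvest the smallest counter-model produced. These models would then be presented in the appendix as explicit operation tables, together with the witnessing assignment, so that the referee (or reader) can confirm by direct computation that all other axioms hold while $\alpha$ fails.

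The main obstacle is purely combinatorial rather than conceptual: the axioms \eqref{SCL8}--\eqref{SCL10}, which entangle negation with the sequential connectives and with the constants $\tr,\fa$, force the independence models for them to be noticeably larger and more intricate than the tiny algebras that suffice for the ``bookkeeping'' axioms \eqref{SCL4}--\eqref{SCL6}. Consequently the bulk of the work---deferred to an appendix---lies in laying out the eight tables legibly and carrying out, for each, the mechanical but lengthy verification that the seven non-targeted axioms are satisfied under every assignment to the finite carrier.
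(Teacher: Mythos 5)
Your proposal follows essentially the same route as the paper: for each of the eight axioms of $\SCLi$ a finite independence model is obtained with \emph{Mace4}, one case (axiom~\eqref{SCL10}, refuted on a four-element carrier using two distinct atom interpretations) is displayed in the main text, and the remaining seven operation tables are deferred to Appendix~\ref{App:scl}. One small correction: the paper notes that all refutations except the one for~\eqref{SCL10} use at most one atom, so the hypothesis that $A$ contains two atoms is only exploited for~\eqref{SCL10}, not for~\eqref{SCL8} and~\eqref{SCL9} as you suggest.
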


\begin{proof} 
With the tool \emph{Mace4}~\cite{BirdBrain},
one easily obtains for each of the axioms of $\SCLi$ an independence model 
(a model in which that axiom is not valid, while all remaining axioms are). 
We show one of the eight cases here and defer the remaining cases to Appendix~\ref{App:scl}.

In order to prove  independence of axiom~\eqref{SCL10}, that is, 
\((x\leftand y)\leftor(z\leftand\fa)=(x\leftor (z\leftand\fa))\leftand(y\leftor (z\leftand\fa)),\)
assume $A\supseteq\{a,b\}$, and consider the model \M\ 
with domain $\{0,1,2,3\}$ in which the interpretation of the constants is defined by
\[\llbracket \tr\rrbracket^{\M}=1,\quad\llbracket \fa\rrbracket^{\M}=0,\quad\llbracket a\rrbracket^{\M}=2,\quad
\text{$\llbracket c\rrbracket^{\M}=3$ for all $c\in A\setminus\{a\}$},
\]
and in which the connectives are defined by
\[
\begin{array}{r@{\hspace{6pt}}|@{\hspace{6pt}}c}
\neg\\\hline\\[-4mm]
0&1\\
1&0\\
2&2\\
3&3
\end{array}
\qquad\qquad
\begin{array}{r@{\hspace{6pt}}|@{\hspace{6pt}}c@{\hspace{6pt}}c@{\hspace{6pt}}c@{\hspace{6pt}}c@{\hspace{6pt}}c}
\leftand&
0&1&2&3\\\hline\\[-4mm]
0&
0&0&0&0\\
1&
0&1&2&3\\
2&
0&2&0&0\\
3&
3&3&3&3
\end{array}
\qquad\qquad
\begin{array}{r@{\hspace{2mm}}|@{\hspace{2mm}}c@{\hspace{2mm}}c@{\hspace{2mm}}c@{\hspace{2mm}}c@{\hspace{2mm}}c}
\leftor&
0&1&2&3\\\hline\\[-4mm]
0&
0&1&2&3\\
1&
1&1&1&1\\
2&
2&1&1&1\\
3&
3&3&3&3
\end{array}
\] 
Then all axioms from 
$\SCLi\setminus\{\eqref{SCL10}\}$ are valid in \M,
while $\llbracket (a \leftand a) \leftor (b \leftand\fa)\rrbracket^{\M}=3$ and
$\llbracket (a \leftor (b \leftand\fa)) \leftand (a \leftor (b \leftand\fa))\rrbracket^{\M}=1$. 
\end{proof}

\subsection{Normal forms}
\label{subsec:snf}
To aid in the forthcoming proof of Theorem~\ref{thm:sclcplP}
we define normal forms for $\ST$-terms.
When considering trees in $\SE[\ST]$ (the image of $\SE$ for $\ST$-terms),
we note that some trees only have
$\tr$-leaves, some only $\fa$-leaves and some both $\tr$-leaves and
$\fa$-leaves. For any $\ST$-term $P$, \[\SE(P \leftor \tr)\] is a tree
with only $\tr$-leaves, as can easily be seen from the definition of $\SE$:
\[\SE(P \leftor \tr)=se(P)[\fa\mapsto \tr].\]
Similarly, for any $\ST$-term $P$, $\SE(P \leftand
\fa)$ is a tree with only $\fa$-leaves. 
The simplest trees in the
image of $\SE$ that have both types of leaves are $\SE(a)$ and $se(\neg a)$ for $a \in A$. 

We define the grammar for our normal form before we motivate it.

\begin{definition}
\label{def:snf}
A term $P \in \ST$ is said to be in \textbf{$\SCL$ Normal Form $(\SNF)$} if it
is generated by the following grammar:
\begin{align*}
P &::= P^\tr ~\mid~ P^\fa ~\mid~ P^\tr \leftand P^* 
&&(\SNF\text{-terms})\\
P^\tr &::= \tr ~\mid~ (a \leftand P^\tr) \leftor P^\tr &&(\tr\text{-terms})\\
P^\fa &::= \fa ~\mid~ (a \leftor P^\fa) \leftand P^\fa &&(\fa\text{-terms})\\[2mm]
P^* &::= P^c ~\mid~ P^d &&(*\text{-terms})\\
P^c &::= P^\ell ~\mid~ P^* \leftand P^d\\
P^d &::= P^\ell ~\mid~ P^* \leftor P^c
\\[2mm]
P^\ell &::= (a \leftand P^\tr ) \leftor P^\fa
  ~\mid~ (\neg a \leftand P^\tr ) \leftor P^\fa&&(\ell\text{-terms})
\end{align*}
where $a \in A$. We refer to $P^\tr$-forms as $\tr$-terms, to $P^\fa$-forms as
$\fa$-terms,
to $P^\ell$-forms as
$\ell$-terms (the name refers to literal terms), and to $P^*$-forms as $*$-terms.
Finally, a term of the form $P^\tr \leftand P^*$ is referred to as a
$\tr$-$*$-term.
\end{definition}

For each $\tr$-term $P$, $\SE(P)$ is a tree with only $\tr$-leaves.  
$\ST$-terms that have in their $se$-image only $\tr$-leaves will be rewritten to  
$\tr$-terms. Similarly, terms that have in their $se$-image only $\fa$-leaves 
will be rewritten to $\fa$-terms. 
Note that $\leftor$ is right-associative in $\tr$-terms, e.g.,
\[(a \leftand \tr) \leftor ((b \leftand \tr) \leftor \tr)
\quad\text{is a \tr-term, but 
$((a \leftand \tr) \leftor (b \leftand \tr)) \leftor \tr$ is not,}
\]
and that $\leftand$ is right-associative in $\fa$-terms.
Furthermore, the $se$-images of $\tr$-terms and $\fa$-terms follow a 
simple pattern: observe that for $P,Q\in P^\tr$,
$se((a\leftand P)\leftor Q)$
is of the form
\[
\begin{tikzpicture}[%
      level distance=7.5mm,
      level 1/.style={sibling distance=15mm},
      level 2/.style={sibling distance=7.5mm},
      baseline=(current bounding box.center)]
      \node (a) {$a$}
        child {node (b1) {$se(P)$}
        }
        child {node (b2) {$se(Q)$}
        };
      \end{tikzpicture}
\]    

Before we discuss the $\tr$-$*$-terms | the third type of our $\SNF$ normal 
forms | we consider the $*$-terms, which are 
$\leftand$-$\leftor$-combinations of $\ell$-terms with the restriction 
that $\leftand$ and $\leftor$
associate to 
the left. This restriction is defined with help of the syntactical categories 
$P^c$ and $P^d$. 
From now on we shall use $P^\tr$, $P^*$, etc.~both to denote grammatical 
categories and as
variables for terms in those categories. As an example, 
\[(P^\ell\leftand Q^\ell)\leftand R^\ell\]
is  a $*$-term (it is in $P^c$-form), while
$P^\ell\leftand(Q^\ell\leftand R^\ell)$ is not a $*$-term.
We consider $\ell$-terms to be ``basic''
in $*$-terms in the sense that they are the smallest grammatical unit that 
generate
$se$-images in which both \tr\ and \fa\ occur. More precisely, the $se$-image
of an $\ell$-term
has precisely one node (its root) that has paths to both \tr\ and \fa. 

$\ST$-terms that have both \tr\ and \fa\ in their $se$-image
will be rewritten to
$\tr$-$*$-terms. A $\tr$-$*$-term is the conjunction of a
$\tr$-term 
and a $*$-term. The first conjunct is necessary to encode a term such as 
\[[a\leftor(b\leftor\tr)]\leftand c\]
where the evaluation values of $a$ and $b$ are not relevant, but where 
their side effects may influence the evaluation value of $c$, as can be clearly
seen from its $se$-image that has three different nodes 
that model the evaluation of $c$:
\begin{center}
\begin{tikzpicture}[%
level distance=7.5mm,
level 1/.style={sibling distance=30mm},
level 2/.style={sibling distance=15mm},
level 3/.style={sibling distance=7.5mm}
]
\node (a) {$a$}
  child {node (b1) {$c$}
    child {node (c1) {$\tr$}
    }
    child {node (c2) {$\fa$}
    }
  }
  child {node (b2) {$b$}
    child {node (c3) {$c$}
      child {node (d5) {$\tr$}} 
      child {node (d6) {$\fa$}}
    }
    child {node (c4) {$c$}
      child {node (d7) {$\tr$}} 
      child {node (d8) {$\fa$}}
    }
  };
\end{tikzpicture}
\end{center}
From this example it can be easily seen that the above 
\tr-$*$-term can be also represented
as the disjunction of an $\fa$-term and a $*$-term, namely of the \fa-term that
encodes $a\leftand(b\leftand\fa)$ and the $*$-term that encodes $c$, thus as
\[[(a\leftor\fa)\leftand((b\leftor\fa)\leftand\fa)]\leftor [(c\leftand\tr)\leftor\fa].\]
However, we chose to use a \tr-term and a conjunction for this purpose.

The remainder of this section is
concerned with defining and proving correct a normalisation function 
\[\nfs:\ST \to \SNF. 
\]
We will define $\nfs$ recursively using the functions
\begin{equation*}
\nfs^n: \SNF \to \SNF \quad\text{and}\quad
\nfs^c: \SNF \times \SNF \to \SNF.
\end{equation*}
The first of these will be used to rewrite negated $\SNF$-terms to $\SNF$-terms
and the second to rewrite the conjunction of two $\SNF$-terms to an
$\SNF$-term. By \eqref{SCL2} we have no need for a dedicated function that
rewrites the disjunction of two $\SNF$-terms to an $\SNF$-term.
The normalisation function $\nfs: \ST \to \SNF$ is defined
recursively, using $\nfs^n$ and $\nfs^c$, as follows.
\begin{align}
\nfs(a) &= \tr \leftand ((a \leftand \tr) \leftor \fa)
  \label{eq:nfs1} \\
\nfs(\tr) &= \tr
  \label{eq:nfs2} \\
\nfs(\fa) &= \fa
  \label{eq:nfs3} \\
\nfs(\neg P) &= \nfs^n(\nfs(P))
  \label{eq:nfs4} \\
\nfs(P \leftand Q) &= \nfs^c(\nfs(P), \nfs(Q))
  \label{eq:nfs5} \\
\nfs(P \leftor Q) &= \nfs^n(\nfs^c(\nfs^n(\nfs(P)), \nfs^n(\nfs(Q)))).
  \label{eq:nfs6}
\end{align}
Observe that $\nfs(a)$ is indeed the unique \tr-$*$-term 
with the property that $se(a)=se(\nfs(a))$, and also that
$se(\tr)=se(\nfs(\tr))$ and $se(\fa)=se(\nfs(\fa))$
(cf.~Theorem~\ref{thm:nfs}).
 
We proceed by defining $\nfs^n$. Analysing the semantics of $\tr$-terms and
$\fa$-terms together with the definition of $\SE$ on negations, it becomes
clear that $\nfs^n$ must turn $\tr$-terms into $\fa$-terms and vice versa.
We also remark that $\nfs^n$ must preserve the left-associativity of the
$*$-terms in $\tr$-$*$-terms, modulo the associativity within $\ell$-terms.
We define $\nfs^n: \SNF \to \SNF$ as follows, using the auxiliary function
$\nfs^n_1: P^* \to P^*$ to push in the negation symbols when
negating a $\tr$-$*$-term. We note that there is no ambiguity between the
different grammatical categories present in an $\SNF$-term, i.e., any
$\SNF$-term is in exactly one of the grammatical categories identified in
Definition~\ref{def:snf}, and that all right-hand sides are of the intended 
grammatical category.

\begin{align}
\nfs^n(\tr) &= \fa
  \label{eq:nfsn1} \\
\nfs^n((a \leftand P^\tr) \leftor Q^\tr) &= (a \leftor
  \nfs^n(Q^\tr)) \leftand \nfs^n(P^\tr)
  \label{eq:nfsn2} \\[2mm]
\nfs^n(\fa) &= \tr
  \label{eq:nfsn3} \\
\nfs^n((a \leftor P^\fa) \leftand Q^\fa) &= (a \leftand
  \nfs^n(Q^\fa)) \leftor \nfs^n(P^\fa)
  \label{eq:nfsn4}\\[2mm]
\nfs^n(P^\tr \leftand Q^*) &= P^\tr \leftand \nfs^n_1(Q^*)
  \label{eq:nfsn5} \\[2mm]
\nfs^n_1((a \leftand P^\tr) \leftor Q^\fa) &= (\neg a \leftand
  \nfs^n(Q^\fa)) \leftor \nfs^n(P^\tr)
  \label{eq:nfsn6} \\
\nfs^n_1((\neg a \leftand P^\tr) \leftor Q^\fa) &= (a \leftand
  \nfs^n(Q^\fa)) \leftor \nfs^n(P^\tr)
  \label{eq:nfsn7} \\
\nfs^n_1(P^* \leftand Q^d) &= \nfs^n_1(P^*) \leftor \nfs^n_1(Q^d)
  \label{eq:nfsn8} \\
\nfs^n_1(P^* \leftor Q^c) &= \nfs^n_1(P^*) \leftand \nfs^n_1(Q^c).
  \label{eq:nfsn9}
\end{align}

Now we turn to defining $\nfs^c$. We distinguish the following cases:
\begin{enumerate}\setlength\itemsep{-1mm}
\item[$(1)$] $\nfs^c(P^\tr, Q)$
\item[$(2)$] $\nfs^c(P^\fa, Q)$
\item[$(3)$] $\nfs^c(P^\tr\leftand P^*, Q)$
\end{enumerate}
In case $(1)$, it is apparent that the conjunction of a $\tr$-term with
another term always yields a term of the same grammatical category as the
second conjunct. We define $\nfs^c$ recursively by a 
case distinction on its first argument, and in the second case by a further 
case distinction on its second argument.  
\begin{align}
\nfs^c(\tr, P) &= P
  \label{eq:nfsc1} \\
\nfs^c((a \leftand P^\tr) \leftor Q^\tr, R^\tr) &= (a \leftand
  \nfs^c(P^\tr, R^\tr)) \leftor \nfs^c(Q^\tr, R^\tr)
  \label{eq:nfsc2} \\
\nfs^c((a \leftand P^\tr) \leftor Q^\tr, R^\fa) &= (a \leftor
  \nfs^c(Q^\tr, R^\fa)) \leftand \nfs^c(P^\tr, R^\fa)
  \label{eq:nfsc3} \\
\nfs^c((a \leftand P^\tr) \leftor Q^\tr, R^\tr \leftand S^*) &=
  \nfs^c((a \leftand P^\tr) \leftor Q^\tr, R^\tr) \leftand S^*.
  \label{eq:nfsc4}
\end{align}

For case $(2)$ (the first argument is an $\fa$-term) we make use
of \eqref{SCL6}. This immediately implies that the conjunction of an
$\fa$-term with another term is itself an $\fa$-term.
\begin{align}
\nfs^c(P^\fa, Q) &= P^\fa
  \label{eq:nfsc5}
\end{align}

For the remaining case $(3)$ (the first argument is an \tr-$*$-term)
we distinguish three sub-cases:
\begin{enumerate}\setlength\itemsep{-1mm}
\item[$(3.1)$] The second argument is a $\tr$-term,
\item[$(3.2)$] The second argument is an $\fa$-term, and
\item[$(3.3$)] The second argument is a \tr-$*$-term.
\end{enumerate}
For case $(3.1)$ we will use an auxiliary function
$\nfs^c_1: P^* \times P^\tr \to P^*$ to turn conjunctions of a $*$-term with
a $\tr$-term into $*$-terms. We define $\nfs^c_1$ recursively by a 
case distinction on its first argument. 
Together with \eqref{SCL7} (associativity) this allows us to
define $\nfs^c$ for this case. Observe that the right-hand
sides of the clauses defining $\nfs^c_1$ are indeed $*$-terms. 
\begin{align}
\nfs^c(P^\tr \leftand Q^*, R^\tr) &= P^\tr \leftand
  \nfs^c_1(Q^*, R^\tr) 
  \label{eq:nfsc6} \\[2mm]
\nfs^c_1((a \leftand P^\tr) \leftor Q^\fa, R^\tr) &= (a \leftand
  \nfs^c(P^\tr, R^\tr)) \leftor Q^\fa
  \label{eq:nfsc7} \\
\nfs^c_1((\neg a \leftand P^\tr) \leftor Q^\fa, R^\tr) &= (\neg a
  \leftand \nfs^c(P^\tr, R^\tr)) \leftor Q^\fa
  \label{eq:nfsc8} \\
\nfs^c_1(P^* \leftand Q^d, R^\tr) &= P^* \leftand \nfs^c_1(Q^d, R^\tr)
  \label{eq:nfsc9} \\
\nfs^c_1(P^* \leftor Q^c, R^\tr) &= \nfs^c_1(P^*, R^\tr) \leftor
  \nfs^c_1(Q^c, R^\tr).
  \label{eq:nfsc10}
\end{align}
For case $(3.2)$ we need to define 
$\nfs^c(P^\tr \leftand Q^*, R^\fa)$, which will 
be an
$\fa$-term. Using \eqref{SCL7} we reduce this problem to converting
$Q^*$ to an $\fa$-term, for which we use the auxiliary function
$\nfs^c_2: P^* \times P^\fa \to P^\fa$ that we define recursively by a 
case distinction on its first argument. Observe that the right-hand
sides of the clauses defining $\nfs^c_2$ are all $\fa$-terms. 
\begin{align}
\nfs^c(P^\tr \leftand Q^*, R^\fa) &= \nfs^c(P^\tr, \nfs^c_2(Q^*,
  R^\fa))
  \label{eq:nfsc11} \\[2mm]
\nfs^c_2((a \leftand P^\tr) \leftor Q^\fa, R^\fa) &= (a \leftor
  Q^\fa) \leftand \nfs^c(P^\tr, R^\fa)
  \label{eq:nfsc12} \\
\nfs^c_2((\neg a \leftand P^\tr) \leftor Q^\fa, R^\fa) &= (a
  \leftor \nfs^c(P^\tr, R^\fa)) \leftand Q^\fa
  \label{eq:nfsc13} \\
\nfs^c_2(P^* \leftand Q^d, R^\fa) &= \nfs^c_2(P^*, \nfs^c_2(Q^d,
  R^\fa))
  \label{eq:nfsc14} \\
\nfs^c_2(P^* \leftor Q^c, R^\fa) &= \nfs^c_2(\nfs^n(\nfs^c_1(P^*,
  \nfs^n(R^\fa))), \nfs^c_2(Q^c, R^\fa)).
  \label{eq:nfsc15}
\end{align}
For case $(3.3)$ we need to define $\nfs^c(P^\tr \leftand Q^*, R^\tr \leftand S^*)$.
We use the auxiliary function $\nfs^c_3: P^*
\times (P^\tr \leftand P^*) \to P^*$ to ensure that the result is a
$\tr$-$*$-term, and we define $\nfs^c_3$ by a case distinction on its second argument. 
Observe that the right-hand
sides of the clauses defining $\nfs^c_3$ are all $*$-terms. 
\begin{align}
\nfs^c(P^\tr \leftand Q^*, R^\tr \leftand S^*) &= P^\tr \leftand 
  \nfs^c_3(Q^*, R^\tr \leftand S^*)
  \label{eq:nfsc16} \\[2mm]
\nfs^c_3(P^*, Q^\tr \leftand R^\ell) &= \nfs^c_1(P^*, Q^\tr) \leftand
  R^\ell
  \label{eq:nfsc17} \\
\nfs^c_3(P^*, Q^\tr \leftand (R^* \leftand S^d)) &= \nfs^c_3(P^*, Q^\tr
  \leftand R^*) \leftand S^d
  \label{eq:nfsc18} \\
\nfs^c_3(P^*, Q^\tr \leftand (R^* \leftor S^c)) &= \nfs^c_1(P^*, Q^\tr)
  \leftand (R^* \leftor S^c).
  \label{eq:nfsc19} 
\end{align}

\begin{theorem}[Normal forms]
\label{thm:nfs}
For any $P \in \ST$, $\nfs(P)$ terminates, $\nfs(P) \in \SNF$ and 
\[\EqFSCL\vdash \nfs(P) = P.\]
\end{theorem}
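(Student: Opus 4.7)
The plan is a simultaneous induction establishing three properties for each of $\nfs$, $\nfs^n$, $\nfs^n_1$, $\nfs^c$, $\nfs^c_1$, $\nfs^c_2$, $\nfs^c_3$: (i) termination on inputs of the expected grammatical category, (ii) membership of the output in the category claimed next to the defining clauses, and (iii) $\EqFSCL$-provable equality between the output and the corresponding input expression. I would handle (i) and (ii) together, since well-foundedness of the recursion needs (ii) to know which clause of the callee applies, and then prove (iii) by a parallel induction.

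For termination and typing, structural induction on $P\in\ST$ handles $\nfs$ itself once the auxiliaries are known to terminate on SNF inputs. For $\nfs^n$, $\nfs^n_1$, $\nfs^c_1$, and $\nfs^c_3$, induction on the size of the principal argument suffices, since every right-hand side calls these functions on strictly smaller principal arguments. The delicate clause is~\eqref{eq:nfsc15}, where $\nfs^c_2(P^*\leftor Q^c, R^\fa)$ recurses through $\nfs^c_2(\nfs^n(\nfs^c_1(P^*, \nfs^n(R^\fa))), \nfs^c_2(Q^c, R^\fa))$. Here I would employ a lexicographic measure on the pair (size of the first $*$-argument, size of the second $\fa$-argument), noting that $\nfs^n$ and $\nfs^c_1$ produce a $*$-term whose size is bounded by their first argument while leaving $R^\fa$ untouched, so the outer call to $\nfs^c_2$ strictly decreases one lexicographic component. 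Inspection of each clause confirms the right-hand side lies in the claimed category of Definition~\ref{def:snf}.

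For derivability I would prove the following auxiliary identities by induction, in parallel with (i) and (ii):
\begin{align*}
\EqFSCL &\vdash \nfs^n(P) = \neg P,\\
\EqFSCL &\vdash P^\tr \leftand \nfs^n_1(Q^*) = \neg(P^\tr \leftand Q^*),\\
\EqFSCL &\vdash \nfs^c(P,Q) = P \leftand Q,\\
\EqFSCL &\vdash \nfs^c_1(P^*, R^\tr) = P^* \leftand R^\tr,\\
\EqFSCL &\vdash \nfs^c_2(P^*, R^\fa) = P^* \leftand R^\fa,\\
\EqFSCL &\vdash \nfs^c_3(P^*, Q^\tr \leftand R^*) = P^* \leftand (Q^\tr \leftand R^*).
\end{align*}
Most clauses reduce to applications of associativity~\eqref{SCL7}, the identity axioms~\eqref{SCL4},~\eqref{SCL5},~\eqref{SCL6}, double negation and sequential De~Morgan via~\eqref{SCL2} and~\eqref{SCL3}, and the side-effect axiom~\eqref{SCL8} for the negation clauses~\eqref{eq:nfsn6} and~\eqref{eq:nfsn7}. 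Clauses~\eqref{eq:nfsc3}, \eqref{eq:nfsc12}, and~\eqref{eq:nfsc13} additionally invoke the restricted right-distributivity axiom~\eqref{SCL10}.

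The main obstacle is clause~\eqref{eq:nfsc15}, whose correctness rests on precisely the identity
\[(x \leftor y) \leftand (z \leftand \fa) = (\neg x \leftor (z \leftand \fa)) \leftand (y \leftand (z \leftand \fa)),\]
which is Lemma~\ref{lem:seqs}. Instantiating $x:=P^*$, $y:=Q^c$, and $z\leftand\fa:=R^\fa$, then using the induction hypotheses for $\nfs^n$, $\nfs^c_1$, and $\nfs^c_2$ to rewrite $\neg P^* \leftor R^\fa$ as $\nfs^n(\nfs^c_1(P^*,\nfs^n(R^\fa)))$ and $Q^c \leftand R^\fa$ as $\nfs^c_2(Q^c,R^\fa)$, closes the case. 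Without Lemma~\ref{lem:seqs} this clause is out of reach of the more elementary axioms, and it is the step where the normalisation procedure genuinely exercises axioms beyond associativity, identity, De~Morgan, and the simpler distributivity~\eqref{SCL10}.
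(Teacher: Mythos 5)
Your plan coincides in all essentials with the paper's own proof: the appendix establishes exactly your list of auxiliary identities (Lemmas~\ref{lem:nfsn}--\ref{lem:nfsc5}, preceded by the absorption facts $P^\fa=P^\fa\leftand x$ and $P^\tr=P^\tr\leftor x$ of Lemma~\ref{lem:ptpfs} and the six derived equations of Lemma~\ref{lem:seqs2}), proves typing and derivability by induction, and invokes Lemma~\ref{lem:seqs} precisely where you do, namely for clause~\eqref{eq:nfsc15}. You have correctly isolated the one genuinely delicate clause and the lemma that discharges it.

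There is, however, a concrete flaw in your termination argument for~\eqref{eq:nfsc15}. The claim that $\nfs^n$ and $\nfs^c_1$ ``produce a $*$-term whose size is bounded by their first argument'' is false: by \eqref{eq:nfsc7} and \eqref{eq:nfsc1}, $\nfs^c_1((a\leftand\tr)\leftor\fa,\,R^\tr)=(a\leftand R^\tr)\leftor\fa$, which is larger than the first argument whenever $R^\tr$ is. In general $\nfs^c_1(P^*,R^\tr)$ grafts copies of $R^\tr$ into the rightmost $\tr$-positions of $P^*$, so syntactic size can grow and your lexicographic measure need not decrease on the outer recursive call. The correct well-founded measure is the \emph{number of $\ell$-terms} in the first argument: clauses \eqref{eq:nfsc7}--\eqref{eq:nfsc10} and \eqref{eq:nfsn6}--\eqref{eq:nfsn9} preserve that count, so $\nfs^n(\nfs^c_1(P^*,\nfs^n(R^\fa)))$ has the same number of $\ell$-terms as $P^*$, strictly fewer than $P^*\leftor Q^c$. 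This is exactly the induction the paper runs (and it makes the second lexicographic component superfluous, since every recursive call of $\nfs^c_2$ already decreases the $\ell$-term count of its first argument).

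A smaller point: your axiom inventory understates the equational bookkeeping. Besides \eqref{SCL8} and \eqref{SCL10}, several ``routine'' clauses need \eqref{SCL9}, the dual of \eqref{SCL10}, and Lemma~\ref{lem:seqs} itself --- e.g.\ conjoining two $\tr$-terms in \eqref{eq:nfsc2} uses the dual of \eqref{SCL10} via item~\ref{eq:b5} of Lemma~\ref{lem:seqs2}, and \eqref{eq:nfsc3} uses item~\ref{eq:b6}, whose derivation involves \eqref{SCL9}, \eqref{SCL8} and Lemma~\ref{lem:seqs}. Moreover, the absorption laws of Lemma~\ref{lem:ptpfs} are needed throughout to rewrite $R^\fa$ as $R^\fa\leftand\fa$ before Lemma~\ref{lem:seqs} can be applied: one cannot literally instantiate the compound $z\leftand\fa$ by $R^\fa$ as you write. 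None of this changes the architecture, but it is where most of the actual work in the appendix lives.
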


In Appendix~\ref{app:nf} we first prove a number of lemmas showing that the
definitions $\nfs^n$ and $\nfs^c$ are correct and use those to prove the above
theorem. 
We have chosen to define normalisation by a function rather than by a 
rewriting system because this is more simple
and, if desirable, more appropriate for tool implementations.

\section{A completeness proof}
\label{sec:3} 
We analyse the $se$-images 
of $\ST$-terms and provide some results on uniqueness of such trees (Section~\ref{subsec:tree}). 
Then
we define an inverse function of $se$ (on the appropriate domain) with which we 
can complete the proof of the announced completeness theorem (Section~\ref{subsec:cpl}).

\subsection{Tree structure and decompositions}
\label{subsec:tree}
In Section~\ref{subsec:cpl} we will prove that on $\SNF$ we can invert the function $\SE$. 
To do this we need to
prove several structural properties of the trees in $se[\SNF]$,
the image of $\SE$. In the
definition of $\SE$ we can see how $\SE(P \leftand Q)$ is assembled from
$\SE(P)$ and $\SE(Q)$ and similarly for $\SE(P \leftor Q)$. To decompose 
trees in $se[\SNF]$ we will introduce some notation. 
The trees in the image of $\SE$ are all
finite binary trees over $A$ with leaves in $\{\tr, \fa\}$, i.e.,
$\SE[\ST] \subseteq \T$. We will now also consider the set $\Tone$ of binary
trees over $A$ with leaves in $\{\tr, \fa, \triangle\}$. 
The triangle will be used as a placeholder when composing or
decomposing trees. Replacement of the leaves of trees in $\Tone$ by trees in
$\T$ or $\Tone$ is defined analogous to replacement for trees in $\T$, adopting
the same notational conventions.
As a first example, we have by definition of $\SE$ that 
$\SE(P \leftand Q)$ can be
decomposed as
\begin{equation*}
\SE(P)\sub{\tr}{\triangle}\sub{\triangle}{\SE(Q)},
\end{equation*}
where $\SE(P)\sub{\tr}{\triangle} \in \Tone$ and $\SE(Q) \in \T$. We note that
this only works because the trees in the image of $\SE$, or in $\T$ in general,
do not contain any triangles. 
Of course, each tree $X\in\T$ has the \emph{trivial 
decomposition} that involves a replacement of the form $\sub{\triangle}Y$, namely 
\[\triangle\sub{\triangle}X.~\footnote{Also, for each $X\in\T$ it follows
 that $X=X\sub{\triangle}Y$ for any $Y\in\T$, but
 we do not consider $X\sub{\triangle}Y$ to be a `decomposition' of $X$ in
 this case.}
\]

\bigskip

We start with some simple properties
of the $\SE$-images of \tr-terms, \fa-terms, and
$*$-terms.

\newpage
\begin{lemma}[Leaf occurrences]
\label{lem:TF}~\textup{
\begin{enumerate}\setlength\itemsep{-1mm}
\item 
\emph{For any \tr-term $P$, $\SE(P)$ contains  $\tr$, but not $\fa$},
\item
\emph{For any \fa-term $P$, $\SE(P)$ contains  $\fa$, but not $\tr$},
\item
\emph{For any
$*$-term $P$, $\SE(P)$ contains both $\tr$ and $\fa$.}
\end{enumerate}}
\end{lemma}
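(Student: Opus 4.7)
The plan is to prove the three statements simultaneously by structural induction on the grammar of Definition~\ref{def:snf}. The key observation is that each production is so constrained in shape that the definition of $\SE$ together with the leaf-replacement operator immediately exposes the structure of the resulting tree, so that all three claims reduce to transparent computations combined with the inductive hypothesis.

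For (1), the base case $P=\tr$ is immediate from $\SE(\tr)=\tr$. For the inductive case $(a\leftand Q)\leftor R$ with $Q,R$ being $\tr$-terms, I would unfold $\SE$ twice to obtain
\[\SE((a\leftand Q)\leftor R) = \SE(Q)\unlhd a\unrhd \SE(R),\]
at which point the IH gives that both subtrees contain $\tr$ and avoid $\fa$, so the composite tree does too. Part (2) follows by the exactly dual computation: $(a\leftor Q)\leftand R$ unfolds to $\SE(R)\unlhd a\unrhd \SE(Q)$ with $Q,R$ being $\fa$-terms, and the IH finishes it.

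For (3), I would run a secondary structural induction on the mutually recursive $*$-term grammar, using (1) and (2) as lemmas for the $\ell$-term base cases. The case $(a\leftand P^\tr)\leftor P^\fa$ unfolds to $\SE(P^\tr)\unlhd a\unrhd \SE(P^\fa)$, whose left branch supplies a $\tr$-leaf by (1) and whose right branch supplies an $\fa$-leaf by (2); the case $(\neg a\leftand P^\tr)\leftor P^\fa$ evaluates to $\SE(P^\fa)\unlhd a\unrhd \SE(P^\tr)$, with the children of the root swapped by the effect of $\neg a$, and the same argument applies. For the composite forms, $\SE(P^*\leftand Q^d) = \SE(P^*)[\tr\mapsto \SE(Q^d)]$: the untouched $\fa$-leaves from $\SE(P^*)$ persist in the result (providing $\fa$), while the replacement introduces copies of $\SE(Q^d)$ which by IH contains $\tr$. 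The case $P^*\leftor Q^c$ is symmetric, replacing $\fa$-leaves with $\SE(Q^c)$ while the untouched $\tr$-leaves of $\SE(P^*)$ keep a $\tr$ in the tree.

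This is essentially a bookkeeping exercise rather than a genuine obstacle; the only mild subtlety is that $*$-terms are generated by mutual recursion between $P^c$ and $P^d$, so the secondary induction must be phrased uniformly over $*$-terms (with the IH available for all strictly smaller $*$-subterms) rather than split by sub-category, and one must keep in mind that $Q^d$ and $Q^c$ in the composite cases are themselves $*$-terms, so that the IH for (3) applies to them.
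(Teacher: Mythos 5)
Your proposal is correct and follows essentially the same route as the paper: structural induction in which parts (1) and (2) are immediate, the $\ell$-term base case of (3) is settled by observing that one branch of the root carries only $\tr$-leaves and the other only $\fa$-leaves, and the composite cases follow from the definition of $\SE$ on $\leftand$ and $\leftor$ together with the inductive hypothesis. Your version merely spells out the leaf-replacement computations in more detail than the paper does.
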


\begin{proof}
By induction on the structure of $P$. A proof of the first two statements is trivial.
For the third statement, if $P$ is an $\ell$-term, we find that by definition
of the grammar of $P$ that
one branch from the root of $\SE(P)$ will only contain $\tr$ and not $\fa$,
and for the other branch this is the other way around. 

For the induction we have to consider both 
$\SE(P_1 \leftand P_2)$ and $\SE(P_1\leftor P_2)$.
Consider $\SE(P_1 \leftand P_2)$, which equals by definition $se(P_1)\sub{\tr}{se(P_2)}$.
By induction, both $se(P_1)$ and $se(P_2)$ contain both \tr\ and \fa,
so $\SE(P_1 \leftand P_2)$ contains both \tr\ and \fa.
The case $\SE(P_1\leftor P_2)$ can be dealt with in a similar way.
\end{proof}

Decompositions of the $se$-image
of $*$-terms turn out to be crucial in our approach. As an example, the $se$-image
of the $*$-term 
\[(P^\ell\leftor Q^\ell)\leftand R^\ell\quad\text{with}\quad 
P^\ell=((a\leftand\tr)\leftor\fa), ~Q^\ell=((b\leftand\tr)\leftor\fa), 
~R^\ell=((c\leftand\tr)\leftor\fa)\]
can be decomposed as $X_1\sub{\triangle}Y$ with $X_1\in\Tone$ as follows:
\[
\begin{tikzpicture}[%
level distance=7.5mm,
level 1/.style={sibling distance=30mm},
level 2/.style={sibling distance=15mm},
level 3/.style={sibling distance=7.5mm}
]
\node (a) {$a$}
  child {node (b1) {$\triangle$}
  }
  child {node (b2) {$b$}
    child {node (c3) {$c$}
      child {node (d5) {$\tr$}} 
      child {node (d6) {$\fa$}}
    }
    child {node (c4) {$\fa$}
    }
  };
\end{tikzpicture}
\]
and $Y=se(R^\ell)$, thus $Y=\tr\unlhd c\unrhd\fa$, and
two other decompositions are $X_2\sub{\triangle}Y=
X_3\sub{\triangle}Y$ with $X_2,X_3\in\Tone$ as follows:
\[
\begin{tikzpicture}[%
level distance=7.5mm,
level 1/.style={sibling distance=30mm},
level 2/.style={sibling distance=15mm}
]
\node (a) {$a$}
  child {node (b1) {$c$}
    child {node (c1) {$\tr$}
    }
    child {node (c2) {$\fa$}
    }
  }
  child {node (b2) {$b$}
    child {node (c3) {$\triangle$}
    }
    child {node (c4) {$\fa$}
    }
  };
\end{tikzpicture}
\qquad\raisebox{16.6mm}{\text{ and }}\qquad
\begin{tikzpicture}[%
level distance=7.5mm,
level 1/.style={sibling distance=30mm},
level 2/.style={sibling distance=15mm}
]
\node (a) {$a$}
  child {node (b1) {$\triangle$}
  }
  child {node (b2) {$b$}
    child {node (c3) {$\triangle$}
    }
    child {node (c4) {$\fa$}
    }
  };
\end{tikzpicture}
\]
Observe that the first two decompositions have the property that $Y$ is a 
subtree of $X_1$ and $X_2$, respectively.
Furthermore, observe that $X_3=se(P^\ell\leftor Q^\ell)\sub{\tr}{\triangle}$, and 
hence that this decomposition agrees with the definition of the function $se$.
When we want to express that a certain decomposition $X\sub{\triangle}Y$ has the property
that $Y$ is not a subtree of $X$, we say that $X\sub{\triangle}Y$ 
is a \emph{strict decomposition}. Finally observe that each of these
decompositions satisfies the property that $X_i$ contains \tr\ or \fa, which
is a general property of decompositions of $*$-terms and a consequence of 
Lemma~\ref{lem:snondectf} (see below). The following lemma
provides the $\SE$-image of the rightmost $\ell$-term in a $*$-term as a witness.

\begin{lemma}[Witness decomposition]
\label{lem:sperttf}
For all $*$-terms $P$, $\SE(P)$ can be decomposed as $X\sub{\triangle}{Y}$ with $X
\in \Tone$ and $Y \in \T$ such that $X$ contains $\triangle$ and $Y = \SE(R)$ for
the rightmost $\ell$-term $R$ in $P$. 
Note that $X$ may be $\triangle$. 
\\[1mm]
We will refer to $Y$ as \textbf{the witness} for this lemma for $P$.
\end{lemma}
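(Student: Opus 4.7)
The plan is to prove the statement by structural induction on the $*$-term $P$, taking advantage of the fact that for any compound $*$-term the rightmost $\ell$-term lives in the right subterm. Extending the leaf-replacement identity~\eqref{id:rp} to the marker $\triangle$ (which is possible precisely because $\SE[\ST]\subseteq\T$ contains no triangles) will be the main algebraic tool.

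For the base case, $P$ is itself an $\ell$-term $R$, so set $X=\triangle$ and $Y=\SE(R)$; then $X[\triangle\mapsto Y]=Y=\SE(P)$ and $X$ trivially contains $\triangle$. Exactly this situation is what the note ``$X$ may be $\triangle$'' is accommodating.

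For the inductive step there are two cases, which are symmetric under the duality of $\leftand$/$\leftor$ and $\tr$/$\fa$, so I would treat the conjunction case $P=P_1^*\leftand P_2^d$ in detail and indicate the other. Since $P_2^d$ is a $*$-term and contains the rightmost $\ell$-term $R$ of $P$, the induction hypothesis supplies $X'\in\Tone$ and $Y\in\T$ with $X'$ containing $\triangle$, $Y=\SE(R)$, and $\SE(P_2^d)=X'[\triangle\mapsto Y]$. Unfolding the definition of $\SE$ and then applying the $\triangle$-variant of~\eqref{id:rp} (using that $\fa[\triangle\mapsto Y]=\fa$ because $\fa\neq\triangle$) gives
\begin{align*}
\SE(P) &= \SE(P_1^*)[\tr\mapsto \SE(P_2^d)] \\
       &= \SE(P_1^*)[\tr\mapsto X'[\triangle\mapsto Y]] \\
       &= \bigl(\SE(P_1^*)[\tr\mapsto X']\bigr)[\triangle\mapsto Y].
\end{align*}
So taking $X=\SE(P_1^*)[\tr\mapsto X']$ yields the required decomposition, and Lemma~\ref{lem:TF}(3) guarantees that $\SE(P_1^*)$ has at least one $\tr$-leaf, which is replaced by a copy of $X'$, so $X$ indeed contains $\triangle$. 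The case $P=P_1^*\leftor P_2^c$ is completely analogous, using $[\fa\mapsto\cdot]$ instead of $[\tr\mapsto\cdot]$ and Lemma~\ref{lem:TF}(3) for the presence of an $\fa$-leaf in $\SE(P_1^*)$.

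I expect the only subtle point to be the justification of the third equality above, which requires extending identity~\eqref{id:rp} to allow the marker $\triangle$ as a ``third leaf symbol'' and observing that, because $\SE(P_1^*)\in\T$ has no $\triangle$-leaves and $\fa[\triangle\mapsto Y]=\fa$, the two successive replacements commute in the stated way. Once this is noted, the induction is routine and the lemma follows.
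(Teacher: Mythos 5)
Your proposal is correct and follows essentially the same route as the paper's proof: induction over the build-up of the $*$-term from its $\ell$-terms, the trivial decomposition $\triangle\sub{\triangle}{\SE(R)}$ in the base case, and in the inductive step the decomposition $X=\SE(P_1)\sub{\tr}{X'}$ justified by the absence of triangles in $\SE(P_1)$ together with Lemma~\ref{lem:TF} to ensure $X$ contains $\triangle$. Your extra care in justifying the substitution-reassociation step via a $\triangle$-variant of identity~\eqref{id:rp} is a slightly more explicit rendering of what the paper dispatches with the remark that $\SE(P)$ contains no triangles.
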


\begin{proof}
By induction on the number of $\ell$-terms in $P$. 
In the base case $P$ is an $\ell$-term and $\SE(P) =
\triangle\sub{\triangle}{\SE(P)}$ is the desired decomposition. 
For the induction we have to consider both $\SE(P \leftand
Q)$ and $\SE(P \leftor Q)$.

We start with $\SE(P \leftand Q)$ and let $X\sub{\triangle}{Y}$ be the
decomposition for $\SE(Q)$ which we have by induction hypothesis, so
$Y$ is the witness for this lemma for $Q$ and the $se$-image of its
rightmost $\ell$-term, say $R$. Since by
definition of $\SE$ on ${\leftand}$ we have
$\SE(P \leftand Q) = \SE(P)\sub{\tr}{\SE(Q)}$
we also have
\begin{equation*}
\SE(P \leftand Q) = \SE(P)\sub{\tr}{X\sub{\triangle}{Y}} =
\SE(P)\sub{\tr}{X}\sub{\triangle}{Y}.
\end{equation*}
The last equality is due to the fact that $\SE(P)$ does not contain any triangles.
This gives our desired decomposition: $\SE(P)\sub{\tr}{X}$ contains $\triangle$ because
$\SE(P)$ contains \tr\ (Lemma~\ref{lem:TF}) and $X$ contains $\triangle$,
and $Y$ is the $se$-image of the 
rightmost $\ell$-term $R$ in $P\leftand Q$.

The case for $\SE(P \leftor Q)$ is analogous.
\end{proof}

The following lemma illustrates another structural property of trees in the
image of $*$-terms under $\SE$, namely that each non-trivial decomposition
$X\sub{\triangle}{Y}$ of a $*$-term has the property that at least one of \tr\ and \fa\
occurs in $X$.

\begin{lemma}[Non-decomposition]
\label{lem:snondectf}
There is no $*$-term $P$ such that $\SE(P)$ can be decomposed as
$X\sub{\triangle}{Y}$ with $X \in \Tone$ and $Y \in \T$, where $X \neq \triangle$ and $X$
contains $\triangle$, but not $\tr$ or $\fa$.
\end{lemma}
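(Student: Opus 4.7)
The plan is to prove the lemma by structural induction on the $*$-term $P$, supported by two auxiliary results that I would establish first. The first is a \emph{Replacement Lemma}: if $L, R \in \T$ are distinct and $Z \in \T$ contains both $\tr$ and $\fa$, then $L[\tr \mapsto Z] \neq R[\tr \mapsto Z]$ (and analogously for $\fa \mapsto Z$). I would prove this by inspecting the shallowest position where $L$ and $R$ disagree; the only subtle subcase is when one side is the leaf $\tr$ and the other an atom-rooted subtree $T$, because equality after substitution would force $T[\tr \mapsto Z] = Z$, and a size-counting argument then forces $T = Z$ while requiring $T$ to contain no $\tr$-leaves, contradicting that $Z$ does. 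Next I would derive Claim~A: for any $*$-term $P$ the two root subtrees of $se(P)$ are distinct. The $\ell$-term base case holds because those subtrees are pure of opposite types, and the inductive step expresses the root subtrees of $se(P_1 \odot P_2)$ as substitutions into the root subtrees of $se(P_1)$ and combines the induction hypothesis with the Replacement Lemma.

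For the main statement I would assume for contradiction that $se(P) = X[\triangle \mapsto Y]$ with $X \neq \triangle$ and with all leaves of $X$ equal to $\triangle$. Every leaf of $se(P)$ then lies in a copy of $Y$, so by Lemma~\ref{lem:TF} $Y$ is mixed and hence atom-rooted. When $P$ is an $\ell$-term, the two root subtrees of $se(P)$ are pure of opposite types, and both would need leaf-sets equal to the leaf-set of $Y$, which is absurd. For the inductive step write $P = P_1 \odot P_2$; the case $\odot = \leftor$ follows by the duality built into the axioms, so I treat $\odot = \leftand$. Put $X = X_L \unlhd a \unrhd X_R$ and let $L_1, R_1$ be the root subtrees of $se(P_1)$. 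If $X_L = X_R = \triangle$ then both root subtrees of $se(P)$ equal $Y$, contradicting Claim~A. Otherwise I would run parallel traversals of $(L_1, X_L)$ and $(R_1, X_R)$, descending through matching atom nodes and case-analysing the \emph{frontier} positions at which one of the two sides first reaches a leaf. Frontiers involving an $\fa$-leaf or mismatched atoms contradict structural invariants; a $\tr$-leaf of $L_1$ or $R_1$ meeting an atom-rooted subtree of $X_L$ or $X_R$ exposes, via the Replacement Lemma, a bad decomposition of $se(P_2)$, contradicting the induction hypothesis on $P_2$; and mixing the remaining two frontier kinds on a single side contradicts the Replacement Lemma. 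After these exclusions only two families of configurations survive on each side: either the frontier forces the corresponding root subtree of $se(P_1)$ to be pure-$\tr$, or (via the Replacement Lemma) $L_1 = X_L[\triangle \mapsto Z]$ and $R_1 = X_R[\triangle \mapsto Z]$ for a common tree $Z$. The pure-$\tr$ possibilities force $se(P_1)$ to be pure-$\tr$, contradicting Lemma~\ref{lem:TF}, and the remaining configuration yields $se(P_1) = X[\triangle \mapsto Z]$ with the same $X$, a bad decomposition of $se(P_1)$ contradicting the induction hypothesis on $P_1$.

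The main obstacle I expect is the bookkeeping of this final case analysis: all combinations of frontier outcomes on the left and right side have to be shown to collapse into one of the listed contradictions, which requires carefully tracking how the value of $Y$ is simultaneously constrained by both traversals. The Replacement Lemma is the key lever that makes this tractable: it converts the tree-equations at frontier positions into the structural equations (such as $L^* = R^*$ or $R_1|_p = L_1$) that can be fed back into the induction hypotheses on $P_1$ and $P_2$.
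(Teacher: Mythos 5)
Your plan is sound and would go through, but it is a genuinely different proof from the paper's. The paper also argues by induction (on the number of $\ell$-terms) and, for $P_1\leftand P_2$, takes the witness $Z$ of Lemma~\ref{lem:sperttf} for $P_1$ and observes that the subtree $Z\sub{\tr}{\SE(P_2)}$ of $\SE(P_1\leftand P_2)$ has one all-$\fa$ branch and one mixed branch at its root; since $X$ has no $\tr$- or $\fa$-leaves, every occurrence of that subtree must lie inside a copy of $Y$, so $\SE(P_2)$ is a proper subtree of $Y$, hence \emph{every} occurrence of $\SE(P_2)$ lies inside $Y$, and replacing $\SE(P_2)$ by $\tr$ in $Y$ yields a forbidden decomposition of $\SE(P_1)$ --- the induction hypothesis is invoked only for the left conjunct. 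You instead prove an injectivity property of leaf replacement (your Replacement Lemma, which is true; the delicate $\tr$-versus-atom case does close via the size count you indicate) and compare $X$ positionally against $\SE(P_1)$: the frontier analysis shows that every $\triangle$-leaf of $X$ must sit at a position of $\SE(P_1)$ carrying one common subtree $Z$ with $Z\sub{\tr}{\SE(P_2)}=Y$ (the case where $X$ continues below a $\tr$-leaf of $\SE(P_1)$ instead produces a forbidden decomposition of $\SE(P_2)$), whence $\SE(P_1)=X\sub{\triangle}{Z}$ contradicts the hypothesis for $P_1$. This needs the induction hypothesis for both conjuncts, which structural induction supplies ($P_2\in P^d\subseteq P^*$), and your injectivity lemma is also exactly what forces the several $\triangle$-positions to carry the same $Z$ and forces $Z_L=Z_R$ across the two root branches --- including the cross-case where one branch would be pure-$\tr$ and the other not, which is where the ``simultaneous constraint on $Y$'' you mention is genuinely needed. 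What each approach buys: the paper's marker subtree localises $\SE(P_2)$ inside $Y$ in one stroke and keeps the case analysis short, at the price of somewhat informal reasoning about occurrences of subtrees; your route is longer and more bookkeeping-heavy but more mechanical, and the Replacement Lemma is a reusable tool that would equally streamline Claim~A and the uniqueness arguments in Theorems~\ref{thm:scddd} and~\ref{thm:stsd}. Two small corrections: the $\leftor$ case follows from the $\tr$/$\fa$-symmetry of the tree construction, not from ``duality built into the axioms'' (the lemma is purely semantic); and in the $\ell$-term base case the precise point is that each root branch of $X\sub{\triangle}{Y}$ inherits both leaf values from the mixed tree $Y$, contradicting the purity of the corresponding branch of $\SE(P)$.
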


\begin{proof}
By induction on the number of $\ell$-terms in $P$.
Let $P$ be a single $\ell$-term. When we analyse the grammar of $P$ we find that
one branch from the root of $\SE(P)$ only contains $\tr$ and not $\fa$,
and the other way around for the other branch. 
Hence if $\SE(P) = X\sub{\triangle}{Y}$ and $X$ does not contain \tr\ or \fa,
then $Y$ contains occurrences of both $\tr$ and $\fa$. Hence, $Y$ must contain the
root and $X = \triangle$.

For the induction we
assume that the lemma holds for all $*$-terms that contain fewer $\ell$-terms than $P
\leftand Q$ and $P \leftor Q$.
We start with the case for $\SE(P \leftand Q)$. Towards a contradiction, suppose 
that for some $*$-terms $P$ and $Q$,
\begin{equation}
\label{eq:sub}
\SE(P\leftand Q) = X\sub{\triangle}{Y}
\end{equation}
with $X \neq \triangle$ and $X$ not containing
any occurrences of $\tr$ or $\fa$. 
Let $Z$ be the witness of Lemma
\ref{lem:sperttf} for $P$
(so one branch of the root of $Z$ contains only \fa-leaves,
and the other only \tr-leaves). Observe that $\SE(P \leftand Q)$ has one or
more occurrences of the subtree
\[Z\sub{\tr}{\SE(Q)}.\]
The interest of this observation is that one branch of the root of this 
subtree contains only \fa, and the other branch contains both \tr\ and \fa\
(because $se(Q)$ does by Lemma~\ref{lem:TF}). It follows that all occurrences of 
$Z\sub{\tr}{\SE(Q)}$
in $se(P\leftand Q)$ are subtrees in $Y$ after being substituted in $X$:
\begin{itemize}\setlength\itemsep{-1mm}
\item
Because $X$ does not contain \tr\ and \fa, Lemma~\ref{lem:TF} and \eqref{eq:sub}
imply that $Y$ contains both \tr\ and \fa.
\item
Assume there is an occurrence of 
$Z\sub{\tr}{\SE(Q)}$ in $X\sub{\triangle}{Y}$ that has its root in $X$.
Hence the parts of the two branches from this root node that are in $X$
must have $\triangle$ as their leaves. 
For the branch that only has \fa-leaves this implies that 
$Y$ does not contain \tr, which is a contradiction. 
\end{itemize}
So, $Y$ contains at least one occurrence of $Z\sub{\tr}{\SE(Q)}$, hence
\begin{equation}
\label{JAB}
\text{$se(Q)$ is a \emph{proper} subtree of $Y$.}
\end{equation}
This implies that \emph{each} occurrence of $se(Q)$ in $se(P\leftand Q)$
is an occurrence in $Y$ (after being substituted): if this were
not the case, the root of $se(Q)$ occurs also in $X$
and the parts of the two branches from this node that are in $X$
must have $\triangle$ as their leaves, which implies that $Y$ after being substituted
in $X$ is a proper subtree of $se(Q)$. 
By~\eqref{JAB} this implies 
that $se(Q)$ is a proper subtree of itself, which is a contradiction.

Because each occurrence of $se(Q)$ in $se(P\leftand Q)= X\sub{\triangle}{Y}$
is an occurrence in $Y$ (after being substituted) and because 
$se(P\leftand Q)=se(P)\sub{\tr}{se(Q)}$, 
it follows that $\SE(P) = X\sub{\triangle}{V}$ where $V$ is obtained from $Y$ by
replacing all occurrences of the subtree ${\SE(Q)}$ by \tr. But this
violates the induction hypothesis. 
This concludes the induction step for the case of $se(P\leftand Q)$. 

A proof for the case $\SE(P \leftor Q)$ is symmetric.
\end{proof}

We now arrive at two crucial definitions concerning decompositions. When
considering $*$-terms, we already know that $\SE(P \leftand Q)$ can be
decomposed as
\begin{equation*}
\SE(P)\sub{\tr}{\triangle}\sub{\triangle}{\SE(Q)}.
\end{equation*}
Our goal now is to give a definition for a kind of decomposition so that this
is the only such decomposition for $\SE(P \leftand Q)$. We also ensure that
$\SE(P \leftor Q)$ does not have a decomposition of that kind, so that we can
distinguish $\SE(P \leftand Q)$ from $\SE(P \leftor Q)$. Similarly, we need
to define another kind of decomposition so that $\SE(P \leftor Q)$ can only be
decomposed as 
\begin{equation*}
\SE(P)\sub{\fa}{\triangle}\sub{\triangle}{\SE(Q)}
\end{equation*}
and that $\SE(P \leftand Q)$ does not have a decomposition of that kind.

\begin{definition}
\label{def:candidate}
The pair $(Y, Z) \in \Tone \times \T$ is a \textbf{candidate conjunction
decomposition (ccd)} of $X \in \T$, if
\begin{itemize}\setlength\itemsep{-1mm}
\item[$\bullet$] $X = Y\sub{\triangle}{Z}$,
\item[$\bullet$] $Y$ contains $\triangle$,
\item[$\bullet$] $Y$ contains $\fa$, but not $\tr$, and
\item[$\bullet$] $Z$ contains both $\tr$ and $\fa$.
\end{itemize}
Similarly, $(Y, Z)$ is a \textbf{candidate disjunction decomposition (cdd)} of
$X$, if
\begin{itemize}\setlength\itemsep{-1mm}
\item[$\bullet$] $X = Y\sub{\triangle}{Z}$,
\item[$\bullet$] $Y$ contains $\triangle$,
\item[$\bullet$] $Y$ contains $\tr$, but not $\fa$, and
\item[$\bullet$] $Z$ contains both $\tr$ and $\fa$.
\end{itemize}
\end{definition}

Observe that any ccd or cdd $(Y, Z)$ is \emph{strict} because $Z$
contains both
$\tr$ and \fa, and thus cannot be a subtree of $Y$.
A first, crucial property of ccd's and cdd's is the following connection
with $se$-images of $*$-terms.

\begin{lemma}
\label{lem:nocdd}
For any $*$-term $P \leftand Q$,
$\SE(P \leftand Q)$ has no cdd. 
Similarly, for any $*$-term $P \leftor Q$, 
$\SE(P \leftor Q)$ has no ccd.
\end{lemma}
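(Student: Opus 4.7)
The plan is to prove the cdd statement for $\leftand$-terms by strong induction on the number of $\ell$-terms in the $*$-term, and then deduce the ccd statement for $\leftor$-terms by the duality principle. Suppose, towards a contradiction, that $(Y,Z)$ is a cdd of $se(P\leftand Q)$. In the base case $P$ is an $\ell$-term, so $se(P)=L\unlhd a\unrhd R$ with one of $L,R$ consisting entirely of $\fa$-leaves; say $R$ is mono-$\fa$. Because $R$ has no $\tr$-leaves, the right subtree of $se(P\leftand Q)=se(P)[\tr\mapsto se(Q)]$ equals $R$; writing $Y=Y_L\unlhd a\unrhd Y_R$ (forced since $Y\neq\triangle$), the equation $Y_R[\triangle\mapsto Z]=R$ has no solution, for $Y_R$ can contain neither $\fa$ (by the cdd hypothesis on $Y$), nor $\tr$ (which would survive as a $\tr$ in $R$), nor $\triangle$ (whose $Z$-expansion would inject the $\tr$-leaves of $Z$ into $R$).

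For the inductive step, the subcase $P$ an $\ell$-term reduces to the base case. If $P=P_1\leftand P_2$, then identity \eqref{id:rp} gives $se(R)=se(P_1)[\tr\mapsto se(P_2\leftand Q)]$, and $P_2\leftand Q$ is a valid $c$-term with strictly fewer $\ell$-terms. Pick any $\tr$-leaf $\ell$ of $Y$: it lies strictly inside some copy of $se(P_2\leftand Q)$ whose root $q$ is a $\tr$-leaf of $se(P_1)$. As $\ell$ is a leaf of $Y$ strictly below $q$ and every ancestor of a node of $Y$ is a node of $Y$, $q$ is a node of $Y$, and having the proper descendant $\ell$ it must be internal. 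The subtree $T_q$ of $Y$ at $q$ then satisfies $T_q[\triangle\mapsto Z]=se(P_2\leftand Q)$, contains $\ell$ as a $\tr$-leaf, carries no $\fa$-leaves, and must have a $\triangle$-leaf to realise the $\fa$-leaves of $se(P_2\leftand Q)$. Hence $(T_q,Z)$ is a cdd of $se(P_2\leftand Q)$, contradicting the induction hypothesis.

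The remaining case $P=P_1\leftor P_2$ is the main obstacle, because a direct induction on $Q$ fails: if $Q$ itself is a disjunction then $se(Q)$ does admit cdds. Applying \eqref{id:rp} yields $se(R)=se(P_1)[\tr\mapsto se(Q),\,\fa\mapsto se(P_2\leftand Q)]$, so at every $\fa$-leaf position $f$ of $se(P_1)$ the subtree of $se(R)$ at $f$ equals $se(P_2\leftand Q)$. A case analysis on $f$ relative to $(Y,Z)$ dispatches each option: if $f$ is internal to $Y$ with subtree $T_f$ containing $\tr$, the induction hypothesis applied to $P_2\leftand Q$ contradicts; the subcase where $T_f$ is non-trivial yet contains neither $\tr$ nor $\fa$ is excluded by Lemma~\ref{lem:snondectf} applied to $se(P_2\leftand Q)$; and if $f$ is a $\tr$-leaf of $Y$, a single $\tr$ cannot equal the multi-node tree $se(P_2\leftand Q)$.

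The only remaining configurations, $f$ a $\triangle$-leaf of $Y$ or $f$ strictly inside a $Z$-copy, force $Z$ to coincide with or contain $se(P_2\leftand Q)$ as a subtree. Because $Y$ then has no positions strictly below any such $f$, every $\tr$-leaf of $Y$ must lie inside a top-level $se(Q)$-copy rooted at some $\tr$-leaf $q$ of $se(P_1)$, producing (as in the previous case) a cdd $(T_q,Z)$ of $se(Q)$. A leaf-count closes the argument: since $Z$ contains $se(P_2\leftand Q)$, $|Z|\ge|se(P_2\leftand Q)|$, and because $se(P_2\leftand Q)=se(P_2)[\tr\mapsto se(Q)]$ with $se(P_2)$ containing at least one $\fa$-leaf, $|se(P_2\leftand Q)|>|se(Q)|$; hence $T_q[\triangle\mapsto Z]$ has at least $1+|Z|>|se(Q)|$ leaves, contradicting $T_q[\triangle\mapsto Z]=se(Q)$. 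The ccd statement for $\leftor$-terms then follows by applying the duality principle to the just-established cdd statement.
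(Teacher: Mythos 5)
Your proof is correct, but it takes a genuinely different route from the paper's. The paper argues directly, without induction: it takes the witness $Z'$ of Lemma~\ref{lem:sperttf} (the $\SE$-image of the rightmost $\ell$-term of $P$), observes that $\SE(P\leftand Q)$ contains occurrences of $Z'\sub{\tr}{\SE(Q)}$ whose roots cannot lie in $Y$ (one branch of such an occurrence is all-$\fa$, which would force $Z$ to miss $\tr$), concludes that every occurrence of $\SE(Q)$ --- and hence every occurrence of $\tr$ --- lies inside $Z$, and so $Y$ contains no $\tr$ at all, contradicting the definition of a cdd. You instead induct on the number of $\ell$-terms with a case split on the shape of the left conjunct $P$; the delicate case $P=P_1\leftor P_2$ is exactly where a naive induction breaks (as you correctly note, $\SE(Q)$ may well have a cdd when $Q$ is a disjunction), and your leaf-counting argument --- $Z$ must absorb a copy of $\SE(P_2\leftand Q)$, which has strictly more leaves than $\SE(Q)$, so $T_q\sub{\triangle}{Z}$ cannot equal $\SE(Q)$ --- is a clean way around that obstacle. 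What each approach buys: the paper's proof is shorter and reuses the occurrence-tracking machinery already set up for Lemma~\ref{lem:snondectf}, whereas yours is longer but more self-contained and makes the problematic disjunction case explicit rather than burying it in a ``similar way'' remark. One small caveat: the closing appeal to ``the duality principle'' should not be read as the formal principle for $\SCLe$-derivability (the dual of a $*$-term is not literally a $*$-term of the grammar of Definition~\ref{def:snf}); what is meant, and what the paper also does under the word ``symmetric'', is to repeat the entire argument with $\tr/\fa$ and $\leftand/\leftor$ interchanged, which is legitimate because the grammar of $\SNF$ and the definition of $\SE$ are symmetric under that interchange.
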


\begin{proof}
We first treat the case for $P \leftand Q$,
so $P \in P^*$ and $Q \in P^d$.
Towards a contradiction, suppose that $(Y, Z)$ is a cdd of $\SE(P \leftand Q)$. 
Let $Z'$ be the witness of Lemma~\ref{lem:sperttf} for $P$.
Observe that $\SE(P \leftand Q)$ has one or
more occurrences of the subtree
\[Z'\sub{\tr}{\SE(Q)}.\]
It follows that all occurrences
of $Z'\sub{\tr}{se(Q)}$ in $se(P\leftand Q)$ are subtrees in $Z$ after being 
substituted in $Y$, which
can be argued in a similar way as in the proof of Lemma~\ref{lem:snondectf}:
\begin{itemize}\setlength\itemsep{-1mm}
\item Assume there is an occurrence of
$Z'\sub{\tr}{se(Q)}$ in $Y\sub{\triangle}{Z}$ that has its root in $Y$. Following
the branch from this node that only has \fa-leaves and that leads in $Y$
to one or more $\triangle$-leaves, this implies 
that $Z$ does not contain \tr, which is a contradiction by definition of a cdd.
\end{itemize}
So, $Z$ contains at least one occurrence of $Z'\sub{\tr}{se(Q)}$. This implies 
that \emph{each} occurrence of $se(Q)$ in $se(P\leftand Q)$ is an occurrence in 
$Z$ (after being substituted): if this were
not the case, the root of $se(Q)$ occurs in $Y$ and this
implies that $se(Q)$ is a proper subtree of itself, which is a contradiction.
By definition of $se$, all the occurrences of $\tr$ in $\SE(P \leftand Q)$
are in occurrences of the subtree $\SE(Q)$. 
Because $Y$ does not contain the root of an $se(Q)$-occurrence,
$Y$ does not contain any occurrences of
$\tr$, so $(Y,Z)$ is not a cdd of $\SE(P \leftand Q)$.  
A proof
for the case $\SE(P \leftor Q)$ is symmetric.
\end{proof}

However, the ccd and cdd are not necessarily the decompositions we are
looking for, because, for example, $\SE((P \leftand Q) \leftand R)$ has a ccd
\[(\SE(P)\sub{\tr}{\triangle}, \SE(Q \leftand R)),\] 
while the decomposition we
need to reconstruct the constituents of a $*$-term is
\[(\SE(P \leftand Q)\sub{\tr}{\triangle}, \SE(R)).\]
A more intricate example of a ccd $(Y, Z)$ 
that does not produce the constituents of a $*$-term
is this pair of trees $Y$ and $Z$: 
\[
\begin{tikzpicture}[%
level distance=7.5mm,
level 1/.style={sibling distance=30mm},
level 2/.style={sibling distance=15mm},
level 3/.style={sibling distance=7.5mm}
]
\node (a) {$a$}
  child {node (b1) {$\triangle$}
  }
  child {node (b2) {$a_2$}
    child {node (c3) {$\fa$}
    }
    child {node (c4) {$\fa$}
          }
  };
\end{tikzpicture}
\qquad
\begin{tikzpicture}[%
level distance=7.5mm,
level 1/.style={sibling distance=30mm},
level 2/.style={sibling distance=15mm},
level 3/.style={sibling distance=7.5mm}
]
\node (a) {$a_1$}
  child {node (b1) {$b$}
    child {node (c1) {$\tr$}
    }
    child {node (c2) {$\fa$}
    }
  }
  child {node (b2) {$b$}
    child {node (c3) {$\tr$}
    }
    child {node (c4) {$\fa$}
          }
  };
\end{tikzpicture}
\]
It is clear that $(Y,Z)$ is a ccd of $se(P^\ell\leftand Q^\ell)$ with $P^\ell$ 
and $Q^\ell$ these $\ell$-terms:
\[
P^\ell=(a\leftand ((a_1\leftand\tr)\leftor\tr))\leftor((a_2\leftor\fa)\leftand\fa),
\quad
Q^\ell=(b\leftand \tr)\leftor\fa.
\]
Therefore we refine Definition~\ref{def:candidate} to obtain the decompositions we seek.

\begin{definition}
The pair $(Y, Z) \in \Tone \times \T$ is a \textbf{conjunction decomposition
(cd)} of $X \in \T$, if it is a ccd of $X$ and there is no other ccd $(Y', Z')$
of $X$ where the depth of $Z'$ is smaller than that of $Z$. 

Similarly, the pair $(Y, Z) \in \Tone \times \T$
is a \textbf{disjunction decomposition (dd)} of $X$, if it is a cdd of $X$ and
there is no other cdd $(Y', Z')$ of $X$ where the depth of $Z'$ is smaller than
that of $Z$.
\end{definition}

\begin{theorem}
\label{thm:scddd}
For any $*$-term $P \leftand Q$, i.e., with $P \in P^*$ and $Q \in P^d$,
$\SE(P \leftand Q)$ has the unique cd
\begin{equation*}
(\SE(P)\sub{\tr}{\triangle}, \SE(Q))
\end{equation*}
and no dd. For any $*$-term $P \leftor Q$, i.e., with $P \in P^*$ and $Q \in
P^c$, $\SE(P \leftor Q)$ has no cd and its unique dd is
\begin{equation*}
(\SE(P)\sub{\fa}{\triangle}, \SE(Q)).
\end{equation*}
\end{theorem}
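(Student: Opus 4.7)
By the symmetry between $\leftand$/$\leftor$, $\tr$/$\fa$, and ccd/cdd, it suffices to establish the statement for $P \leftand Q$; the other part follows dually. The first step is to verify that $(\SE(P)\sub{\tr}{\triangle}, \SE(Q))$ is a ccd of $\SE(P \leftand Q)$: the identity $\SE(P)\sub{\tr}{\triangle}\sub{\triangle}{\SE(Q)} = \SE(P \leftand Q)$ is immediate since $\SE(P) \in \T$ has no $\triangle$, and Lemma~\ref{lem:TF} applied to the $*$-terms $P$ and $Q$ supplies the remaining three conditions. That $\SE(P \leftand Q)$ has no dd is immediate from Lemma~\ref{lem:nocdd}, since any dd is a cdd.

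For uniqueness of the cd, let $(Y',Z')$ be any ccd; the aim is to show $d(Z') \geq d(\SE(Q))$, with equality forcing the canonical form. Because $Y'$ has no $\tr$, each $\tr$-leaf of $\SE(P \leftand Q)$ is a proper descendant of some $\triangle$-position of $Y'$. These $\tr$-leaves lie exclusively within the grafted copies of $\SE(Q)$ at the $\tr$-leaves $v$ of $\SE(P)$; write $\SE(Q)_v$ for such a copy. Each covering $\triangle$ is then either (a) strictly above some $\SE(Q)_v$-root, so $Z'$ properly contains $\SE(Q)$ and $d(Z') > d(\SE(Q))$; (b) at some $\SE(Q)_v$-root, so $Z' = \SE(Q)$; or (c) strictly inside some $\SE(Q)_v$, so $Z'$ is a proper mixed subtree of $\SE(Q)$ and, as all $\triangle$'s of $Y'$ must then lie inside grafted copies, the $Z'$-occurrences within $\SE(Q)$ must cover every $\tr$-leaf of $\SE(Q)$.

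The main obstacle is ruling out case (c). If $Q$ is an $\ell$-term this is immediate: by Lemma~\ref{lem:TF} applied to $P^\tr$/$P^\fa$, the two branches below $\SE(Q)$'s root are monochromatic and $\SE(Q)$ has no proper mixed subtree. If $Q = Q_1 \leftor Q_2$, then $\SE(Q) = \SE(Q_1)\sub{\fa}{\SE(Q_2)}$ and a proper mixed subtree $Z'$ of $\SE(Q)$ is either a subtree of a grafted $\SE(Q_2)$-copy or has the form $T\sub{\fa}{\SE(Q_2)}$ for a proper subtree $T$ of $\SE(Q_1)$ containing $\fa$. In the first sub-case the $Z'$-occurrences sit inside $\SE(Q_2)$-copies and so cannot cover the $\tr$-leaves of $\SE(Q_1)$---which exist by Lemma~\ref{lem:TF}---that survive unchanged into $\SE(Q)$. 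In the second sub-case, covering every $\tr$-leaf of $\SE(Q)$ forces every leaf of $\SE(Q_1)$ to lie under some $T$-occurrence (the $\fa$-leaves because the $\SE(Q_2)$-copies grafted there must themselves lie inside $Z'$-occurrences); this rewrites $\SE(Q_1)$ as $X\sub{\triangle}{T}$ with $X \neq \triangle$ containing $\triangle$ but neither $\tr$ nor $\fa$, contradicting Lemma~\ref{lem:snondectf} applied to the $*$-term $Q_1$. Once (c) is excluded, $Z' = \SE(Q)$; requiring every $\tr$-leaf of $\SE(P)$ to be covered then forces each such leaf to be a $\triangle$-position of $Y'$, yielding $Y' = \SE(P)\sub{\tr}{\triangle}$.
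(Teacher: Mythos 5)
Your proof is correct and reaches the conclusion by a genuinely different organisation of the argument. The paper proves the theorem by simultaneous induction on the number of $\ell$-terms: for a competing ccd $(Y,Z)$ it first argues that $Z$ and $\SE(Q)$ must be comparable as subtrees, reduces to the case that $Z$ is a proper subtree of $\SE(Q)$, and then, when $Q$ is a disjunction, extracts a decomposition $\SE(Q)=U\sub{\triangle}{Z}$ to which Lemma~\ref{lem:snondectf} and the induction hypothesis (equivalently, Lemma~\ref{lem:nocdd}) are applied. You avoid the induction altogether: your trichotomy on where the $\triangle$-positions of $Y'$ sit relative to the grafted copies of $\SE(Q)$ replaces the paper's comparability argument (and, as a bonus, it handles explicitly the possibility $Z'=\SE(Q)$ with a non-canonical $Y'$, which the paper's dichotomy treats only implicitly), and your case (c) is eliminated by descending one level into $\SE(Q)=\SE(Q_1)\sub{\fa}{\SE(Q_2)}$ and applying Lemma~\ref{lem:snondectf} to the component $Q_1$, rather than by appealing to the theorem's statement for $Q$ itself. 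What each approach buys: the paper's version localises all the delicate tree combinatorics in Lemma~\ref{lem:snondectf} and keeps the induction step short; yours is self-contained and arguably more transparent about why the canonical pair is the \emph{only} minimal-depth ccd. The one place where you glide over a real detail is the second sub-case of case (c): the covering occurrences at the various $\triangle$-positions a priori give possibly different subtrees $T_{p}$ of $\SE(Q_1)$, each satisfying $T_{p}\sub{\fa}{\SE(Q_2)}=Z'$, and writing $\SE(Q_1)=X\sub{\triangle}{T}$ for a \emph{single} $T$ requires knowing that all these $T_{p}$ coincide, i.e., that $T\mapsto T\sub{\fa}{\SE(Q_2)}$ is injective. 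This is true because $\SE(Q_2)$ contains both $\tr$ and $\fa$ (so it is not a leaf and cannot equal $T\sub{\fa}{\SE(Q_2)}$ for any composite $T$), but it deserves a short structural induction; you also tacitly use that distinct occurrences of $T$ cannot partially overlap, which holds since no tree is a proper subtree of itself. With those two observations supplied the argument is complete, and it is no less rigorous than the paper's own proof.
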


\begin{proof}
By simultaneous induction on the number of $\ell$-terms
in $P \leftand Q$ and $P \leftor Q$. 

In the basis we have to consider, for $\ell$-terms
$P^\ell$ and $Q^\ell$, the terms $P^\ell \leftand Q^\ell$ and $P^\ell \leftor 
Q^\ell$. By symmetry, it is sufficient to consider the first case. 
By definition of a ccd and Lemma~\ref{lem:TF},
$(\SE(P^\ell)\sub{\tr}{\triangle}, \SE(Q^\ell))$ is a ccd of
$\SE(P^\ell \leftand Q^\ell)$. Furthermore observe that the smallest subtree
in $se(P^\ell\leftand Q^\ell)$ that contains both \tr\ and \fa\ is $se(Q^\ell)$. 
Therefore $(\SE(P^\ell)\sub{\tr}{\triangle}, \SE(Q^\ell))$ is the \emph{unique} cd of
$\SE(P^\ell\leftand Q^\ell)$.
Now for the dd. It suffices to show that there is no cdd of 
$\SE(P^\ell \leftand Q^\ell)$ and this follows from Lemma~\ref{lem:nocdd}. 

For the induction we assume that the
theorem holds for all $*$-terms with fewer $\ell$-terms than $P \leftand Q$ and
$P \leftor Q$. 
We will first treat the case for $P \leftand Q$ and show that
$(\SE(P)\sub{\tr}{\triangle}, \SE(Q))$ is the unique cd of $\SE(P \leftand Q)$. 
In this case, observe that for any other ccd $(Y, Z)$ either $Z$ 
is a proper subtree of
$\SE(Q)$, or vice versa: if this were not the case, then there are occurrences of
$Z$ and $se(Q)$ in $Y\sub{\triangle}Z=se(P\leftand Q)$ 
that are disjoint and at least one of the following cases applies:
\begin{itemize}\setlength\itemsep{-1mm}
\item $Y$ contains an occurrence of $se(Q)$, and hence of \tr, which is a 
contradiction.
\item $se(P)\sub{\tr}{\triangle}$ contains an occurrence of $Z$, and hence of \tr,
which is a contradiction.
\end{itemize}
Hence, by definition of a cd
it suffices to show that there is no ccd $(Y, Z)$ where $Z$ is
a proper subtree of $\SE(Q)$. 
Towards a contradiction, suppose that such a ccd $(Y,
Z)$ does exist.
By definition of $*$-terms $Q$ is either an $\ell$-term or a disjunction. 
\begin{itemize}\setlength\itemsep{-1mm}
\item
If
$Q$ is an $\ell$-term and $Z$ a proper subtree of $\SE(Q)$, then $Z$
does not contain both $\tr$ and $\fa$ because one branch from the root of 
$\SE(Q)$ will only contain $\tr$ and not $\fa$,
and the other branch vice versa. 
Therefore $(\SE(P)\sub{\tr}{\triangle}, \SE(Q))$ is the \emph{unique} cd of
$\SE(P \leftand Q)$.
\item
If $Q$ is a disjunction and $Z$ a proper subtree of $\SE(Q)$, then we can
decompose $\SE(Q)$ as $\SE(Q) = U\sub{\triangle}{Z}$ for some $U \in \Tone$ that
contains but is not equal to $\triangle$ and such that $U\sub{\triangle}{Z}$ is strict, 
i.e., $Z$ is not a subtree of $U$. By Lemma~\ref{lem:snondectf} this implies
that $U$ contains either $\tr$ or $\fa$. 
\begin{itemize}\setlength\itemsep{-1mm}
\item
If $U$ contains $\tr$, then so
does $Y$, because 
$Y = \SE(P)\sub{\tr}{U}$, which is the case
because 
\begin{align*}
Y\sub{\triangle}Z&=se(P\leftand Q)\\
&=se(P)\sub{\tr}{U\sub{\triangle}Z}\\
&=se(P)\sub{\tr}{U}\sub{\triangle}Z,
\end{align*}
and the only way in which $Y\ne\SE(P)\sub{\tr}{U}$ is possible is that 
$U$ contains an occurrence of $Z$, which is excluded because
$U\sub{\triangle}{Z}$ is strict. Because $Y$ contains an occurrence of \tr,
$(Y, Z)$ is not a ccd of $\SE(P \leftand Q)$.
\item
If $U$ only contains
$\fa$ then $(U, Z)$ is a ccd of $\SE(Q)$ which violates the induction
hypothesis. 
\end{itemize}
Therefore $(\SE(P)\sub{\tr}{\triangle}, \SE(Q))$ is the \emph{unique}
cd of $\SE(P \leftand Q)$.
\end{itemize}
Now for the dd. By Lemma~\ref{lem:nocdd} there is no cdd of $\SE(P \leftand
Q)$, so there is neither a dd of $\SE(P \leftand Q)$. 
A proof
for the case $\SE(P \leftor Q)$ is symmetric.
\end{proof}

At this point we have the tools necessary to invert $\SE$ on $*$-terms, at
least down to the level of $\ell$-terms. We can easily detect if a tree in the
image of $\SE$ is in the image of $P^\ell$, because all leaves to the left of
the root are one truth value, while all the leaves to the right are the other.
To invert $\SE$ on $\tr$-$*$-terms we still need to be able to reconstruct
$\SE(P^\tr)$ and $\SE(Q^*)$ from $\SE(P^\tr \leftand Q^*)$. To this end we
define a $\tr$-$*$-decomposition, and as with cd's and dd's we first define a
candidate $\tr$-$*$-decomposition.

\begin{definition}
The pair $(Y, Z) \in \Tone \times \T$ is a \textbf{candidate
$\tr$-$*$-decomposition (ctsd)} of $X \in \T$, if 
\begin{itemize}\setlength\itemsep{-1mm}
\item[$\bullet$]
$X = Y\sub{\triangle}{Z}$, 
\item[$\bullet$] 
$Y$
does not contain $\tr$ or $\fa$,
\item[$\bullet$] $Z$ contains both \tr\ and \fa,
\end{itemize}
and there is no decomposition $(U, V) \in
\Tone \times \T$ of $Z$ such that
\begin{itemize}\setlength\itemsep{-1mm}
\item[$\bullet$] $Z = U\sub{\triangle}{V}$,
\item[$\bullet$] $U$ contains $\triangle$,
\item[$\bullet$] $U \neq \triangle$, and
\item[$\bullet$] $U$ contains neither $\tr$ nor $\fa$.
\end{itemize}
\end{definition}

However, this is not necessarily the decomposition we seek in this case.
Consider for example the $\tr$-term $P^\tr$ with the following semantics:
\begin{center}
\begin{tikzpicture}[%
level distance=7.5mm,
level 1/.style={sibling distance=30mm},
level 2/.style={sibling distance=15mm},
level 3/.style={sibling distance=7.5mm}
]
\node (a) {$a$}
  child {node (b1) {$b$}
    child {node (c1) {$c$}
      child {node (d1) {$\tr$}} 
      child {node (d2) {$\tr$}}
    }
    child {node (c2) {$d$}
      child {node (d3) {$\tr$}} 
      child {node (d4) {$\tr$}}
    }
  }
  child {node (b2) {$b$}
    child {node (c3) {$c$}
      child {node (d5) {$\tr$}} 
      child {node (d6) {$\tr$}}
    }
    child {node (c4) {$d$}
      child {node (d7) {$\tr$}} 
      child {node (d8) {$\tr$}}
    }
  };
\end{tikzpicture}
\end{center}
and observe that
$\SE(P^\tr \leftand Q^*)$ has a ctsd
\begin{equation*}
(\triangle \tlef a \trig \triangle, (\SE(Q^*) \tlef c \trig \SE(Q^*)) \tlef b \trig
(\SE(Q^*) \tlef d \trig \SE(Q^*))).
\end{equation*}
But the decomposition we seek is $(\SE(P^\tr)\sub{\tr}{\triangle}, \SE(Q^*))$.
Hence we will refine the above definition to aid in the theorem below.

\begin{definition}
The pair $(Y, Z) \in \Tone \times \T$ is a \textbf{$\tr$-$*$-decomposition
(tsd)} of $X \in \T$, if it is a ctsd of $X$ and there is no other ctsd $(Y',
Z')$ of $X$ where the depth of $Z'$ is smaller than that of $Z$.
\end{definition}

\begin{theorem}
\label{thm:stsd}
For any $\tr$-term $P$ and $*$-term $Q$ the unique tsd of $\SE(P \leftand
Q)$ is
\begin{equation*}
(\SE(P)\sub{\tr}{\triangle}, \SE(Q)).
\end{equation*}
\end{theorem}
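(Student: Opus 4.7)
The plan is to first verify that $(\SE(P)\sub{\tr}{\triangle}, \SE(Q))$ is indeed a ctsd, and then to show that any ctsd $(Y, Z)$ of $\SE(P \leftand Q)$ with $d(Z) \le d(\SE(Q))$ must equal it; together these yield minimum-depth membership and uniqueness, hence the theorem.

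For the ctsd check I would use that $\SE(P \leftand Q) = \SE(P)\sub{\tr}{\SE(Q)} = \SE(P)\sub{\tr}{\triangle}\sub{\triangle}{\SE(Q)}$, since $\SE(P)$ contains no triangles. Lemma~\ref{lem:TF} then supplies the leaf conditions: the $\tr$-term $P$ has no $\fa$ in $\SE(P)$, so $\SE(P)\sub{\tr}{\triangle}$ contains neither $\tr$ nor $\fa$, while $\SE(Q)$ contains both. The non-decomposition clause for $Z = \SE(Q)$ is exactly Lemma~\ref{lem:snondectf} applied to the $*$-term $Q$.

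For minimality and uniqueness, the key structural observation is that $\SE(P \leftand Q)$ is the skeleton $\SE(P)$ with identical copies of $\SE(Q)$ grafted onto each of its $\tr$-leaves. Given any ctsd $(Y, Z)$, every leaf of $Y$ is $\triangle$ (since $Y$ has no $\tr, \fa$), so the $\triangle$-positions of $Y$ form an antichain in $\SE(P \leftand Q)$ whose subtrees each equal $Z$ and together cover every leaf. I would classify each such cut position $v$ as: (i) internal to the $\SE(P)$-skeleton, (ii) at the root of some $\SE(Q)$-copy (equivalently, a $\tr$-leaf of $\SE(P)$), or (iii) strictly inside some $\SE(Q)$-copy. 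Type~(i) is immediate: the subtree rooted at such a $v$ contains a full $\SE(Q)$-copy as a proper subtree, forcing $d(Z) > d(\SE(Q))$, against the assumption.

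The main obstacle is ruling out type~(iii). If even one cut is of type~(ii) then $Z = \SE(Q)$, and no type~(iii) cut can also equal $Z$ (its subtree is a proper subtree of $\SE(Q)$, strictly smaller), so all cuts are of type~(ii) and $Y = \SE(P)\sub{\tr}{\triangle}$. Otherwise every cut is of type~(iii); picking any $\SE(Q)$-copy and letting $Y_Q$ be the residual tree obtained by replacing each cut subtree inside that copy by $\triangle$, I would obtain $\SE(Q) = Y_Q\sub{\triangle}{Z}$ with $Y_Q \ne \triangle$ (the copy's root lies in $Y_Q$, since no type~(ii) cut sits there), $Y_Q$ containing $\triangle$, and $Y_Q$ free of $\tr$ and $\fa$. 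This directly contradicts Lemma~\ref{lem:snondectf} applied to the $*$-term $Q$. The degenerate case $P = \tr$ (where $\SE(P) = \tr$ and the asserted tsd is $(\triangle, \SE(Q))$) falls out of the same analysis.
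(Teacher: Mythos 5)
Your proposal is correct and takes essentially the same route as the paper: both verify that $(\SE(P)\sub{\tr}{\triangle},\SE(Q))$ is a ctsd via Lemmas~\ref{lem:TF} and~\ref{lem:snondectf}, and both obtain minimality and uniqueness by showing that any competing $Z$ of depth at most $d(\SE(Q))$ would have to sit properly inside a copy of $\SE(Q)$ and hence induce a decomposition of $\SE(Q)$ forbidden by Lemma~\ref{lem:snondectf}. Your antichain-of-cut-positions classification is a more explicit rendering of the paper's shorter dichotomy (``either $Z$ is a proper subtree of $\SE(Q)$ or vice versa'', settled by a disjointness argument), and it has the minor advantage of also pinning down $Y=\SE(P)\sub{\tr}{\triangle}$ explicitly.
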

\begin{proof}
First observe that $(\SE(P)\sub{\tr}{\triangle}, \SE(Q))$ is a ctsd because by
definition $\SE(P)\sub{\tr}{\SE(Q)} = \SE(P
\leftand Q)$ and $\SE(Q)$ is non-decomposable by Lemma~\ref{lem:snondectf}.

Towards a contradiction, suppose there exists a ctsd $(Y, Z)$ such that the depth of
$Z$ is smaller than that of $\SE(Q)$. Now either $Z$ is a proper subtree of
$\SE(Q)$, or vice versa, for otherwise there would be occurrences of
$Z$ and $se(Q)$ in $Y\sub{\triangle}Z=\SE(P)\sub{\tr}{\SE(Q)}$ 
that are disjoint and at least one of the following cases applies:
\begin{itemize}\setlength\itemsep{-1mm}
\item $Y$ contains an occurrence of $se(Q)$, and hence of \tr\ and \fa, which is a 
contradiction.
\item $se(P)\sub{\tr}{\triangle}$ contains an occurrence of $Z$, and hence of \tr\ and \fa,
which is a contradiction.
\end{itemize}
By definition of a tsd if suffices to only consider the case that 
$Z$ is a proper subtree of $se(Q)$. If this is the case, then
$\SE(Q) = U\sub{\triangle}{Z}$ for some $U \in \Tone$ that is not equal
to $\triangle$ and does not contain $\tr$ or $\fa$ (because then $Y$ would
too). But this violates Lemma~\ref{lem:snondectf}, which states that no such
decomposition exists. 
Hence, $(\SE(P)\sub{\tr}{\triangle}, \SE(Q))$ is the \emph{unique} tsd of 
$\SE(P \leftand Q)$.
\end{proof}

\subsection{Defining an inverse and proving completeness}
\label{subsec:cpl}
The two decomposition theorems from the previous section
enable us to prove the intermediate result that we used in our completeness proof
for $\FSCL$. We define three auxiliary functions to aid in our definition of the
inverse of $\SE$ on $\SNF$. Let 
\[\cd : \T \to \Tone \times \T\]
be the function
that returns the conjunction decomposition of its argument, $\dd$ of the same
type its disjunction decomposition and $\tsd$, also of the same type, its
$\tr$-$*$-decomposition. Naturally, these functions are undefined when their
argument does not have a decomposition of the specified type. Each of these
functions returns a pair and we will use $\cd_1$ ($\dd_1$, $\tsd_1$) to denote
the first element of this pair and $\cd_2$ ($\dd_2$, $\tsd_2$) to denote the
second element.

We define $\invs: \T \to \ST$ using the functions $\invs^\tr: \T \to \ST$ for
inverting trees in the image of $\tr$-terms and $\invs^\fa$, $\invs^\ell$
and $\invs^*$ of the same type for inverting trees in the image of
$\fa$-terms, $\ell$-terms and $*$-terms, respectively. These functions are
defined as follows.
\begin{align}
\label{eq:g1}
\invs^\tr(X) &=
  \begin{cases}
    \tr
      &\text{if $X = \tr$,} \\
    (a \leftand \invs^\tr(Y)) \leftor \invs^\tr(Z) 
      &\text{if $X = Y \tlef a \trig Z$.}
  \end{cases} \\ \displaybreak[0]
\label{eq:g2}
\invs^\fa(X) &=
  \begin{cases}
    \fa
      &\text{if $X = \fa$,} \\
    (a \leftor \invs^\fa(Z)) \leftand \invs^\fa(Y)
      &\text{if $X = Y \tlef a \trig Z$.}
  \end{cases} \\ \displaybreak[0]
\label{eq:g3}
\invs^\ell(X) &=
  \begin{cases}
    (a \leftand \invs^\tr(Y)) \leftor \invs^\fa(Z) 
      &\text{if $X = Y \tlef a \trig Z$ and $Y$ only has $\tr$-leaves,} \\
    (\neg a \leftand \invs^\tr(Z)) \leftor \invs^\fa(Y)
      &\text{if $X = Y \tlef a \trig Z$ and $Z$ only has $\tr$-leaves.}
  \end{cases} \\ \displaybreak[0]
\label{eq:g4}
\invs^*(X) &=
  \begin{cases}
    \invs^*(\cd_1(X)\sub{\triangle}{\tr}) \leftand \invs^*(\cd_2(X))
      &\text{if $X$ has a cd,} \\
    \invs^*(\dd_1(X)\sub{\triangle}{\fa}) \leftor \invs^*(\dd_2(X))
      &\text{if $X$ has a dd,} \\
    \invs^\ell(X)
      &\text{otherwise.}
  \end{cases} \\ \displaybreak[0]
\label{eq:g5}
\invs(X) &=
  \begin{cases}
    \invs^\tr(X)
      &\text{if $X$ has only $\tr$-leaves,} \\
    \invs^\fa(X)
      &\text{if $X$ has only $\fa$-leaves,} \\
    \invs^\tr(\tsd_1(X)\sub{\triangle}{\tr}) \leftand \invs^*(\tsd_2(X))
      &\text{otherwise.}
  \end{cases}
\end{align}

We use the symbol $\equiv$ to denote `syntactic equivalence' and we have
the following result on our normal forms.
\begin{theorem}
\label{thm:sclinv}
For all $P \in \SNF$, $\invs(\SE(P)) \equiv P$.
\end{theorem}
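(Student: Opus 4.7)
The plan is to prove this by structural induction on $\SNF$, using the grammar of Definition~\ref{def:snf} and the decomposition results from Section~\ref{subsec:tree} to show that at every level $\invs$ selects the same syntactic case that $se$ came from. The three auxiliary routines $\invs^\tr$, $\invs^\fa$ and $\invs^*$ recurse on a proper subgrammar, so a mutual induction on $P^\tr$, $P^\fa$, $P^\ell$, $P^*$, and $\SNF$-terms is natural.

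First I would handle the pure cases. For a $\tr$-term $P^\tr$, I would show $\invs^\tr(\SE(P^\tr))\equiv P^\tr$ by induction on $P^\tr$: in the base $\SE(\tr)=\tr$ and $\invs^\tr(\tr)=\tr$; for $P^\tr=(a\leftand Q)\leftor R$ we have $\SE(P^\tr)=\SE(Q)\tlef a\trig\SE(R)$ by the definition of $se$ (a $\tr$-term has no $\fa$-leaves, by Lemma~\ref{lem:TF}(1), so the tree shape is uniquely identified), and then clause~\eqref{eq:g1} gives $(a\leftand\invs^\tr(\SE(Q)))\leftor\invs^\tr(\SE(R))\equiv P^\tr$ by IH. The case for $\fa$-terms is symmetric. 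For an $\ell$-term, Lemma~\ref{lem:TF} together with the observation in the proof of Lemma~\ref{lem:sperttf} (one branch from the root of $\SE(P^\ell)$ contains only $\tr$ and the other only $\fa$) shows that $\invs^\ell$ in~\eqref{eq:g3} selects the correct clause, and the IH for $\invs^\tr$ and $\invs^\fa$ closes the case.

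The main step is $*$-terms. For $P^c=P^*\leftand Q^d$, Theorem~\ref{thm:scddd} gives that $\SE(P^*\leftand Q^d)$ has the unique cd $(\SE(P^*)\sub{\tr}{\triangle},\SE(Q^d))$ and no dd, so clause~\eqref{eq:g4} fires on the first branch and yields
\begin{equation*}
\invs^*(\SE(P^*)\sub{\tr}{\triangle}\sub{\triangle}{\tr})\leftand\invs^*(\SE(Q^d))
=\invs^*(\SE(P^*))\leftand\invs^*(\SE(Q^d)),
\end{equation*}
using identity~\eqref{id:rp} for the first equality; IH closes the case. The case $P^d=P^*\leftor Q^c$ is symmetric using the dd clause. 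For the $\ell$-term base case of $P^*$, one must verify that $\SE(P^\ell)$ admits neither a cd nor a dd: any ccd or cdd $(Y,Z)$ would require $Z$ to contain both $\tr$ and $\fa$, hence $Z$ must be $\SE(P^\ell)$ itself since its only subtree containing both leaves is itself, forcing $Y=\triangle$; but then $Y$ contains neither $\tr$ nor $\fa$, contradicting the definitions in Definition~\ref{def:candidate}. Hence $\invs^*$ falls through to $\invs^\ell$ and the previous step applies.

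Finally, for a full $\SNF$-term: if $P$ is a $\tr$-term or $\fa$-term, Lemma~\ref{lem:TF}(1,2) shows that $\SE(P)$ has leaves of only one kind, so the first or second clause of~\eqref{eq:g5} fires and the earlier cases apply. If $P\equiv P^\tr\leftand Q^*$, then by Lemma~\ref{lem:TF}(3) both leaf types appear in $\SE(P)$, so the third clause of~\eqref{eq:g5} fires; Theorem~\ref{thm:stsd} gives that the unique tsd is $(\SE(P^\tr)\sub{\tr}{\triangle},\SE(Q^*))$, and so, using~\eqref{id:rp} once more, the result equals $\invs^\tr(\SE(P^\tr))\leftand\invs^*(\SE(Q^*))\equiv P^\tr\leftand Q^*$ by the previous cases. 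The main obstacle is bookkeeping: one must be sure that at every recursive call $\invs$ is applied to a tree that really lies in the appropriate sub-image (only $\tr$-leaves, only $\fa$-leaves, or the strict $*$-term shape), and this is exactly what Lemma~\ref{lem:TF} and the uniqueness parts of Theorems~\ref{thm:scddd} and~\ref{thm:stsd} guarantee.
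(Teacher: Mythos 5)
Your proposal is correct and follows essentially the same route as the paper's own proof: separate inductions for $\tr$-terms, $\fa$-terms and $\ell$-terms, an induction on the number of $\ell$-terms for $*$-terms driven by Theorem~\ref{thm:scddd}, and a final case split on the grammatical category using Lemma~\ref{lem:TF} and Theorem~\ref{thm:stsd}. Your extra remarks (the explicit check that an $\ell$-term's $se$-image has no ccd or cdd, and the cancellation $\sub{\tr}{\triangle}\sub{\triangle}{\tr}$) only make explicit steps the paper leaves implicit.
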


\begin{proof}
We first prove that for all $\tr$-terms $P$, $\invs^\tr(\SE(P)) \equiv P$,
by induction on $P$. In the base case $P \equiv \tr$ and we have
by~\eqref{eq:g1} that
$\invs^\tr(\SE(P)) \equiv \invs^\tr(\tr) \equiv \tr \equiv P$. For the
inductive case we have $P \equiv (a \leftand Q^\tr) \leftor R^\tr$ and
\begin{align*}
\invs^\tr(\SE(P)) &\equiv \invs^\tr(\SE(Q^\tr) \tlef a \trig
  \SE(R^\tr))
&&\text{by definition of $\SE$} \\
&\equiv (a \leftand \invs^\tr(\SE(Q^\tr))) \leftor
  \invs^\tr(\SE(R^\tr))
&&\text{by \eqref{eq:g1}} \\
&\equiv (a \leftand Q^\tr) \leftor R^\tr
&&\text{by induction hypothesis} \\
&\equiv P.
\end{align*}
In a similar way it follows by~\eqref{eq:g2} that for all $\fa$-terms $P$, 
$\invs^\fa(\SE(P)) \equiv P$.

Next we check that for all $\ell$-terms $P$, $\invs^\ell(\SE(P)) \equiv P$.  We
observe that either $P \equiv (a \leftand Q^\tr) \leftor R^\fa$ or $P
\equiv (\neg a \leftand Q^\tr) \leftor R^\fa$. In the first case we have
\begin{align*}
\invs^\ell(\SE(P)) &\equiv \invs^\ell(\SE(Q^\tr) \tlef a \trig \SE(R^\fa))
&&\text{by definition of $\SE$} \\
&\equiv (a \leftand \invs^\tr(\SE(Q^\tr))) \leftor
  \invs^\fa(\SE(R^\fa))
&&\text{by~\eqref{eq:g3}, first case} \\
&\equiv (a \leftand Q^\tr) \leftor R^\fa
&&\text{as shown above} \\
&\equiv P.
\end{align*}
The second case follows in a similar way.

We now prove that for all $*$-terms $P$, $\invs^*(\SE(P)) \equiv P$, by
induction on the number of $\ell$-terms in $P$. In the base case we are
dealing with $\ell$-terms. Because an $\ell$-term has neither a cd nor a dd we
have $\invs^*(\SE(P)) \equiv \invs^\ell(\SE(P)) \equiv P$, 
where the first identity is by~\eqref{eq:g4} and the second identity was shown above. 
For the induction we have either $P \equiv Q \leftand R$ or $P \equiv Q \leftor R$.
In the first case note that by Theorem~\ref{thm:scddd}, $\SE(P)$ has a unique
cd and no dd.  So we have 
\begin{align*}
\invs^*(\SE(P)) &\equiv \invs^*(\cd_1(\SE(P))\sub{\triangle}{\tr}) \leftand
  \invs_*(\cd_2(\SE(P)))
&&\text{by~\eqref{eq:g4}} \\
&\equiv \invs^*(\SE(Q)) \leftand \invs^*(\SE(R))
&&\text{by Theorem~\ref{thm:scddd}} \\
&\equiv Q \leftand R
&&\text{by induction hypothesis} \\
&\equiv P.
\end{align*}
In the second case, again by Theorem~\ref{thm:scddd}, $P$ has a unique dd and 
no cd. So we have that
\begin{align*}
\invs^*(\SE(P)) &\equiv \invs^*(\dd_1(\SE(P))\sub{\triangle}{\fa}) \leftor
  \invs_*(\dd_2(\SE(P)))
&&\text{by~\eqref{eq:g4}} \\
&\equiv \invs^*(\SE(Q)) \leftor \invs^*(\SE(R))
&&\text{by Theorem~\ref{thm:scddd}} \\
&\equiv Q \leftor R
&&\text{by induction hypothesis} \\
&\equiv P.
\end{align*}

Finally, we prove the theorem's statement by making a case distinction on the
grammatical category of $P$. If $P$ is a $\tr$-term, then $\SE(P)$ has only
$\tr$-leaves and hence $\invs(\SE(P)) \equiv \invs^\tr(\SE(P)) \equiv P$,
where the first identity is by definition~\eqref{eq:g5}
of $\invs$ and the second identity was shown
above. If $P$ is an $\fa$-term, then $\SE(P)$ has only $\fa$-leaves and
hence $\invs(\SE(P)) \equiv \invs^\fa(\SE(P)) \equiv P$, where the first
identity is by definition~\eqref{eq:g5} of $\invs$ and the second one
was shown above. 
If $P$ is a $\tr$-$*$-term, then it has both $\tr$ and $\fa$-leaves and hence,
letting $P \equiv Q \leftand R$,
\begin{align*}
\invs(\SE(P)) &\equiv \invs^\tr(\tsd_1(\SE(P))\sub{\triangle}{\tr}) \leftand
  \invs^*(\tsd_2(\SE(P)))
&&\text{by \eqref{eq:g5}} \\
&\equiv \invs^\tr(\SE(Q)) \leftand \invs^*(\SE(R))
&&\text{by Theorem~\ref{thm:stsd}} \\
&\equiv Q \leftand R
&&\text{as shown above} \\
&\equiv P,
\end{align*}
which completes the proof.
\end{proof}

\begin{theorem}[Completeness of \SCLe\ for closed terms]
\label{thm:sclcpl}
For all $P,Q\in\SP$,
\[\SCLe\vdash  P=Q \quad\iff\quad
\Mse\models P=Q.\]
\end{theorem}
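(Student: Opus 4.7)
The plan is to combine the three main technical results already stated, namely soundness (Theorem~\ref{thm:sclsnd}), the existence and correctness of normal forms (Theorem~\ref{thm:nfs}), and the inversion of $se$ on $\SNF$ (Theorem~\ref{thm:sclinv}). The $(\Rightarrow)$ direction of the biconditional is exactly soundness: if $\SCLe\vdash P=Q$, then $\Mse\models P=Q$ by Theorem~\ref{thm:sclsnd}, and since $\llbracket R\rrbracket^{\Mse}=se(R)$ for every $R\in\SP$, no additional work is needed there.

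For the non-trivial direction $(\Leftarrow)$, I would assume $\Mse\models P=Q$, which by the preceding observation amounts to $se(P)=se(Q)$. First I would apply Theorem~\ref{thm:nfs} to both $P$ and $Q$ to obtain normal forms $\nfs(P),\nfs(Q)\in\SNF$ with
\[\SCLe\vdash P=\nfs(P)\qquad\text{and}\qquad\SCLe\vdash Q=\nfs(Q).\]
Next I would invoke soundness (Theorem~\ref{thm:sclsnd}) on these two derivations to transfer the equalities to the semantic side, obtaining $se(P)=se(\nfs(P))$ and $se(Q)=se(\nfs(Q))$. Combined with the hypothesis $se(P)=se(Q)$, this yields $se(\nfs(P))=se(\nfs(Q))$.

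Now the key step is to apply Theorem~\ref{thm:sclinv} to both normal forms: since $\nfs(P),\nfs(Q)\in\SNF$, we have $\invs(se(\nfs(P)))\equiv\nfs(P)$ and $\invs(se(\nfs(Q)))\equiv\nfs(Q)$. Because $se(\nfs(P))=se(\nfs(Q))$ applies $\invs$ to the same tree, we conclude $\nfs(P)\equiv\nfs(Q)$, i.e., they are literally the same $\SigSCL$-term. Then by transitivity and symmetry of the equational logic,
\[\SCLe\vdash P=\nfs(P)\equiv\nfs(Q)=Q,\]
which closes the argument.

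The proof itself is essentially a three-line chain of invocations, so the only real effort is conceptual: the whole weight of the completeness theorem sits inside Theorem~\ref{thm:nfs} (every closed term provably reduces to a normal form) and Theorem~\ref{thm:sclinv} (normal forms are uniquely determined by their $se$-trees, via the decomposition machinery of Section~\ref{subsec:tree}). The main obstacle, which is handled by those earlier results rather than by this proof, is ensuring that the normal-form grammar is rigid enough that $se$ is injective on it; once that is in hand, the completeness statement follows immediately, as above.
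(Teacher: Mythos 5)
Your proposal is correct and follows exactly the paper's own argument: soundness for the forward direction, and for the converse the chain normalise (Theorem~\ref{thm:nfs}), transfer to $se$-images via soundness, and invert on $\SNF$ (Theorem~\ref{thm:sclinv}) to get syntactic identity of the normal forms. No differences worth noting.
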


\begin{proof}
$\Rightarrow$ follows from Theorem~\ref{thm:sclsnd}.
For $\Leftarrow$, assume $\Mse\models P=Q$, thus $\SE(P) = \SE(Q)$. 
By Theorem~\ref{thm:nfs}, $P$ is derivably equal to an $\SNF$-term
$P'$, i.e., $\EqFSCL \vdash P = P'$, and  $Q$ is derivably equal to an
$\SNF$-term $Q'$, i.e., $\EqFSCL \vdash Q = Q'$. 
By Theorem~\ref{thm:sclsnd}, $\SE(P) = \SE(P')$ and $\SE(Q) = \SE(Q')$, 
so $\invs(se(P'))\equiv\invs(se(Q'))$. By 
Theorem~\ref{thm:sclinv} it follows that 
$P'\equiv Q'$ and hence $\EqFSCL \vdash P' = Q'$, and thus $\EqFSCL \vdash P = Q$.
\end{proof}

\section{Short-circuit logic, evaluation strategies, and side effects}
\label{sec:4}
We consider Hoare's \emph{conditional}, a ternary connective that can be used for defining 
the connectives of \SigSCL\ (Section~\ref{subsec:4.1}).
Then we recall the definition of free short-circuit logic (\FSCL) that was published earlier 
and give a partial solution of an open question (Section~\ref{subsec:4.2}).
Next we consider some other evaluation strategies leading to short-circuit logics that identify
more sequential propositional statements than \FSCL\ does (Section~\ref{subsec:4.3}).
Finally, we briefly discuss \emph{side effects} (Section~\ref{subsec:effects}).

\subsection{Hoare's conditional connective}
\label{subsec:4.1}
In 1985, Hoare introduced the \emph{conditional} (\cite{Hoa85}), a
ternary connective with notation
\[x\lef y \rig z.\]
A more common expression for the conditional $x\lef y\rig z$
is
\[
\texttt{if }y\texttt{ then }x\texttt{ else }z,
\]
which emphasises that $y$ is evaluated \emph{first}, and depending
on the outcome of this partial evaluation, either $x$ or $z$ is evaluated,
which determines the evaluation result. So, the evaluation strategy
prescribed by this form of if-then-else is a prime example of a sequential
evaluation strategy.
In order to reason algebraically with conditional expressions, 
Hoare's more `operator like' notation $x\lef y\rig z$ seems indispensable.
In~\cite{Hoa85} an equational axiomatisation of 
propositional logic is provided that only uses the conditional. Furthermore it is 
described how the binary connectives and negation are expressed in his set-up
(however, the sequential nature of the conditional's
evaluation is not discussed in this paper).
This axiomatisation over the signature $\SigCP=\{\_\lef\_\rig\_\,,\tr,\fa,a\mid a\in A\}$
consists of eleven axioms and includes the four
axioms in Table~\ref{CP}, and some more axioms, for example 
\[x\lef y\rig(z\lef y \rig w)=x\lef y\rig w.\]
The four axioms in Table~\ref{CP}, named CP (for Conditional Propositions),
establish a complete axiomatisation of \emph{free valuation congruence}
defined in~\cite{BP10}.

\begin{table}
{ \small
\centering
\rule{1\textwidth}{.4pt}
\begin{align*}
\label{cp1}
\tag{CP1} x \lef \tr \rig y &= x\\
\label{cp2}\tag{CP2}
x \lef \fa \rig y &= y\\
\label{cp3}\tag{CP3}
\tr \lef x \rig \fa  &= x\\
\label{cp4}\tag{CP4}
\qquad
    x \lef (y \lef z \rig u)\rig v &= 
	(x \lef y \rig v) \lef z \rig (x \lef u \rig v)
\end{align*}
\hrule
}
\caption{The set \CP\ of axioms for proposition algebra}
\label{CP}
\end{table}

By extending the definition of the function $se$ 
(Definition~\ref{def:se}) to closed terms over $\SigCP$ with
\[se(P\lef Q\rig R)=se(Q)[\tr\mapsto se(P),\fa\mapsto se(R)],\]
we can now characterise the completeness of \CP\ (mentioned above) by
\begin{equation}
\label{**}
\text{For all closed terms $P,Q$ over \SigCP, $\CP\vdash P=Q\iff se(P)=se(Q)$.}
\end{equation}
A simple and concise proof of this fact is recorded in~\cite[Thm.2.11]{BP15} 
and repeated in Appendix~\ref{app:cp}.

With the conditional connective and the constants \tr\ and \fa, 
the sequential connectives prescribing short-circuit evaluation are definable:
\begin{align}
\label{defneg}
\neg x &=\fa\lef x\rig \tr,\\
\label{defand}
x\leftand y &=y\lef x\rig\fa,\\
\label{defor}
x\leftor y &=\tr\lef x\rig y.
\end{align}
Observe that these equations satisfy the extended definition of 
the function $se$, that is,
$se(\neg P)=se(\fa\lef P\rig\tr)$, $se(P\leftand Q)=se(Q\lef P\rig\fa)$, and $se(P\leftor Q)=se(\tr\lef P\rig Q)$.

Thus, the axioms in Table~\ref{CP} combined with equations~\eqref{defneg}-\eqref{defor}, say 
$\CPandneg$, axiomatise
equality of evaluation trees for closed terms over the enriched signature
$\SigCP\cup\SigSCL$.

\subsection{The definition of free short-circuit logic}
\label{subsec:4.2}
In~\cite{BP12a,BPS13} a set-up is provided for defining short-circuit 
logics in a generic way with help of the
conditional by restricting the enriched language of $\CPandneg$ to the signature $\SigSCL$. 
The conditional connective is declared as a \emph{hidden operator}.

Intuitively, a
short-circuit logic is a logic that implies all
consequences of some \CP-axiomatisation that can be expressed in the signature
$\{\tr,\neg,\leftand\}$. 
The definition
below uses the export-operator ${\export}$ of \emph{Module
algebra}~\cite{BHK90} to express this in a concise way:
in module algebra, $S\export X$  is the operation that 
exports the signature $S$ from module $X$ while declaring 
other signature elements hidden. 

\begin{definition}
\label{def:SCL}
A \textbf{short-circuit logic}
is a logic that implies the consequences
of the module expression
\begin{align*}
\SCL=\{\tr,\neg,\leftand\}\export(&\CP\cup\{\eqref{defneg},\eqref{defand}\}).
\end{align*}
\end{definition}

As a first example, $~\SCL\vdash \neg\neg x=x$~ 
can be formally proved as follows:
\begin{align}
\nonumber
\neg \neg x&=\fa\lef(\fa\lef x\rig\tr)\rig\tr
&&\text{by \eqref{defneg}}\\
\nonumber
&=(\fa\lef\fa\rig\tr)\lef x\rig(\fa\lef\tr\rig\tr)
&&\text{by~\eqref{cp4}}\\
\nonumber
&=\tr\lef x\rig\fa
&&\text{by~\eqref{cp2}, \eqref{cp1}}\\
\label{eq:dns}
&=x.
&&\text{by~\eqref{cp3}}
\end{align}

The constant \fa\
and the connective ${\leftor}$
are not in the exported signature of \SCL,
but can be easily added to \SCL\ by their defining axioms,
just as was done in \SCLe:
\[\SCL\vdash \neg\tr=\fa\text{ because }
\CP\cup\{\eqref{defneg}\}\vdash\neg\tr=\fa\lef \tr\rig\tr
=\fa,\]
and $\SCL\vdash x\leftor y=\neg(\neg x\leftand
\neg y)$ because in $\CP\cup\{\eqref{defneg},\eqref{defand}\}$ one can derive
\begin{align*}
\neg(\neg x\leftand
\neg y)&=\fa\lef(\neg y\lef[\fa\lef x\rig\tr]\rig\fa)\rig\tr
&&\text{by definition}\\
&=\fa\lef(\fa\lef x\rig\neg y)\rig\tr
&&\text{by~\eqref{cp4}, \eqref{cp2}, \eqref{cp1}}\\
&=(\fa\lef\fa\rig\tr)\lef x\rig(\fa\lef\neg y\rig\tr)
&&\text{by~\eqref{cp4}}\\
&=\tr\lef x\rig y.
&&\text{by~\eqref{cp2}, \eqref{defneg}, \eqref{eq:dns}}
\end{align*}

In~\cite{BP12a,BPS13}, \emph{free} short-circuit logic is defined
as the least identifying short-circuit logic:

\begin{definition}
\label{def:FSCL}
\textbf{Free short-circuit logic $(\FSCL)$}
is the short-circuit logic that implies no other 
consequences than those of the module expression \SCL.
\end{definition}

An open problem posed in~\cite{BP12a} is to prove that
for all terms $s$ and $t$ over $\SigSCL$, 
$\SCLe\vdash s=t\iff \FreeSCL\vdash s=t$.
The following result solves this problem for closed terms only. 

\begin{theorem} 
\label{thm:openq}
For all $P,Q\in\SP$, $\SCLe\vdash P=Q \iff \FSCL\vdash P=Q$.
\end{theorem}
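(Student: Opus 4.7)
The plan is to prove each direction using the completeness theorem for $\SCLe$ just established (Theorem~\ref{thm:sclcpl}) together with the closed-term completeness of \CP\ recorded as~\eqref{**}.

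For the $\Leftarrow$ direction, I would first exhibit $\Mse$ as a model of the module expression $\SCL$. Extend $\Mse$ with an interpretation of Hoare's conditional by setting
\[\llbracket P\lef Q\rig R\rrbracket^{\Mse}=se(Q)[\tr\mapsto se(P),\fa\mapsto se(R)],\]
so that the extended model has signature $\SigCP\cup\SigSCL$. A direct check, using identity~\eqref{id:rp} for repeated replacements, verifies each of the four axioms of $\CP$ in Table~\ref{CP} in this extended model; and the defining equations~\eqref{defneg}, \eqref{defand} (and also the derived $\fa=\neg\tr$ and $x\leftor y=\neg(\neg x\leftand\neg y)$ that are added to make $\FSCL$ cover all of $\SigSCL$) hold by comparing them clause by clause with Definition~\ref{def:se}. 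Thus after hiding the conditional, $\Mse$ satisfies every $\SigSCL$-consequence of the module expression $\SCL$, hence of $\FSCL$. Therefore $\FSCL\vdash P=Q$ implies $\Mse\models P=Q$, which by Theorem~\ref{thm:sclcpl} gives $\SCLe\vdash P=Q$.

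For the $\Rightarrow$ direction, assume $\SCLe\vdash P=Q$ with $P,Q\in\SP$. By soundness (Theorem~\ref{thm:sclsnd}), $se(P)=se(Q)$. Translate $P$ and $Q$ to closed $\SigCP$-terms $P^\dagger,Q^\dagger$ by iteratively applying the defining equations~\eqref{defneg}--\eqref{defor} (and $\fa=\neg\tr$) from the leaves upward. Because these defining equations were noted to satisfy the extended $se$ function, the translation preserves evaluation trees, so $se(P^\dagger)=se(P)=se(Q)=se(Q^\dagger)$. Since $P^\dagger$ and $Q^\dagger$ are closed $\SigCP$-terms, the closed-term completeness of $\CP$ recorded in~\eqref{**} gives $\CP\vdash P^\dagger=Q^\dagger$. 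Within the module $\CP\cup\{\eqref{defneg},\eqref{defand}\}$ (extended with the defining equations for $\fa$ and $\leftor$) the same defining equations allow $P\leftrightarrow P^\dagger$ and $Q\leftrightarrow Q^\dagger$ to be interderived, so $\SCL\vdash P=Q$ and hence $\FSCL\vdash P=Q$.

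The main subtlety lies in handling the module-algebra export operator rigorously: one has to be sure that a derivation within the full module, whose conclusion happens to be an equation in the exported signature, counts as a derivation in $\FSCL$. This is a standard property of $\export$ in module algebra~\cite{BHK90}, and it is precisely what is needed to convert the derivation $\SCL\vdash P=Q$ of the $\Rightarrow$ direction into $\FSCL\vdash P=Q$. The remaining work is bookkeeping: checking validity of the four $\CP$-axioms in the extended $\Mse$, and routine induction on term structure to verify that the translation $P\mapsto P^\dagger$ preserves $se$.
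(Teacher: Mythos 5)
Your proposal is correct in substance, and its $\Leftarrow$ direction coincides with the paper's: both amount to observing that the extended $se$ is sound for $\CP$ together with the defining equations \eqref{defneg}--\eqref{defor} (the soundness half of what the paper records around equivalence~\eqref{**}), so that $\FSCL\vdash P=Q$ forces $se(P)=se(Q)$, after which Theorem~\ref{thm:sclcpl} finishes. Your $\Rightarrow$ direction, however, is genuinely different. The paper argues syntactically: each of the ten \SCLe-axioms is itself derivable in \FSCL\ (it exhibits the derivation of~\eqref{SCL9} as a sample), whence every \SCLe-derivation can be replayed in \FSCL. You instead take a semantic detour: from $\SCLe\vdash P=Q$ you pass via soundness to $se(P)=se(Q)$, translate $P,Q$ to $\SigCP$-terms preserving $se$, invoke the completeness of \CP\ in~\eqref{**}, and pull the resulting derivation back through the defining equations. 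Your route spares you ten explicit conditional-algebra derivations, but it is intrinsically limited to closed terms (both soundness-plus-inversion and~\eqref{**} are closed-term results), whereas the paper's axiom-by-axiom argument actually yields $\SCLe\vdash s=t\Rightarrow\FSCL\vdash s=t$ for \emph{open} terms as well --- which is why only the $\Leftarrow$ direction is responsible for the restriction to $\SP$ in the theorem and in the open problem of Section~\ref{subsec:4.2}.

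One technical point needs repair. You propose to ``extend $\Mse$'' with the conditional while keeping the domain $\Dom(\Mse)=\{se(P)\mid P\in\SP\}$, but this domain is not closed under the operation $(X,Y,Z)\mapsto Y[\tr\mapsto X,\fa\mapsto Z]$: for distinct atoms $a,b,c$ the tree $(\tr\unlhd b\unrhd\fa)\unlhd a\unrhd(\tr\unlhd c\unrhd\fa)$ is obtained by applying the conditional to elements of $\Dom(\Mse)$, yet it is not the $se$-image of any $\SigSCL$-term (its two depth-one subtrees are distinct nontrivial trees each containing both $\tr$ and $\fa$, which the decomposition analysis of Section~\ref{subsec:tree} rules out). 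So the ``extended $\Mse$'' is not an algebra as stated. The fix is immediate --- take the domain to be all of \NT\ (equivalently, the $se$-image of closed terms over $\SigCP\cup\SigSCL$), of which $\Mse$ is the $\SigSCL$-generated part, or simply argue with the function $se$ on closed terms as the paper does --- but as written that step does not go through.
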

\begin{proof}
($\Rightarrow$) All axioms of $\SCLe$ are derivable in \FSCL. As an example,
we derive~\eqref{SCL9}:
\begin{align*}
(x\leftand\fa)\leftor y
&=\tr\lef (\fa\lef x\rig\fa)\rig y
&&\text{by definition}\\
&=(\tr\lef \fa\rig y)\lef x\rig(\tr\lef \fa\rig y)
&&\text{by~\eqref{cp4}}\\
&=(y\lef\tr\rig\fa)\lef x\rig(y\lef\tr\rig\fa)
&&\text{by~\eqref{cp1}, \eqref{cp2}}\\
&=y\lef (\tr\lef x\rig\tr)\rig \fa
&&\text{by~\eqref{cp4}}\\
&=(x\leftor \tr)\leftand y.
&&\text{by definition}
\end{align*}
($\Leftarrow$) Assume $\FSCL\vdash P=Q$. By the extended definition of $se$
we find with equivalence~\eqref{**} that
$se(P)=se(Q)$. By Theorem~\ref{thm:sclcpl}, $\SCLe\vdash P=Q$.
\end{proof}
\subsection{More short-circuit logics and evaluation strategies}
\label{subsec:4.3}
Following Definition~\ref{def:FSCL}, variants of \FSCL\ that identify more sequential propositions
can be easily defined.
As an example, adding in \SCL's definition to $\CP$ the two equation schemes
\begin{equation}
\label{rpaxioms}
(x\lef a\rig y) \lef a\rig z=(x\lef a\rig x)\lef a\rig z
\quad\text{and}\quad
x\lef a\rig (y\lef a\rig z)=x\lef a\rig (z\lef a\rig z)
\end{equation}
where $a$ ranges over $A$ defines \emph{repetition-proof} SCL (\RPSCL), in which
subsequent atomic evaluations of $a$ yield the same 
atomic evaluation results.
For example,
\begin{equation}
\label{rp:ex}
\RPSCL\vdash a\leftand (a\leftor x)=a\leftand a.
\end{equation}
For \RPSCL\ there exist natural examples (in Section~\ref{subsec:effects}, we sketch one briefly). 
Evaluation trees for \RPSCL\ can be defined by a transformation of $se$-trees
according to the axiom schemes~\eqref{rpaxioms}, see~\cite{BP15}.

For another example, adding in \SCL's definition to $\CP$ the two axioms
\begin{equation}
\label{stataxioms}
x\lef y\rig (z\lef u\rig(v\lef y\rig w))=x\lef y\rig (z\lef u\rig w)
\quad\text{and}\quad
\fa\lef x\rig \fa=\fa
\end{equation}
defines \emph{static} SCL (see~\cite{BPS13}),
which is a sequential form of propositional logic. 
Note that the first axiom in~\eqref{stataxioms}
and those of \CP\ (in Table~\ref{CP})
imply the axioms schemes~\eqref{rpaxioms} (set $u=\fa$ and $y=a$, $y=\fa\lef a\rig\tr$, respectively).

\bigskip

Another sequential evaluation strategy is so-called
\emph{full sequential evaluation}, which evaluates \emph{all} atoms in a compound
statement from left to right. We use the notations $x\fulland y$ and $x\fullor y$
for the connectives that prescribe full sequential evaluation. 
The setting with only full sequential connectives (thus, without short-circuit connectives)
can be called `free full sequential logic', and an axiomatisation is provided in~\cite{Daan}.
This axiomatisation also comprises axioms~\eqref{SCL1} and~\eqref{SCL3}, and a typical axiom is
\begin{equation}
\label{full}
x\fulland\fa = \fa\fulland x.
\end{equation} 
With the tool \emph{Prover9}~\cite{BirdBrain} it follows that~\eqref{SCL1} is
derivable, and with the tool \emph{Mace4}~\cite{BirdBrain}
it follows that the remaining axioms in~\cite{Daan} are independent (even if $|A|=1$).
Furthermore, both \eqref{SCL1} and~\eqref{SCL3} become derivable if the axiom $x\fulland\tr=x$ is 
replaced by $x\fullor\fa=x$, and the remaining axioms are again independent (even if $|A|=1$).

As is also noted in~\cite{Daan}, the `full sequential connectives' can be defined in terms of 
$\leftand$ and $\leftor$, and the constants \tr\ and \fa: 
\begin{align*}
x\fulland y &= (x\leftor (y\leftand\fa))\leftand y
\quad\text{and}\quad
x\fullor y = (x\leftand (y\leftor\tr))\leftor y.
\end{align*}
Hence, full sequential evaluation can be seen as a special case of short-circuit evaluation.
For example, it is a simple exercise to derive the SCL-translation of~\eqref{full} in $\SCLe$.

\bigskip

We finally note that a perhaps interesting variant of \FSCL\ 
is obtained by leaving out the constants \tr\ and \fa. 
Such a variant could be motivated by the fact that these constants are
usually absent in conditions in imperative programs.
However, in most programming languages the effect of
$\tr$ in a condition can be mimicked by a void equality 
test such as \texttt{(1==1)}, 
or in an expression-evaluated
programming language such as {Perl}, simply by the number \texttt{1}
(or any other non-zero number). 
In ``\FSCL\ without \tr\ and \fa'' the only 
\SCLe-axioms that remain are~\eqref{SCL2}, \eqref{SCL3}, and~\eqref{SCL7}, 
expressing duality and associativity.
Moreover, these axioms then yield a complete
axiomatisation of this restricted form of $se$-congruence.
Note that in this approach, connectives prescribing full sequential evaluation are
not definable, hence full sequential evaluation is not 
a special case of short-circuit evaluation.

However, we think that ``\SCL\ without \tr\ and \fa'' does not yield an appropriate 
point of view: in a sequential logic about truth and falsity 
one should be able to express the value \emph{true} itself.

\subsection{Side effects}
\label{subsec:effects}
Although side effects seem to be well understood in programming (see 
e.g., \cite{BW96,BW98,Nor97,KKS14}), they are often explained without a general
definition. In the following we consider side effects in
the context of the evaluation of propositional statements. 
The general question whether the sequential evaluation of a 
propositional statement has one or more side effects 
is context-dependent. 
Consider a toy programming language where assignments when evaluated as Boolean expressions
always yield \true\ and tests 
evaluate as expected. 
Some typical observations are these:
\begin{enumerate}\setlength\itemsep{-1mm}
\item
Consider the assignment $\texttt{(v:=5)}$ and observe its effect in the compound statements 
\[\texttt{(v:=5)} \texttt{ \&\& }\texttt{(v:=7)} \quad\text{and}\quad
\texttt{(v:=5)}\texttt{ \&\& } \texttt{(v==5)}.\]
In the first statement we cannot observe any side effect of the first assignment, i.e. 
changing it to assign a different value will never cause a different evaluation result, 
not even when the statement is embedded in a larger statement. 
We can say that the side effect of the first assignment is \emph{unobserved} in this context.

In the second compound statement however, changing the assigned value
will yield a different truth value for the compound statement and we can say that the side
effect of the assignment is \emph{observable} here. Note however that in a larger context
such as $\texttt{((}... \texttt{) \&\& (v:=6)) || (v:=6)}$ the side effect will again 
be unobserved.

\item
The side effect of the assignment \texttt{(v:=v+1)} is observable in a larger context,
as is that of \texttt{(v:=v-1)}. The side effect of the compound statement 
\texttt{(v:=v+1) \&\& (v:=v-1)} is however \emph{unobservable}, i.e., unobserved in all 
contexts. We can say that the side effects of these two assignments cancel out
provided these assignments occur adjacently.
\item
The question whether a test like $\texttt{(f(x)==5)}$ has an observable side effect
cannot be answered without examining the definition of the function \texttt{f}. 
Even if a programmer assumes that evaluating a call of \texttt{f} has one or more observable side 
effects, it is still possible to reason about the equivalence of compound statements 
containing this test.

\end{enumerate}
The above observations suggest that certain statements such as assignments and tests
are natural units for reasoning about side effects, and can be considered atomic when reasoning
about Boolean conditions as used in a programming language.
According to this view, $\FSCL$ preserves side effects of atoms in a very general sense
because it identifies only propositional statements with identical evaluation trees. 

The setting of short-circuit logic admits formal reasoning about side effects.
An example of such reasoning, 
building on observations 1 and 2 mentioned above,
is recorded in~\cite[Ex.7.2]{BP15}:
\begin{quote}
Assume 
atoms are of the form \texttt{($e$==$e'$)} and \texttt{(v:=$e$)}
with $\texttt v$ some  program variable and $e,e'$ arithmetical expressions over 
the integers that may contain $\texttt v$.
Furthermore, assume that \texttt{($e$==$e'$)} evaluates to \true\ if and only if
$e$ and $e'$
represent the same value, and \texttt{(v:=$e$)} always evaluates to 
\emph{true} with the effect that the value of $e$ is assigned to $\texttt v$.
Then all atoms satisfy the equation schemes~\eqref{rpaxioms}, and thus \RPSCL\ applies.\footnote{Of
  course, not all equations valid in this setting 
  follow from \RPSCL, e.g., $\RPSCL\not\vdash\texttt{(1==1)}=\tr$.
}
Note that the stronger identification $a\leftand a=a$ for all atoms $a$ is not valid: 
if $\texttt v$ has initial value 0 or 1, the statements
\[
(\texttt{(v:=v+1)}\leftand\texttt{(v:=v+1)})\leftand\texttt{(v==2)}
\quad\text{and}\quad
\texttt{(v:=v+1)}\leftand\texttt{(v==2)}
\]
evaluate to different results.
Finally, observe that for all initial values of \texttt{v} and for all $P\in\SP$,
\begin{equation}
\label{eq:perl}
\RPSCL\vdash \texttt{(v:=v+1)}\leftand (\texttt{(v:=v+1)}\leftor P)=
\texttt{(v:=v+1)}\leftand \texttt{(v:=v+1)},
\end{equation}
which agrees with the example in~\eqref{rp:ex}.
\end{quote}

We note that the set-up of our toy programming language suggests a sequential
variant of \emph{Dynamic Logic} (see, e.g.,~\cite{DynLog1})
in which assignments can be used both as tests and as programs.
Such a sequential variant could be appropriate for reasoning about side effects
and \RPSCL\ would be its underlying short-circuit logic.
However, if we assume that each assignment \texttt{(v:=$e$)}
evaluates to the Boolean value of $e$,
\RPSCL\ would not be the appropriate short-circuit logic.
For example, if in~\eqref{eq:perl} we take $P=\texttt{(v==0)}$ and set the initial value of 
\texttt{v} to $-2$, then
\[\texttt{(v:=v+1)}\leftand (\texttt{(v:=v+1)}\leftor \texttt{(v==0)})\]
evaluates to \true, while
$\texttt{(v:=v+1)}\leftand \texttt{(v:=v+1)}$ evaluates to \false. Hence, under
this interpretation of assignments, \FSCL\ is the appropriate short-circuit logic.

\section{Conclusions}
\label{sec:Conc}

In this paper we discussed \emph{free short-circuit logic} (\FSCL), following 
earlier research reported on in~\cite{BPS13,Daan,BP12a}.
In \FSCL, intermediate evaluation results are not memorised 
throughout the evaluation of a propositional statement, so evaluations of 
distinct occurrences of an atom may yield different truth values. 
The example on the condition a pedestrian evaluates before crossing a road with two-way traffic 
provides a clear motivation for this specific type of short-circuit evaluation.
The use of dedicated names and notation for connectives that \emph{prescribe} short-circuit 
evaluation is important in our approach
(in the area of computer science, one finds a wide variety of names and notations 
for short-circuit conjunction, such as ``logical and'' and ``conditional and'').
The symbols ${\leftand}$ and $\leftor$, as introduced in~\cite{BBR95} for four-valued logic
and named (left first) sequential conjunction and sequential 
disjunction, provide a convenient solution in this case.
Furthermore, this paper also contains an interesting exposition of the left-sequential connectives 
for the case of three-valued logics.

We note that defining the short-circuit evaluation function $se$ might have been avoided 
by defining the domain of the
short-circuit evaluation model $\Mse$ directly from the constants and connectives of \SigSCL.
We believe, however, that the function $se$ captures the operational nature of short-circuit evaluation
in a simple and elegant manner and should therefore be made explicit.

A last comment on the ten equational axioms that we selected for our axiomatisation of \FSCL\
(in~\cite{BPS13,Daan,BP12a}, a slightly different set of axioms is used).
Although evaluation trees provide an elegant way to model short-circuit evaluation 
in the presence of side effects, 
the equational axioms of \SCLe\ seem to grasp the nature of \FSCL-identities in a more 
direct way, and each of these axioms embodies a simple idea.
This is in particular the case for~\eqref{SCL1} and~\eqref{SCL3}, and that is why we kept 
these axioms in our definition of this axiom set (although they  can be derived from the remaining ones).

When it comes to reasoning about side effects, we subscribe to Parnas' view~\cite{Par10}:
\begin{quote}
\emph{Most mainline methods disparage side effects as a bad programming practice. 
Yet even in well-structured, reliable software, many components do have side 
effects; side effects are very useful in practice. It is time to investigate 
methods that deal with side effects as the normal case.}
\end{quote}
We hope that this paper establishes a step in this direction.

\bigskip

\emph{Future work.} Some specific questions are these: 
\begin{itemize}\setlength\itemsep{-1mm}
\item
Can Theorem~\ref{thm:sclcpl}, i.e.,
$\SCLe\vdash  s=t \iff\Mse\models s=t$ 
be generalised to open terms? 
(Note that $\Rightarrow$ is proved in Theorem~\ref{thm:sclsnd}.)
\item
Is there a simpler proof of Theorem~\ref{thm:sclcpl} than the one presented in this paper?
\item
Can the open problem from Section~\ref{subsec:4.2}, i.e.,
$\SCLe\vdash s=t \iff~ \FSCL\vdash s=t$,
be solved? (Note that Theorem~\ref{thm:openq} solves this for closed terms.)
\end{itemize}
Furthermore, we aim to 
provide elegant and independent equational axiomatisations for some other variants of SCL defined 
in~\cite{BP12a,BPS13}, or proofs of their non-existence when hidden operators are not involved.
And, last but not least, we aim to find fruitful applications for \FSCL\ and the other
SCLs we defined.

\paragraph*{Acknowledgement.}
We thank an anonymous reviewer and Inge Bethke for useful suggestions and discussion.

\appendix
\small
\section{Independence, normalisation, and \CP\ and evaluation trees}
\subsection{Independence proofs}
\label{App:scl}
We prove independence of the axioms of $\SCLi=\SCLe\setminus\{\eqref{SCL1},\eqref{SCL3}\}$.
All independence models that we use for this purpose were generated by \emph{Mace4}, 
see~\cite{BirdBrain}.
The independence of axiom~\eqref{SCL10} was shown in the proof of 
Theorem~\ref{thm:indepSCLe'}.
For all models \M\ defined below, we assume $\llbracket \fa\rrbracket^\M=0$ and 
$\llbracket \tr\rrbracket^\M=1$.
Furthermore, observe that all refutations below use at most one atom.

\medskip

A model \M\ for $\SCLi\setminus\{\eqref{SCL2}\}$
with domain $\{0,1\}$ that refutes $\fa \leftors\fa=\neg(\neg\tr\leftands\neg\tr)$ 
is this one:
\[
\begin{array}{r@{\hspace{5pt}}|@{\hspace{5pt}}c}
\neg\\\hline\\[-3mm]
0&1\\
1&1
\end{array}
\qquad\qquad
\begin{array}{r@{\hspace{5pt}}|@{\hspace{5pt}}c@{\hspace{5pt}}c@{\hspace{5pt}}c@{\hspace{5pt}}c@{\hspace{5pt}}c}
\leftands&
0&1\\\hline\\[-3mm]
0&
0&0\\
1&
0&1
\end{array}
\qquad\qquad
\begin{array}{r@{\hspace{5pt}}|@{\hspace{5pt}}c@{\hspace{5pt}}c@{\hspace{5pt}}c@{\hspace{5pt}}c@{\hspace{5pt}}c}
\leftors&
0&1\\\hline\\[-3mm]
0&
0&0\\
1&
1&0
\end{array}
\] 

A model \M\ for $\SCLi\setminus\{\eqref{SCL4}\}$
with domain $\{0,1\}$ that refutes $\tr \leftands\fa=\fa$ is the following:
\[
\begin{array}{r@{\hspace{5pt}}|@{\hspace{5pt}}c}
\neg\\\hline\\[-3mm]
0&0\\
1&1
\end{array}
\qquad\qquad
\begin{array}{r@{\hspace{5pt}}|@{\hspace{5pt}}c@{\hspace{5pt}}c@{\hspace{5pt}}c@{\hspace{5pt}}c@{\hspace{5pt}}c}
\leftands&
0&1\\\hline\\[-3mm]
0&
0&0\\
1&
1&1
\end{array}
\qquad\qquad
\begin{array}{r@{\hspace{5pt}}|@{\hspace{5pt}}c@{\hspace{5pt}}c@{\hspace{5pt}}c@{\hspace{5pt}}c@{\hspace{5pt}}c}
\leftors&
0&1\\\hline\\[-3mm]
0&
0&0\\
1&
1&1
\end{array}
\] 

A model \M\ for $\SCLi\setminus\{\eqref{SCL5}\}$ with domain $\{0,1\}$ that refutes 
$\tr \leftors\fa=\tr$ is this one:
\[
\begin{array}{r@{\hspace{5pt}}|@{\hspace{5pt}}c}
\neg\\\hline\\[-3mm]
0&0\\
1&0
\end{array}
\qquad\qquad
\begin{array}{r@{\hspace{5pt}}|@{\hspace{5pt}}c@{\hspace{5pt}}c@{\hspace{5pt}}c@{\hspace{5pt}}c@{\hspace{5pt}}c}
\leftands&
0&1\\\hline\\[-3mm]
0&
0&0\\
1&
0&1
\end{array}
\qquad\qquad
\begin{array}{r@{\hspace{5pt}}|@{\hspace{5pt}}c@{\hspace{5pt}}c@{\hspace{5pt}}c@{\hspace{5pt}}c@{\hspace{5pt}}c}
\leftors&
0&1\\\hline\\[-3mm]
0&
0&0\\
1&
0&0
\end{array}
\] 

A model \M\ for $\SCLi\setminus\{\eqref{SCL6}\}$ 
with domain $\{0,1,2\}$ and $\llbracket a\rrbracket^\M=2$
that refutes $\fa \leftands a=\fa$ is the following:
\[
\begin{array}{r@{\hspace{5pt}}|@{\hspace{5pt}}c}
\neg\\\hline\\[-3mm]
0&1\\
1&0\\
2&2
\end{array}
\qquad\qquad
\begin{array}{r@{\hspace{5pt}}|@{\hspace{5pt}}c@{\hspace{5pt}}c@{\hspace{5pt}}c@{\hspace{5pt}}c@{\hspace{5pt}}c}
\leftands&
0&1&2\\\hline\\[-3mm]
0&
0&0&2\\
1&
0&1&2\\
2&
0&2&2
\end{array}
\qquad\qquad
\begin{array}{r@{\hspace{5pt}}|@{\hspace{5pt}}c@{\hspace{5pt}}c@{\hspace{5pt}}c@{\hspace{5pt}}c@{\hspace{5pt}}c}
\leftors&
0&1&2\\\hline\\[-3mm]
0&
0&1&2\\
1&
1&1&2\\
2&
2&1&2
\end{array}
\] 

A model \M\ for $\SCLi\setminus\{\eqref{SCL7}\}$
with domain $\{0,1,2,3\}$ and $\llbracket a\rrbracket^\M=2$
that refutes $(a\leftands \fa) \leftands a = a\leftands (\fa \leftands a)$
is this one:
\[
\begin{array}{r@{\hspace{5pt}}|@{\hspace{5pt}}c}
\neg\\\hline\\[-3mm]
0&1\\
1&0\\
2&2\\
3&3
\end{array}
\qquad\qquad
\begin{array}{r@{\hspace{5pt}}|@{\hspace{5pt}}c@{\hspace{5pt}}c@{\hspace{5pt}}c@{\hspace{5pt}}c@{\hspace{5pt}}c}
\leftands&
0&1&2&3\\\hline\\[-3mm]
0&
0&0&0&0\\
1&
0&1&2&3\\
2&
3&2&0&3\\
3&
3&3&2&3
\end{array}
\qquad\qquad
\begin{array}{r@{\hspace{5pt}}|@{\hspace{5pt}}c@{\hspace{5pt}}c@{\hspace{5pt}}c@{\hspace{5pt}}c@{\hspace{5pt}}c}
\leftors&
0&1&2&3\\\hline\\[-3mm]
0&
0&1&2&3\\
1&
1&1&1&1\\
2&
2&3&1&3\\
3&
3&3&2&3
\end{array}
\] 

A model \M\ for $\SCLi\setminus\{\eqref{SCL8}\}$
with domain $\{0,1,2,3\}$ and $\llbracket a\rrbracket^\M=2$
that refutes $\neg a\leftands\fa = a\leftands\fa$ 
is the following:
\[
\begin{array}{r@{\hspace{5pt}}|@{\hspace{5pt}}c}
\neg\\\hline\\[-3mm]
0&1\\
1&0\\
2&3\\
3&2
\end{array}
\qquad\qquad
\begin{array}{r@{\hspace{5pt}}|@{\hspace{5pt}}c@{\hspace{5pt}}c@{\hspace{5pt}}c@{\hspace{5pt}}c@{\hspace{5pt}}c}
\leftands&
0&1&2&3\\\hline\\[-3mm]
0&
0&0&0&0\\
1&
0&1&2&3\\
2&
2&2&2&2\\
3&
3&3&3&3
\end{array}
\qquad\qquad
\begin{array}{r@{\hspace{5pt}}|@{\hspace{5pt}}c@{\hspace{5pt}}c@{\hspace{5pt}}c@{\hspace{5pt}}c@{\hspace{5pt}}c}
\leftors&
0&1&2&3\\\hline\\[-3mm]
0&
0&1&2&3\\
1&
1&1&1&1\\
2&
2&2&2&2\\
3&
3&3&3&3
\end{array}
\] 

A model \M\ for $\SCLi\setminus\{\eqref{SCL9}\}$ 
with domain $\{0,1,2,3,4\}$ and $\llbracket a\rrbracket^\M=2$
that refutes $(a \leftands \fa) \leftors a =  (a \leftors \tr)\leftands a$
is this one:
\[
\begin{array}{r@{\hspace{5pt}}|@{\hspace{5pt}}c}
\neg\\\hline\\[-3mm]
0&1\\
1&0\\
2&2\\
3&4\\
4&3
\end{array}
\qquad\qquad
\begin{array}{r@{\hspace{5pt}}|@{\hspace{5pt}}c@{\hspace{5pt}}c@{\hspace{5pt}}c@{\hspace{5pt}}c@{\hspace{5pt}}c}
\leftands&
0&1&2&3&4\\\hline\\[-3mm]
0&
0&0&0&0&0\\
1&
0&1&2&3&4\\
2&
3&2&2&3&2\\
3&
3&3&3&3&3\\
4&
3&4&4&3&4
\end{array}
\qquad\qquad
\begin{array}{r@{\hspace{5pt}}|@{\hspace{5pt}}c@{\hspace{5pt}}c@{\hspace{5pt}}c@{\hspace{5pt}}c@{\hspace{5pt}}c}
\leftors&
0&1&2&3&4\\\hline\\[-3mm]
0&
0&1&2&3&4\\
1&
1&1&1&1&1\\
2&
2&4&2&2&4\\
3&
3&4&3&3&4\\
4&
4&4&4&4&4
\end{array}
\] 

\subsection{Correctness of the normalisation function}
\label{app:nf}
In order to prove that $\nfs: \ST \to \SNF$ is indeed a normalisation function
we need to prove that for all $\SCL$-terms $P$, $\nfs(P)$ terminates, $\nfs(P)
\in \SNF$ and $\EqFSCL \vdash \nfs(P) = P$. To arrive at this result, we prove
several intermediate results about the functions $\nfs^n$ and $\nfs^c$ in the
order in which their definitions were presented in Section~\ref{subsec:snf}. For
the sake of brevity we will not explicitly prove that these functions
terminate. To see that each function terminates consider that a termination
proof would closely mimic the proof structure of the lemmas dealing with the
grammatical categories of the images of these functions.

\begin{lemma}
\label{lem:ptpfs}
For all $P\in P^\fa$ and $Q\in P^\tr$, $\EqFSCL \vdash P = P \leftands
x$ and $\EqFSCL \vdash Q = Q \leftors x$.
\end{lemma}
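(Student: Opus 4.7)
The plan is to prove both statements by structural induction on the grammar of $P^\fa$ and $P^\tr$ respectively. In fact, the two claims are dual to each other: setting $x^{\dl}=x$, the dual of $P=P\leftand x$ is $P^{\dl}=P^{\dl}\leftor x$, and a quick check of the two grammar productions shows that $(\;)^{\dl}$ maps $P^\fa$-terms bijectively to $P^\tr$-terms (\fa\ dualises to \tr, and $(a\leftor P_1)\leftand P_2$ dualises to $(a\leftand P_1^{\dl})\leftor P_2^{\dl}$). Since $\EqFSCL$ satisfies the duality principle, I would prove the first claim and transfer the second by duality; alternatively one can just do two parallel structural inductions.

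For the first claim, the base case $P=\fa$ is immediate from \eqref{SCL6}: $\fa\leftand x=\fa$. For the inductive step $P=(a\leftor P_1)\leftand P_2$ with $P_1,P_2\in P^\fa$, I would first use associativity \eqref{SCL7} to get
\[P\leftand x = ((a\leftor P_1)\leftand P_2)\leftand x=(a\leftor P_1)\leftand(P_2\leftand x),\]
and then apply the induction hypothesis instantiated at $x$ to $P_2$ to replace $P_2\leftand x$ by $P_2$, yielding $P$.

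For the second claim, either one reads the dual proof off (so $\tr\leftor x=\tr$ by the dual of \eqref{SCL6}, and the inductive step uses the dual of \eqref{SCL7}, namely $(y\leftor z)\leftor x=y\leftor(z\leftor x)$), or one derives $\tr\leftor x=\tr$ directly from $\tr\leftor x\stackrel{\eqref{SCL3}}{=}\neg\neg(\tr\leftor x)\stackrel{\eqref{SCL2}}{=}\neg(\neg\tr\leftand\neg x)\stackrel{\eqref{SCL1}}{=}\neg(\fa\leftand\neg x)\stackrel{\eqref{SCL6}}{=}\neg\fa\stackrel{\eqref{SCL1},\eqref{SCL3}}{=}\tr$, and then handles the inductive case $Q=(a\leftand Q_1)\leftor Q_2$ by rewriting $Q\leftor x$ with the dual-associativity identity and invoking the induction hypothesis on $Q_2$.

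There is no real obstacle here: the whole argument is a two-line induction once associativity and the absorption of $x$ by $\fa$ under $\leftand$ (dually, by $\tr$ under $\leftor$) are in hand. The only point worth being explicit about is that the induction hypothesis must be used as a \emph{schema} in $x$, so that $P_2\leftand x=P_2$ can be invoked for the specific variable $x$ appearing on the right of the outermost $\leftand$.
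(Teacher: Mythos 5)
Your proof is correct and follows essentially the same route as the paper's: the base case is axiom~\eqref{SCL6}, the inductive step combines associativity~\eqref{SCL7} with the induction hypothesis applied to the second conjunct, and the second claim is obtained by duality. The extra remarks (the explicit check that duality maps $P^\fa$-terms to $P^\tr$-terms, and the note that the induction hypothesis is a schema in $x$) are sound but not needed beyond what the paper already states.
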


\begin{proof}
We prove both claims simultaneously by induction. In the base case we have
$\fa = \fa \leftands x$ by axiom~\eqref{SCL6}. The base case for the
second claim follows from that for the first claim by duality.

For the induction we have $(a \leftors P_1) \leftands P_2 = (a
\leftors P_1) \leftands (P_2 \leftands x)$ by the induction
hypothesis and the result follows from \eqref{SCL7}. For the second claim we
again appeal to duality.
\end{proof}

The equality we showed as an example in Lemma~\ref{lem:seqs} will prove
useful in this appendix, as will the following equalities, which also deal
with terms of the form $x \leftands \fa$ and $x \leftors \tr$.
In the sequel, we refer to the dual of axiom $(n)$ by $(n)'$.

\begin{lemma}
\label{lem:seqs2}
The following equations can all be derived from $\EqFSCL$.
\textup{
\begin{enumerate}\setlength\itemsep{-1mm}
\item
$(x \leftors (y \leftands \fa)) \leftands (z \leftands \fa) = (\neg
  x \leftors (z \leftands \fa)) \leftands (y \leftands \fa),$
  \label{eq:b2}
\item
$(x \leftands (y \leftors \tr)) \leftors (z \leftands \fa) 
= (x
  \leftors (z \leftands \fa)) \leftands (y \leftors \tr),$
  \label{eq:b3}
\item
$(x \leftors \tr) \leftands \neg y = \neg((x \leftors \tr) \leftands
  y),$
  \label{eq:b4}
\item
$(x \leftands (y \leftands (z \leftors \tr))) \leftors
  (w \leftands (z \leftors \tr)) = ((x \leftands y) \leftors w) \leftands
  (z \leftors \tr),$
  \label{eq:b5}
\item
$(x \leftors ((y \leftors \tr) \leftands (z \leftands \fa)))
  \leftands ((w \leftors \tr) \leftands (z \leftands \fa)) =
  ((x \leftands (w \leftors \tr)) \leftors (y \leftors \tr)) \leftands (z
  \leftands \fa),$
  \label{eq:b6}
\item
$(x \leftors ((y \leftors \tr) \leftands (z \leftands \fa)))
  \leftands (w \leftands \fa) = ((\neg x \leftands (y \leftors \tr))
  \leftors (w \leftands \fa)) \leftands (z \leftands \fa).$
  \label{eq:b7}
\end{enumerate}
}
\end{lemma}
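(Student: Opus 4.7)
The plan is to prove each of the six equations by equational manipulation in $\EqFSCL$, leveraging Lemma~\ref{lem:seqs}, the duality principle, and in particular the observation (derivable from \eqref{SCL2}, \eqref{SCL8}, \eqref{SCL1}, \eqref{SCL3}) that \[\neg(x\leftand\fa)=\neg(\neg x\leftand\fa)=\neg(\neg x\leftand\neg\tr)=x\leftor\tr,\] which together with \eqref{SCL9} lets us freely translate between $P\leftand\fa$ and $P\leftor\tr$ forms. I expect the first four identities to be short applications of existing axioms, while the last two will require assembling several of the earlier tools.

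For \emph{equation} (1), I would substitute $y \mapsto y\leftand\fa$ in Lemma~\ref{lem:seqs}. The nested right-hand conjunct $(y\leftand\fa)\leftand(z\leftand\fa)$ collapses to $y\leftand\fa$ by \eqref{SCL7} and \eqref{SCL6}, giving the result. For \emph{equation} (2), I would apply \eqref{SCL10} with $y\mapsto y\leftor\tr$; the resulting right-hand factor $(y\leftor\tr)\leftor(z\leftand\fa)$ simplifies to $y\leftor\tr$ using associativity of $\leftor$ (the dual of \eqref{SCL7}) and $\tr\leftor x=\tr$ (the dual of \eqref{SCL6}). For \emph{equation} (3), I would first rewrite using \eqref{SCL9} to obtain $(x\leftand\fa)\leftor\neg y$, then apply the De~Morgan law $\neg(A\leftand B)=\neg A\leftor\neg B$ (derivable from \eqref{SCL2} and \eqref{SCL3}) together with the identity $\neg(x\leftand\fa)=x\leftor\tr$ highlighted above, concluding via \eqref{SCL1}. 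For \emph{equation} (4), I would appeal directly to the dual of \eqref{SCL10}, which expresses left-distributivity of $\leftand$ over $\leftor$ in the presence of a trailing $(z\leftor\tr)$-factor, and then reassociate using \eqref{SCL7}.

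The hard part will be equations (5) and (6), whose left-hand sides contain the ``double side-effect'' factor $(w\leftor\tr)\leftand(z\leftand\fa)$, which by \eqref{SCL7} equals $((w\leftor\tr)\leftand z)\leftand\fa$, so it is still an expression of the form $P\leftand\fa$. This makes Lemma~\ref{lem:seqs} directly applicable with $z$ replaced by $(w\leftor\tr)\leftand z$. After this substitution the LHS becomes $(\neg x\leftor(\ldots\leftand\fa))\leftand(((y\leftor\tr)\leftand(z\leftand\fa))\leftand((w\leftor\tr)\leftand z\leftand\fa))$; I would then simplify the innermost iterated $\leftand\fa$-block using \eqref{SCL7}, \eqref{SCL6}, and equation~(2) from this lemma to bring the factor involving $(y\leftor\tr)$ outside, and finally convert between the $P\leftor\tr$ and $P\leftand\fa$ representations using \eqref{SCL9} and $\neg(x\leftand\fa)=x\leftor\tr$ to reach the desired RHS.

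Equation (6) follows an analogous pattern but the trailing $(w\leftand\fa)$ is a ``pure'' side-effect factor, so the derivation is slightly easier: I would again start from Lemma~\ref{lem:seqs}, use \eqref{SCL7}--\eqref{SCL6} to absorb the $(z\leftand\fa)$ block into the trailing $(w\leftand\fa)$ and pull out one copy of $(z\leftand\fa)$ as the overall right-most factor, and apply \eqref{SCL8} together with equation~(1) to transform the $\neg x$ on the inside. The main obstacle in both (5) and (6) will be bookkeeping: keeping the many nested factors straight and choosing the right substitution instance of Lemma~\ref{lem:seqs} so that the subsequent simplifications actually telescope. Because each equation is pure equational reasoning from \EqFSCL, correctness can (as the authors note) be independently checked by \emph{Prover9}.
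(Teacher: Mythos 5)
Your derivations of items 1--4 are correct and essentially the paper's own: item 1 is Lemma~\ref{lem:seqs} with $y$ instantiated to $y\leftand\fa$ followed by \eqref{SCL7} and \eqref{SCL6}; item 2 is \eqref{SCL10} with $y$ instantiated to $y\leftor\tr$ followed by the duals of \eqref{SCL7} and \eqref{SCL6}; item 3 uses exactly \eqref{SCL9}, \eqref{SCL2}, \eqref{SCL8} and \eqref{SCL1} (only the direction of travel differs); item 4 is \eqref{SCL7} plus the dual of \eqref{SCL10}.

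For items 5 and 6 your overall strategy (open with Lemma~\ref{lem:seqs}, absorb the iterated $\leftand\fa$-blocks by \eqref{SCL6}/\eqref{SCL7}, then convert between $P\leftand\fa$ and $P\leftor\tr$ forms) is viable, but the concrete steps you name for closing the argument are not the right ones, and as literally described the plan circles back to the left-hand side. For item 5, after your opening move the term is $(\neg x\leftor((w\leftor\tr)\leftand(z\leftand\fa)))\leftand((y\leftor\tr)\leftand(z\leftand\fa))$, and what remains is to turn the disjunction $\neg x\leftor\cdots$ into the conjunction $x\leftand(w\leftor\tr)$ while exchanging the positions of the $y$- and $w$-factors; neither \eqref{SCL9} nor the identity $\neg(x\leftand\fa)=x\leftor\tr$ accomplishes that. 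The missing ingredient is the \emph{dual} of Lemma~\ref{lem:seqs}, read right-to-left after padding $(w\leftor\tr)$ to $(w\leftor\tr)\leftor(y\leftor\tr)$ with the duals of \eqref{SCL6} and \eqref{SCL7} --- which is precisely how the paper finishes; note that re-applying Lemma~\ref{lem:seqs} or item 1 at this point merely undoes your first step. For item 6 the right-hand side still contains $\neg x$, so there is nothing to ``transform'' about $\neg x$; the decisive step is a distributivity step, $(\neg x\leftor(w\leftand\fa))\leftand(y\leftor\tr)=(\neg x\leftand(y\leftor\tr))\leftor((w\leftand\fa)\leftand(y\leftor\tr))$, which is the dual of \eqref{SCL10} (equivalently item 2 of this lemma read right-to-left), whereas \eqref{SCL8} does not apply to this pattern and item 1 again just reproduces the left-hand side. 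So the skeleton matches the paper's proof, but the two decisive steps in items 5 and 6 are misidentified and must be replaced by the dual of Lemma~\ref{lem:seqs} and the dual of \eqref{SCL10}, respectively.
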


\begin{proof}
See Table~\ref{tab:proof}.
We note that these equations were checked with the theorem prover \emph{Prover9}~\cite{BirdBrain}.
\end{proof}

\begin{table}
{ \small
\centering
\rule{1\textwidth}{.4pt}
\begin{align*}
\eqref{eq:b2}\quad
(x&\leftors (y \leftands \fa)) \leftands (z \leftands
  \fa)\\
&= (\neg x \leftors (z \leftands \fa)) \leftands ((y \leftands \fa)
  \leftands (z \leftands \fa))
&&\text{by Lemma~\ref{lem:seqs}} \\
&= (\neg x \leftors (z \leftands \fa)) \leftands (y \leftands \fa),
&&\text{by \eqref{SCL6}, \eqref{SCL7}} 
\\[2mm]
\eqref{eq:b3}\quad
(x&\leftands (y \leftors \tr)) \leftors (z \leftands
  \fa)\\
&= (x \leftors (z \leftands \fa)) \leftands ((y \leftors \tr) \leftors (z
  \leftands \fa))
&&\text{by \eqref{SCL10}} \\
&= (x \leftors (z \leftands \fa)) \leftands (y \leftors \tr),
&&\text{by $\eqref{SCL6}'$, $\eqref{SCL7}'$}
\\[2mm]
\eqref{eq:b4}\quad
(x &\leftors \tr)\leftands \neg y\\
&= \neg((\neg x \leftands \fa) \leftors y)
&&\text{by \eqref{SCL2}} \\
&= \neg((x \leftands \fa) \leftors y)
&&\text{by \eqref{SCL8}} \\
&= \neg((x \leftors \tr) \leftands y),
&&\text{by \eqref{SCL9}}
\\[2mm]
\eqref{eq:b5}\quad
(x &\leftands (y \leftands (z \leftors \tr))) \leftors
  (w \leftands (z \leftors \tr)) \\
&= ((x \leftands y) \leftands (z \leftors \tr)) \leftors
  (w \leftands (z \leftors \tr))
&&\text{by \eqref{SCL7}} \\
&= ((x \leftands y) \leftors w) \leftands (z \leftors \tr),
&&\text{by $\eqref{SCL10}'$}
\\[2mm]
\eqref{eq:b6}\quad
(x &\leftors ((y \leftors \tr) \leftands (z \leftands \fa)))
  \leftands ((w \leftors \tr) \leftands (z \leftands \fa)) \\
&= (x \leftors ((y \leftands \fa) \leftors (z \leftands \fa)))
\leftands~((w \leftors \tr) \leftands (z \leftands \fa))
&&\text{by \eqref{SCL9}} \\
&= ((x \leftors (y \leftands \fa)) \leftors (z \leftands \fa))
\leftands((w \leftors \tr) \leftands (z \leftands \fa))
&&\text{by $\eqref{SCL7}'$} \\
&= (\neg(x \leftors (y \leftands \fa)) \leftors (w \leftors \tr))
  \leftands (z \leftands \fa)
&&\text{by Lemma~\ref{lem:seqs}} \\
&= ((\neg x \leftands (\neg y \leftors \tr)) \leftors (w \leftors \tr))
  \leftands (z \leftands \fa)
&&\text{by \eqref{SCL2}$'$} \\
&= ((\neg x \leftands (y \leftors \tr)) \leftors (w \leftors \tr))
  \leftands (z \leftands \fa)
&&\text{by $\eqref{SCL8}'$} \\
&= ((\neg x \leftands (y \leftors \tr)) \leftors (w \leftors (\tr
  \leftors (y \leftors \tr))))\leftands (z \leftands \fa)
&&\text{by $\eqref{SCL6}'$} \\
&= ((\neg x \leftands (y \leftors \tr)) \leftors ((w \leftors \tr)
  \leftors (y \leftors \tr)))\leftands(z \leftands \fa)
&&\text{by $\eqref{SCL7}'$} \\
&= ((x \leftands (w \leftors \tr)) \leftors (y \leftors \tr))
  \leftands (z \leftands \fa),
&&\text{by Lemma~$\ref{lem:seqs}'$} 
\\[2mm]
\eqref{eq:b7}\quad
(x&\leftors ((y \leftors \tr) \leftands (z \leftands
  \fa))) \leftands (w \leftands \fa) \\
&= (\neg x \leftors (w \leftands \fa)) \leftands (((y \leftors \tr)
  \leftands (z \leftands \fa))\leftands~ (w \leftands \fa))
&&\text{by Lemma~\ref{lem:seqs}} \\
&= (\neg x \leftors (w \leftands \fa)) \leftands ((y \leftors \tr)
  \leftands (z \leftands \fa))
&&\text{by \eqref{SCL6}, \eqref{SCL7}} \\
&= ((\neg x \leftors (w \leftands \fa)) \leftands (y \leftors \tr))
  \leftands (z \leftands \fa)
&&\text{by \eqref{SCL7}} \\
&= ((\neg x \leftands (y \leftors \tr)) \leftors ((w \leftands \fa)
  \leftands (y \leftors \tr)))\leftands(z \leftands \fa)
&&\text{by $\eqref{SCL10}'$} \\
&= ((\neg x \leftands (y \leftors \tr)) \leftors (w \leftands \fa))
  \leftands (z \leftands \fa).
&&\text{by \eqref{SCL6}, \eqref{SCL7}}
  &
\end{align*}
\hrule
}
\caption{Derivations for the proof of Lemma~\ref{lem:seqs2}}
\label{tab:proof}
\end{table}

\begin{lemma}
\label{lem:nfsn}
For all $P \in \SNF$, if $P$ is a $\tr$-term then $\nfs^n(P)$ is an
$\fa$-term, if it is an $\fa$-term then $\nfs^n(P)$ is a $\tr$-term, if
it is a $\tr$-$*$-term then so is $\nfs^n(P)$, and
\begin{equation*}
\EqFSCL \vdash \nfs^n(P) = \neg P.
\end{equation*}
\end{lemma}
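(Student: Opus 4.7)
The plan is to prove the statement by structural induction on $P \in \SNF$, together with an auxiliary claim on $\nfs^n_1$: for every $*$-term $R$, $\nfs^n_1(R)$ is a $*$-term and $\EqFSCL \vdash \nfs^n_1(R) = \neg R$. (Tracking the distinction between $P^c$- and $P^d$-forms, one verifies directly from clauses~\eqref{eq:nfsn6}--\eqref{eq:nfsn9} that $\nfs^n_1$ swaps these two categories, which suffices to keep the output a $*$-term.) The base cases $P \in \{\tr, \fa\}$ of the main induction are immediate from~\eqref{SCL1} and~\eqref{SCL3}. For a $\tr$-term $P \equiv (a \leftand P^\tr) \leftor Q^\tr$, unfolding~\eqref{eq:nfsn2} and applying the IH yields $\nfs^n(P) = (a \leftor \neg Q^\tr) \leftand \neg P^\tr$, which is syntactically an $\fa$-term. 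Two De~Morgan applications (via~\eqref{SCL2},~\eqref{SCL3}) rewrite $\neg P$ as $(\neg a \leftor \neg P^\tr) \leftand \neg Q^\tr$, and the remaining reshuffling identity $(a \leftor F_1) \leftand F_2 = (\neg a \leftor F_2) \leftand F_1$ for $\fa$-terms $F_1, F_2$ is obtained by absorbing $F_2 = F_2 \leftand \fa$ via Lemma~\ref{lem:ptpfs}, applying Lemma~\ref{lem:seqs}, and simplifying the two resulting conjuncts once more by Lemma~\ref{lem:ptpfs}. The $\fa$-term case is fully dual.

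For a $\tr$-$*$-term $P \equiv P^\tr \leftand Q^*$, clause~\eqref{eq:nfsn5} gives $\nfs^n(P) = P^\tr \leftand \nfs^n_1(Q^*)$, which is a $\tr$-$*$-term by the auxiliary claim and equals $P^\tr \leftand \neg Q^*$. The remaining equality then follows from the chain
\[
P^\tr \leftand \neg Q^* = (P^\tr \leftor \tr) \leftand \neg Q^* \stackrel{\eqref{SCL9}}{=} (P^\tr \leftand \fa) \leftor \neg Q^* \stackrel{\eqref{SCL8}}{=} (\neg P^\tr \leftand \fa) \leftor \neg Q^* = \neg P^\tr \leftor \neg Q^*,
\]
where the first and fourth equalities invoke $P^\tr \leftor \tr = P^\tr$ and $\neg P^\tr \leftand \fa = \neg P^\tr$ from Lemma~\ref{lem:ptpfs}, and one De~Morgan step rewrites $\neg P^\tr \leftor \neg Q^* = \neg(P^\tr \leftand Q^*)$.

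The auxiliary claim on $\nfs^n_1$ is itself proved by structural induction on the $*$-term $R$. Clauses~\eqref{eq:nfsn8} and~\eqref{eq:nfsn9} reduce by IH to a single De~Morgan step each. The base cases, where $R$ is an $\ell$-term such as $(a \leftand P^\tr) \leftor Q^\fa$, require deriving $(\neg a \leftand \neg Q^\fa) \leftor \neg P^\tr = (\neg a \leftor \neg P^\tr) \leftand \neg Q^\fa$ (the right-hand side being $\neg R$ after De~Morgan). This is the dual reshuffling identity: absorb $\neg P^\tr = \neg P^\tr \leftand \fa$ via Lemma~\ref{lem:ptpfs}, apply axiom~\eqref{SCL10}, and simplify the two resulting disjuncts with Lemma~\ref{lem:ptpfs} once more. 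The variant $(\neg a \leftand P^\tr) \leftor Q^\fa$ is analogous, modulo one use of~\eqref{SCL3}.

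The main obstacle is the two reshuffling identities: one drives the $\tr$- and $\fa$-term cases of the main induction (via Lemma~\ref{lem:seqs}), the other drives the $\ell$-term base of the auxiliary induction (via axiom~\eqref{SCL10}). Together with the absorption properties of Lemma~\ref{lem:ptpfs}, these reshufflings reduce the rest of the proof to routine De~Morgan manipulations.
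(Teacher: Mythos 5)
Your proof is correct and follows essentially the same route as the paper's: the same two-level induction (on term structure for $\tr$-, $\fa$-, and $\tr$-$*$-terms, with an auxiliary induction on the number of $\ell$-terms for $\nfs^n_1$), driven by the absorption properties of Lemma~\ref{lem:ptpfs} together with Lemma~\ref{lem:seqs}, axiom~\eqref{SCL10}, and De~Morgan. The only difference is presentational: the paper routes the "reshuffling" steps through the pre-packaged identities of Lemma~\ref{lem:seqs2} (items 1--4), whereas you derive them inline from Lemma~\ref{lem:seqs}, \eqref{SCL10}, and \eqref{SCL8}/\eqref{SCL9} in exactly the way those identities are proved in Table~\ref{tab:proof}.
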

\begin{proof}
We first prove the claims for $\tr$-terms, by induction on $P^\tr$.  In the
base case $\nfs^n(\tr) = \fa$ by \eqref{eq:nfsn1}, so
$\nfs^n(\tr)$ is an
$\fa$-term. The claim that $\EqFSCL \vdash \nfs^n(\tr) = \neg \tr$ is
immediate by \eqref{SCL1}. For the inductive case we have that $\nfs^n((a
\leftands P^\tr) \leftors Q^\tr) = (a \leftors \nfs^n(Q^\tr)) \leftands
\nfs^n(P^\tr)$ by \eqref{eq:nfsn2}, where we assume that $\nfs^n(P^\tr)$ and 
$\nfs^n(Q^\tr)$
are $\fa$-terms and that $\EqFSCL \vdash \nfs^n(P^\tr) = \neg P^\tr$
and $\EqFSCL \vdash \nfs^n(Q^\tr) = \neg Q^\tr$. It follows from the
induction hypothesis that $\nfs^n((a \leftands P^\tr) \leftors Q^\tr)$ is
an $\fa$-term. Furthermore, noting that by the induction hypothesis we may
assume that $\nfs^n(P^\tr)$ and $\nfs^n(Q^\tr)$ are $\fa$-terms, we
have:
\begin{align*}
\nfs^n((a\leftands P^\tr) \leftors Q^\tr)
&= (a \leftors \nfs^n(Q^\tr)) \leftands \nfs^n(P^\tr)
&&\text{by \eqref{eq:nfsn2}} \\
&= (a \leftors (\nfs^n(Q^\tr) \leftands \fa)) \leftands (\nfs^n(P^\tr)
  \leftands \fa)
&&\text{by Lemma~\ref{lem:ptpfs}} \\
&= (\neg a \leftors (\nfs^n(P^\tr) \leftands \fa)) \leftands
  (\nfs^n(Q^\tr) \leftands \fa )
&&\text{by Lemma~\ref{lem:seqs2}.\ref{eq:b2}} \\
&= (\neg a \leftors \nfs^n(P^\tr)) \leftands \nfs^n(Q^\tr)
&&\text{by Lemma~\ref{lem:ptpfs}} \\
&= (\neg a \leftors \neg P^\tr) \leftands \neg Q^\tr
&&\text{by induction hypothesis} \\
&= \neg((a \leftands P^\tr) \leftors Q^\tr).
&&\text{by \eqref{SCL2} and its dual}
\end{align*}

For $\fa$-terms we prove our claims by induction on $P^\fa$. In the base
case $\nfs^n(\fa) = \tr$ by \eqref{eq:nfsn3}, so $\nfs^n(\fa)$ is a
$\tr$-term. The claim that $\EqFSCL \vdash \nfs^n(\fa) = \neg \fa$ is
immediate by~\eqref{SCL1}$'$. For the inductive case we have that
$\nfs^n((a \leftors P^\fa) \leftands Q^\fa) = (a \leftands
\nfs^n(Q^\fa)) \leftors \nfs^n(P^\fa)$ by \eqref{eq:nfsn4}, where we assume that
$\nfs^n(P^\fa)$ and $\nfs^n(Q^\fa)$ are $\tr$-terms and that $\EqFSCL
\vdash \nfs^n(P^\fa) = \neg P^\fa$ and $\EqFSCL \vdash \nfs^n(Q^\fa) =
\neg Q^\fa$. It follows from the induction hypothesis that $\nfs^n((a
\leftors P^\fa) \leftands Q^\fa)$ is a $\tr$-term. Furthermore, noting
that by the induction hypothesis we may assume that $\nfs^n(P^\fa)$ and
$\nfs^n(Q^\fa)$ are $\tr$-terms, the proof of derivably equality is dual
to that for $\nfs^n((a \leftands P^\tr) \leftors Q^\tr)$.

To prove the lemma for $\tr$-$*$-terms we first verify that the auxiliary
function $\nfs^n_1$ returns a $*$-term and that for any $*$-term $P$, $\EqFSCL
\vdash \nfs^n_1(P) = \neg P$. We show this by induction on the number of
$\ell$-terms in $P$. For the base cases it is immediate by the above cases
for $\tr$-terms and $\fa$-terms that $\nfs^n_1(P)$ is a $*$-term.
Furthermore, if $P$ is an $\ell$-term of the form $(a\leftands P^\tr) \leftors Q^\fa$ we
have:
\begin{align*}
\nfs^n_1((a\leftands P^\tr) \leftors Q^\fa)
&= (\neg a \leftands \nfs^n(Q^\fa)) \leftors \nfs^n(P^\tr)
&&\text{by \eqref{eq:nfsn6}} \\
&= (\neg a \leftands (\nfs^n(Q^\fa) \leftors \tr)) \leftors
  (\nfs^n(P^\tr) \leftands \fa)
&&\text{by Lemma~\ref{lem:ptpfs}} \\
&= (\neg a \leftors (\nfs^n(P^\tr) \leftands \fa)) \leftands
  (\nfs^n(Q^\fa) \leftors \tr)
&&\text{by Lemma~\ref{lem:seqs2}.\ref{eq:b3}} \\
&= (\neg a \leftors \nfs^n(P^\tr)) \leftands \nfs^n(Q^\fa)
&&\text{by Lemma~\ref{lem:ptpfs}} \\
&= (\neg a \leftors \neg P^\tr) \leftands \neg Q^\fa
&&\text{by induction hypothesis} \\
&= \neg((a \leftands P^\tr) \leftors Q^\fa).
&&\text{by \eqref{SCL2} and its dual}
\end{align*}
If $P$ is an $\ell$-term of the form $(\neg a\leftands P^\tr) \leftors Q^\fa$ the proof proceeds
the same, substituting $\neg a$ for $a$ and applying \eqref{eq:nfsn7}
and \eqref{SCL3} where
needed. For the inductive step we assume that the result holds for all
$*$-terms with fewer $\ell$-terms than $P^* \leftands Q^d$ and $P^* \leftors
Q^c$. By \eqref{eq:nfsn8} and \eqref{eq:nfsn9},
each application of $\nfs^n_1$ changes the main connective
(not occurring inside an $\ell$-term) and hence the result is a $*$-term.
Derivable equality is, given the induction hypothesis, an instance of~\eqref{SCL2}$'$.

With this result we can now see that $\nfs^n(P^\tr \leftands Q^*)$ is indeed
a $\tr$-$*$-term. We note that, by the above, Lemma~\ref{lem:ptpfs}
implies that $\neg P^\tr = \neg P^\tr \leftands \fa$. Now we find that:
\begin{align*}
\nfs^n(P^\tr \leftands Q^*)
&= P^\tr \leftands \nfs^n_1(Q^*)
&&\text{by \eqref{eq:nfsn5}} \\
&= P^\tr \leftands \neg Q^*
&&\text{as shown above} \\
&= (P^\tr \leftors \tr) \leftands \neg Q^*
&&\text{by Lemma~\ref{lem:ptpfs}} \\
&= \neg((P^\tr \leftors \tr) \leftands Q^*)
&&\text{by Lemma~\ref{lem:seqs2}.\ref{eq:b4}} \\
&= \neg(P^\tr \leftands Q^*).
&&\text{by Lemma~\ref{lem:ptpfs}}
\end{align*}
Hence for all $P \in \SNF$, $\EqFSCL \vdash \nfs^n(P) = \neg P$.
\end{proof}

\begin{lemma}
\label{lem:nfsc1}
For any $\tr$-term $P$ and $Q \in \SNF$, $\nfs^c(P, Q)$ has the same
grammatical category as $Q$ and
\begin{equation*}
\EqFSCL \vdash \nfs^c(P, Q) = P \leftands Q.
\end{equation*}
\end{lemma}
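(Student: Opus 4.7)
The plan is to proceed by induction on the structure of the $\tr$-term $P$. The base case $P \equiv \tr$ is immediate: clause~\eqref{eq:nfsc1} gives $\nfs^c(\tr, Q) \equiv Q$, which equals $\tr \leftand Q$ by~\eqref{SCL4}, and the grammatical category of $Q$ is trivially preserved. In the inductive step $P \equiv (a \leftand P^\tr) \leftor Q^\tr$, and clauses~\eqref{eq:nfsc2}--\eqref{eq:nfsc4} prompt a further case split on the grammatical category of $Q$.

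For $Q \equiv R^\tr$, the induction hypothesis turns the right-hand side of~\eqref{eq:nfsc2} into the $\tr$-term $(a \leftand (P^\tr \leftand R^\tr)) \leftor (Q^\tr \leftand R^\tr)$. To match this with $((a \leftand P^\tr) \leftor Q^\tr) \leftand R^\tr$, I will rewrite $R^\tr$ as $R^\tr \leftor \tr$ using Lemma~\ref{lem:ptpfs}, apply the dual of~\eqref{SCL10} (restricted right-distributivity), undo the rewrite by Lemma~\ref{lem:ptpfs} again, and reassociate via~\eqref{SCL7}. The sub-case $Q \equiv R^\tr \leftand S^*$ handled by~\eqref{eq:nfsc4} then reduces immediately to the previous sub-case followed by one application of~\eqref{SCL7}.

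The hard sub-case is $Q \equiv R^\fa$, where the induction hypothesis on~\eqref{eq:nfsc3} gives the candidate $(a \leftor (Q^\tr \leftand R^\fa)) \leftand (P^\tr \leftand R^\fa)$, and we must identify it with $((a \leftand P^\tr) \leftor Q^\tr) \leftand R^\fa$. My plan is to start from the latter, write $R^\fa$ as $R^\fa \leftand \fa$ via Lemma~\ref{lem:ptpfs}, and apply Lemma~\ref{lem:seqs} to obtain $(\neg(a \leftand P^\tr) \leftor R^\fa) \leftand (Q^\tr \leftand R^\fa)$. Next, I will expand the negation using the sequential De~Morgan law $\neg(x \leftand y) = \neg x \leftor \neg y$ (derivable from~\eqref{SCL2} and~\eqref{SCL3}), establish the auxiliary identity $\neg P^\tr \leftor R^\fa = P^\tr \leftand R^\fa$ by combining Lemma~\ref{lem:seqs2}.\ref{eq:b4} with Lemma~\ref{lem:ptpfs} and~\eqref{SCL3}, and use associativity of~$\leftor$ (the dual of~\eqref{SCL7}). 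This leaves $(\neg a \leftor (P^\tr \leftand R^\fa)) \leftand (Q^\tr \leftand R^\fa)$; a second application of Lemma~\ref{lem:seqs} (using that $Q^\tr \leftand R^\fa$ is an $\fa$-term by the induction hypothesis, hence equal to $(Q^\tr \leftand R^\fa) \leftand \fa$ via Lemma~\ref{lem:ptpfs}) together with~\eqref{SCL3} turns the leading $\neg a$ back into $a$, and a final collapse of $(P^\tr \leftand R^\fa) \leftand (Q^\tr \leftand R^\fa)$ to $P^\tr \leftand R^\fa$ via Lemma~\ref{lem:ptpfs} delivers the goal. The grammatical-category claims are immediate in each sub-case by inspecting the right-hand sides of~\eqref{eq:nfsc2}--\eqref{eq:nfsc4} together with the inductive hypothesis; the principal obstacle is orchestrating the two uses of Lemma~\ref{lem:seqs} in the $\fa$-term sub-case.
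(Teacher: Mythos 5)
Your proposal is correct and follows the same skeleton as the paper's proof: induction on the $\tr$-term $P$, with the inductive step split according to the grammatical category of $Q$ as dictated by clauses~\eqref{eq:nfsc2}--\eqref{eq:nfsc4}; the base case via~\eqref{SCL4} and the reduction of the $\tr$-$*$-case to the $\tr$-case via~\eqref{SCL7} coincide exactly. The difference lies in how the two substantive sub-cases are discharged. For a $\tr$-term second argument you re-derive inline (pad $R^\tr$ to $R^\tr\leftor\tr$, apply $\eqref{SCL10}'$, reassociate) precisely the identity the paper has packaged as Lemma~\ref{lem:seqs2}.\ref{eq:b5}, so there the routes are the same up to factoring. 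For an $\fa$-term second argument the paper appeals once to the prepared identity Lemma~\ref{lem:seqs2}.\ref{eq:b6}, whereas you build the chain directly: two applications of Lemma~\ref{lem:seqs}, sequential De~Morgan, and the auxiliary fact $\neg P^\tr\leftor R^\fa=P^\tr\leftand R^\fa$ (which indeed follows from $\neg P^\tr=P^\tr\leftand\fa$, itself obtainable from Lemma~\ref{lem:seqs2}.\ref{eq:b4} with Lemma~\ref{lem:ptpfs} and the dual of~\eqref{SCL5}, together with~\eqref{SCL9}). I checked that your second application of Lemma~\ref{lem:seqs} --- instantiated with $x=\neg a$, $y=P^\tr\leftand R^\fa$, $z=Q^\tr\leftand R^\fa$ after padding the last factor with $\leftand\,\fa$ --- does land on $(a\leftor(Q^\tr\leftand R^\fa))\leftand(P^\tr\leftand R^\fa)$ once~\eqref{SCL3} and Lemma~\ref{lem:ptpfs} absorb the residue, so the plan closes. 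Since the paper's proof of~\ref{eq:b6} itself runs through Lemma~\ref{lem:seqs}, \eqref{SCL9}, and the duals of \eqref{SCL2} and \eqref{SCL8}, the underlying equational content is essentially the same; your version trades the reusable lemma for a self-contained derivation, at the cost of having to justify applying Lemma~\ref{lem:ptpfs} to terms that are only \emph{derivably} equal to $\fa$-terms (which you correctly flag via the induction hypothesis). One cosmetic caveat: the grammatical-category claim should be read off the syntactic right-hand sides of~\eqref{eq:nfsc2}--\eqref{eq:nfsc4} with the induction hypothesis applied to the recursive calls, not off their derivably-equal expansions such as $(a\leftand(P^\tr\leftand R^\tr))\leftor(Q^\tr\leftand R^\tr)$, which is not itself a $\tr$-term of the grammar.
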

\begin{proof}
By induction on the complexity of the $\tr$-term. In the base case we see
that $\nfs^c(\tr, P) = P$ by \eqref{eq:nfsc1}, which is clearly of the same grammatical category
as $P$. Derivable equality is an instance of \eqref{SCL4}.

For the inductive step we assume that the result holds for all $\tr$-terms of
lesser complexity than $(a \leftands P^\tr)\leftors Q^\tr$. The claim about the grammatical
category follows immediately from the induction hypothesis. For the claim about
derivable equality we make a case distinction on the grammatical category of
the second argument. If the second argument is a $\tr$-term, we prove
derivable equality as follows:
\begin{align*}
\nfs^c((a&\leftands P^\tr) \leftors Q^\tr, R^\tr) \\
&= (a \leftands \nfs^c(P^\tr, R^\tr)) \leftors \nfs^c(Q^\tr, R^\tr)
&&\text{by \eqref{eq:nfsc2}} \\
&= (a \leftands (P^\tr \leftands R^\tr)) \leftors (Q^\tr \leftands
  R^\tr)
&&\text{by induction hypothesis} \\
&= (a \leftands (P^\tr \leftands (R^\tr \leftors \tr))) \leftors (Q^\tr
  \leftands (R^\tr \leftors \tr))
&&\text{by Lemma~\ref{lem:ptpfs}} \\
&= ((a \leftands P^\tr) \leftors Q^\tr) \leftands (R^\tr \leftors \tr)
&&\text{by Lemma~\ref{lem:seqs2}.\ref{eq:b5}} \\
&= ((a \leftands P^\tr) \leftors Q^\tr) \leftands R^\tr.
&&\text{by Lemma~\ref{lem:ptpfs}}
\end{align*}
If the second argument is an $\fa$-term, we prove derivable equality as
follows:
\begin{align*}
\nfs^c((a &\leftands P^\tr) \leftors Q^\tr, R^\fa) \\
&= (a \leftors \nfs^c(Q^\tr, R^\fa)) \leftands \nfs^c(P^\tr, R^\fa)
&&\text{by \eqref{eq:nfsc3}} \\
&= (a \leftors (Q^\tr \leftands R^\fa)) \leftands (P^\tr \leftands
  R^\fa)
&&\text{by induction hypothesis} \\
&= (a \leftors ((Q^\tr \leftors \tr) \leftands (R^\fa \leftands \fa)))
  \leftands~\\
  &\phantom{~=}((P^\tr \leftors \tr) \leftands (R^\fa \leftands \fa))
&&\text{by Lemma~\ref{lem:ptpfs}} \\
&= ((a \leftands (P^\tr \leftors \tr)) \leftors (Q^\tr \leftors \tr))
  \leftands (R^\fa \leftands \fa)
&&\text{by Lemma~\ref{lem:seqs2}.\ref{eq:b6}} \\
&= ((a \leftands P^\tr) \leftors Q^\tr) \leftands R^\fa.
&&\text{by Lemma~\ref{lem:ptpfs}}
\end{align*}

If the second argument is $\tr$-$*$-term, the result follows 
by \eqref{eq:nfsc4} from the case
where the second argument is a $\tr$-term, and \eqref{SCL7}.
\end{proof}

\begin{lemma}
\label{lem:nfsc2}
For any $\fa$-term $P$ and $Q \in \SNF$, $\nfs^c(P, Q)$ is an
$\fa$-term and
\begin{equation*}
\EqFSCL \vdash \nfs^c(P, Q) = P \leftands Q.
\end{equation*}
\end{lemma}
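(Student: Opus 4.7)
The plan is essentially to read off both claims from the definition of $\nfs^c$ combined with a single prior lemma, since this case is by far the simplest of the $\nfs^c$ cases.

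The grammatical category claim is immediate: by clause~\eqref{eq:nfsc5}, $\nfs^c(P^\fa, Q) = P^\fa$, which is an $\fa$-term by assumption, regardless of the grammatical category of $Q$.

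For the derivable equality claim, the whole content is Lemma~\ref{lem:ptpfs}, which gives $\EqFSCL \vdash P^\fa = P^\fa \leftand x$ for every $\fa$-term $P^\fa$. Instantiating $x$ by $Q$ via the substitution rule of equational logic yields $\EqFSCL \vdash P^\fa = P^\fa \leftand Q$, and combining this with \eqref{eq:nfsc5} gives
\[
\nfs^c(P^\fa, Q) \;=\; P^\fa \;=\; P^\fa \leftand Q,
\]
as required. There is no genuine obstacle here; unlike the forthcoming cases for $\tr$-$*$-terms (cases $(3.1)$--$(3.3)$), the $\fa$-term case is absorbing on the left by~\eqref{SCL6}, which is exactly what Lemma~\ref{lem:ptpfs} internalises inductively over the structure of $\fa$-terms. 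So the proof is a one-liner appealing to~\eqref{eq:nfsc5} and Lemma~\ref{lem:ptpfs}, with no induction needed in this lemma itself.
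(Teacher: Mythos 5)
Your proof is correct and matches the paper's: the grammatical claim is immediate from clause~\eqref{eq:nfsc5}, and the equality $\nfs^c(P,Q)=P=P\leftand Q$ follows by substituting $Q$ for the variable $x$ in Lemma~\ref{lem:ptpfs}. The paper additionally cites \eqref{SCL7} and \eqref{SCL6}, but those are exactly the axioms already internalised in the proof of Lemma~\ref{lem:ptpfs}, so your one-line appeal to that lemma is the same argument.
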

\begin{proof}
The grammatical result is immediate by \eqref{eq:nfsc5}
and the claim about derivable equality
follows from Lemma~\ref{lem:ptpfs}, \eqref{SCL7} and \eqref{SCL6}.
\end{proof}

\begin{lemma}
\label{lem:nfsc3}
For any $\tr$-$*$-term $P$ and $\tr$-term $Q$, $\nfs^c(P, Q)$ has the same
grammatical category as $P$ and
\begin{equation*}
\EqFSCL \vdash \nfs^c(P, Q) = P \leftands Q.
\end{equation*}
\end{lemma}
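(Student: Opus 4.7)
The plan is to reduce the lemma to an auxiliary claim about $\nfs^c_1$, and then prove that claim by induction on the number of $\ell$-terms in its first argument. The reduction is immediate: by clause~\eqref{eq:nfsc6} we have $\nfs^c(P^\tr \leftand Q^*, R^\tr) = P^\tr \leftand \nfs^c_1(Q^*, R^\tr)$, so if we know that $\nfs^c_1(Q^*, R^\tr)$ is a $*$-term and that $\EqFSCL \vdash \nfs^c_1(Q^*, R^\tr) = Q^* \leftand R^\tr$, then the output is a $\tr$-$*$-term, and derivable equality follows by one application of associativity~\eqref{SCL7}.

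The auxiliary claim I would prove is: for any $*$-term $S$ and $\tr$-term $R^\tr$, $\nfs^c_1(S, R^\tr)$ lies in the same grammatical subcategory ($\ell$, $c$, or $d$) as $S$, and $\EqFSCL \vdash \nfs^c_1(S, R^\tr) = S \leftand R^\tr$. The proof is by induction on the number of $\ell$-terms in $S$. In the base case $S$ is an $\ell$-term, handled by \eqref{eq:nfsc7} or \eqref{eq:nfsc8}; Lemma~\ref{lem:nfsc1} ensures $\nfs^c(P^\tr, R^\tr)$ is a $\tr$-term derivably equal to $P^\tr \leftand R^\tr$, so syntactically the output is again of the form $(a \leftand \cdot) \leftor Q^\fa$ or $(\neg a \leftand \cdot) \leftor Q^\fa$, i.e., still an $\ell$-term. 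In the inductive steps, clause~\eqref{eq:nfsc9} sends $S = P^* \leftand Q^d$ to $P^* \leftand \nfs^c_1(Q^d, R^\tr)$, where by induction the second factor is a $d$-term derivably equal to $Q^d \leftand R^\tr$, giving a $c$-term, with derivable equality by~\eqref{SCL7}; clause~\eqref{eq:nfsc10} sends $S = P^* \leftor Q^c$ to $\nfs^c_1(P^*, R^\tr) \leftor \nfs^c_1(Q^c, R^\tr)$, which by induction is a disjunction of a $*$-term and a $c$-term, hence a $d$-term, so the grammatical category is preserved.

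The main obstacle, and where the algebra actually has to do work, is the base case for derivable equality and the equality step for clause~\eqref{eq:nfsc10}. In both places one needs a form of right-distributivity of $\leftand$ over $\leftor$ that is \emph{not} valid in \FSCL\ in general, but which does hold when the right-hand factor is a $\tr$-term. My plan is to extract this distributivity from Lemma~\ref{lem:seqs2}.\ref{eq:b5}, which reads $(x \leftand (y \leftand (z \leftor \tr))) \leftor (w \leftand (z \leftor \tr)) = ((x \leftand y) \leftor w) \leftand (z \leftor \tr)$, combined with Lemma~\ref{lem:ptpfs} to rewrite $R^\tr$ as $R^\tr \leftor \tr$. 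For the $\ell$-term base case with $S = (a \leftand P^\tr) \leftor Q^\fa$, I would chain: rewrite $Q^\fa$ as $Q^\fa \leftand R^\tr$ by Lemma~\ref{lem:ptpfs}, replace $R^\tr$ by $R^\tr \leftor \tr$, and then apply Lemma~\ref{lem:seqs2}.\ref{eq:b5} with the appropriate instantiation to collapse the expression to $((a \leftand P^\tr) \leftor Q^\fa) \leftand R^\tr$; the negated variant proceeds identically modulo \eqref{SCL3}. For the inductive step~\eqref{eq:nfsc10}, the same trick, together with the induction hypothesis, yields $(P^* \leftor Q^c) \leftand R^\tr$ by a single use of the dual of~\eqref{SCL10} on the form $(P^* \leftor Q^c) \leftand (R^\tr \leftor \tr)$.

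Once the auxiliary claim is established, the lemma follows: grammatical correctness by direct inspection of~\eqref{eq:nfsc6}, and derivable equality by $P^\tr \leftand \nfs^c_1(Q^*, R^\tr) = P^\tr \leftand (Q^* \leftand R^\tr) = (P^\tr \leftand Q^*) \leftand R^\tr$ using the induction hypothesis and~\eqref{SCL7}. The proof's overall shape parallels that of Lemma~\ref{lem:nfsc1}; the new ingredient is the controlled use of~\eqref{eq:b5} to simulate the missing right-distributivity whenever the right factor is a $\tr$-term.
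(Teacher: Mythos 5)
Your proposal is correct and follows essentially the same route as the paper's proof: reduce to an auxiliary claim about $\nfs^c_1$ via \eqref{eq:nfsc6} and \eqref{SCL7}, induct on the number of $\ell$-terms, settle the base case with Lemma~\ref{lem:nfsc1}, Lemma~\ref{lem:ptpfs} and a restricted distributivity identity from Lemma~\ref{lem:seqs2}, and close the conjunction and disjunction steps with \eqref{SCL7} and the dual of \eqref{SCL10}, respectively. The only (immaterial) difference is that in the $\ell$-term base case the paper first reassociates with \eqref{SCL7} and then applies Lemma~\ref{lem:seqs2}.\ref{eq:b3}, whereas you apply Lemma~\ref{lem:seqs2}.\ref{eq:b5} directly; both derivations go through.
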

\begin{proof}
By \eqref{eq:nfsc6} and \eqref{SCL7}
it suffices to prove the claims for $\nfs^c_1$, i.e., that
$\nfs^c_1(P^*, Q^\tr)$ is a $*$-term and that $\EqFSCL \vdash \nfs^c_1(P^*,
Q^\tr) = P^* \leftands Q^\tr$. We prove this by induction on the number of
$\ell$-terms in $P^*$. In the base case we deal with $\ell$-terms and the
grammatical claim follows from Lemma~\ref{lem:nfsc1}. We prove derivable
equality as follows, letting $\hat{a} \in \{a, \neg a\}$:
\begin{align*}
\nfs^c_1((\hat{a} \leftands P^\tr) \leftors Q^\fa, R^\tr)
&= (\hat{a} \leftands \nfs^c(P^\tr, R^\tr)) \leftors Q^\fa
&&\text{by \eqref{eq:nfsc7}, \eqref{eq:nfsc8}} \\
&= (\hat{a} \leftands (P^\tr \leftands R^\tr)) \leftors Q^\fa
&&\text{by Lemma~\ref{lem:nfsc1}} \\
&= ((\hat{a} \leftands P^\tr) \leftands R^\tr) \leftors Q^\fa
&&\text{by \eqref{SCL7}} \\
&= ((\hat{a} \leftands P^\tr) \leftands (R^\tr \leftors \tr))
  \leftors (Q^\fa \leftands \fa)
&&\text{by Lemma~\ref{lem:ptpfs}} \\
&= ((\hat{a} \leftands P^\tr) \leftors (Q^\fa \leftands \fa))
  \leftands (R^\tr \leftors \tr)
&&\text{by Lemma~\ref{lem:seqs2}.\ref{eq:b3}} \\
&= ((\hat{a} \leftands P^\tr) \leftors Q^\fa) \leftands R^\tr.
&&\text{by Lemma~\ref{lem:ptpfs}}
\end{align*}

For the induction step we assume that the result holds for all $*$-terms with
fewer $\ell$-terms than $P^* \leftands Q^d$ and $P^* \leftors Q^c$.  In the
case of conjunctions the results follow from \eqref{eq:nfsc9}, the induction 
hypothesis, and \eqref{SCL7}. 
In the case of disjunctions the results follow immediately
from \eqref{eq:nfsc10}, the induction hypothesis, Lemma~\ref{lem:ptpfs}, 
and \eqref{SCL10}$'$.
\end{proof}

\begin{lemma}
\label{lem:nfsc4}
For any $\tr$-$*$-term $P$ and $\fa$-term $Q$, $\nfs^c(P, Q)$ is an
$\fa$-term and
\begin{equation*}
\EqFSCL \vdash \nfs^c(P, Q) = P \leftands Q.
\end{equation*}
\end{lemma}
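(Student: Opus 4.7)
The plan is to reduce Lemma~\ref{lem:nfsc4} to an auxiliary claim about the helper function $\nfs^c_2$, namely that for all $S^* \in P^*$ and $R^\fa \in P^\fa$, the term $\nfs^c_2(S^*, R^\fa)$ is an $\fa$-term and $\EqFSCL \vdash \nfs^c_2(S^*, R^\fa) = S^* \leftand R^\fa$. Given this auxiliary, the main lemma follows immediately from the defining equation~\eqref{eq:nfsc11}, since Lemma~\ref{lem:nfsc1} then returns an $\fa$-term derivably equal to $P^\tr \leftand (Q^* \leftand R^\fa)$, which rearranges to $(P^\tr \leftand Q^*) \leftand R^\fa$ by~\eqref{SCL7}.

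The auxiliary claim I would prove by strong induction on the number of $\ell$-terms in $S^*$. In the base case $S^*$ is an $\ell$-term, splitting by the polarity of its leading literal. For $S^* = (a \leftand P^\tr) \leftor Q^\fa$, clause~\eqref{eq:nfsc12} gives $(a \leftor Q^\fa) \leftand \nfs^c(P^\tr, R^\fa)$, an $\fa$-term by Lemma~\ref{lem:nfsc1}; derivable equality follows by first rewriting $Q^\fa = Q^\fa \leftand \fa$ (Lemma~\ref{lem:ptpfs}), applying~\eqref{SCL10} to pull the disjunction across the conjunction, collapsing the two resulting occurrences of $Q^\fa \leftand \fa$ via Lemma~\ref{lem:ptpfs} in the forms $a \leftor (Q^\fa \leftand \fa) = a \leftor Q^\fa$ and $P^\tr \leftor (Q^\fa \leftand \fa) = P^\tr$, and rearranging with~\eqref{SCL7}. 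The negative case~\eqref{eq:nfsc13} proceeds analogously and is then finished off with one application of Lemma~\ref{lem:seqs2}.\ref{eq:b2} to swap the roles of the $\fa$-factors and discharge the negation (paralleling the literal case in the proof of Lemma~\ref{lem:nfsn}).

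For the inductive step the conjunctive case $S^* = U^* \leftand V^d$ is routine: clause~\eqref{eq:nfsc14} invites applying the induction hypothesis first to $V^d$ and then to $U^*$, concluding with~\eqref{SCL7}. The main obstacle will be the disjunctive case $S^* = U^* \leftor V^c$, where clause~\eqref{eq:nfsc15} introduces the compound first argument $\nfs^n(\nfs^c_1(U^*, \nfs^n(R^\fa)))$. I would first check, directly from the definitions, that $\nfs^c_1$ preserves the count of $\ell$-terms in its first argument and that $\nfs^n_1$ preserves this count as well (interpreting $\nfs^n$ on a bare $*$-term as $\nfs^n_1$, since the collapsible prefix $\tr \leftand \cdot$ vanishes by~\eqref{SCL4}); this licenses the induction hypothesis on the outer $\nfs^c_2$ call. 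Using Lemmas~\ref{lem:nfsn} and~\ref{lem:nfsc3}, this first argument is then a $*$-term derivably equal to $\neg(U^* \leftand \neg R^\fa)$, which simplifies via~\eqref{SCL2} and~\eqref{SCL3} to $\neg U^* \leftor R^\fa$.

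Invoking the induction hypothesis on both arguments of the outer $\nfs^c_2$, the goal reduces to showing
\[(U^* \leftor V^c) \leftand R^\fa = (\neg U^* \leftor R^\fa) \leftand (V^c \leftand R^\fa),\]
which is exactly Lemma~\ref{lem:seqs} applied after rewriting $R^\fa$ as $R^\fa \leftand \fa$ (Lemma~\ref{lem:ptpfs}) and then absorbing the resulting $V^c \leftand R^\fa \leftand \fa$ back to $V^c \leftand R^\fa$ by Lemma~\ref{lem:ptpfs} on the same ground. The delicate points are therefore the bookkeeping of the complexity measure through $\nfs^n_1 \circ \nfs^c_1$, the implicit extension of $\nfs^n$ to $*$-terms, and the precise matching of Lemma~\ref{lem:seqs} to the right-hand side of clause~\eqref{eq:nfsc15}; the rest is mechanical manipulation with the already available axioms and lemmas.
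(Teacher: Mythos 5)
Your proposal is correct and follows the paper's proof in all essentials: the same reduction via \eqref{eq:nfsc11}, Lemma~\ref{lem:nfsc1} and \eqref{SCL7} to the auxiliary claim about $\nfs^c_2$, the same induction on the number of $\ell$-terms, and the same treatment of the delicate disjunctive case \eqref{eq:nfsc15} via the observation that $\nfs^n(\nfs^c_1(P^*,\nfs^n(R^\fa)))$ is a $*$-term with unchanged $\ell$-term count, derivably equal to $\neg P^*\leftor R^\fa$, finished off with Lemma~\ref{lem:seqs}. The only (harmless) deviation is in the literal base cases, where you invoke \eqref{SCL10} and Lemma~\ref{lem:seqs2}.\ref{eq:b2} directly instead of the paper's pre-packaged identities Lemma~\ref{lem:seqs2}.\ref{eq:b3} and Lemma~\ref{lem:seqs2}.\ref{eq:b7}; both routes check out.
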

\begin{proof}
By \eqref{eq:nfsc11}, Lemma~\ref{lem:nfsc1} and \eqref{SCL7} it suffices 
to prove that
$\nfs^c_2(P^*, Q^\fa)$ is an $\fa$-term and that $\EqFSCL \vdash
\nfs^c_2(P^*, Q^\fa) = P^* \leftands Q^\fa$. We prove this by induction
on the number of $\ell$-terms in $P^*$. In the base case we deal with
$\ell$-terms and the grammatical claim follows from Lemma~\ref{lem:nfsc1}. We
derive the remaining claim for $\ell$-terms of the form $(a\leftands P^\tr) \leftors Q^\fa$
as:
\begin{align*}
\nfs^c_2((a \leftands P^\tr) \leftors Q^\fa, R^\fa)
&= (a \leftors Q^\fa) \leftands \nfs^c(P^\tr, R^\fa)
&&\text{by \eqref{eq:nfsc12}} \\
&= (a \leftors Q^\fa) \leftands (P^\tr \leftands R^\fa)
&&\text{by Lemma~\ref{lem:nfsc1}} \\
&= ((a \leftors Q^\fa) \leftands P^\tr) \leftands R^\fa
&&\text{by \eqref{SCL7}} \\
&= ((a \leftors (Q^\fa \leftands \fa)) \leftands (P^\tr \leftors
  \tr)) \leftands R^\fa
&&\text{by Lemma~\ref{lem:ptpfs}} \\
&= ((a \leftands (P^\tr \leftors \tr)) \leftors (Q^\fa \leftands
  \fa)) \leftands R^\fa
&&\text{by Lemma~\ref{lem:seqs2}.\ref{eq:b3}} \\
&= ((a \leftands P^\tr) \leftors Q^\fa) \leftands R^\fa.
&&\text{by Lemma~\ref{lem:ptpfs}}
\end{align*}
For $\ell$-terms of the form $(\neg a\leftands P^\tr) \leftors Q^\fa$ we derive:
\begin{align*}
\nfs^c_2((&\neg a\leftands P^\tr) \leftors Q^\fa, R^\fa) \\
&= (a \leftors \nfs^c(P^\tr, R^\fa)) \leftands Q^\fa
&&\text{by \eqref{eq:nfsc13}} \\
&= (a \leftors (P^\tr \leftands R^\fa)) \leftands Q^\fa
&&\text{by induction hypothesis} \\
&= (a \leftors ((P^\tr \leftors \tr) \leftands (R^\fa \leftands
  \fa))) \leftands (Q^\fa \leftands \fa)
&&\text{by Lemma~\ref{lem:ptpfs}} \\
&= ((\neg a \leftands (P^\tr \leftors \tr)) \leftors (Q^\fa \leftands
  \fa)) \leftands (R^\fa \leftands \fa)
&&\text{by Lemma~\ref{lem:seqs2}.\ref{eq:b7}} \\
&= ((\neg a \leftands P^\tr) \leftors Q^\fa) \leftands R^\fa.
&&\text{by Lemma~\ref{lem:ptpfs}}
\end{align*}

For the induction step we assume that the result holds for all $*$-terms with
fewer $\ell$-terms than $P^* \leftands Q^d$ and $P^* \leftors Q^c$.  In the
case of conjunctions the results follow from \eqref{eq:nfsc14},
the induction hypothesis, and
\eqref{SCL7}. In the case of disjunctions note that by Lemma~\ref{lem:nfsn}
and the proof of Lemma~\ref{lem:nfsc3}, we have that $\nfs^n(\nfs^c_1(P^*,
\nfs^n(R^\fa)))$ is a $*$-term with same number of $\ell$-terms as $P^*$.
The grammatical result follows from this fact, \eqref{eq:nfsc15}, and the 
induction hypothesis.
Furthermore, noting that by the same argument $\nfs^n(\nfs^c_1(P^*,
\nfs^n(R^\fa))) = \neg(P^* \leftands \neg R^\fa)$, we derive:
\begin{align*}
\nfs^c_2(P^* \leftors Q^c, R^\fa)
&= \nfs^c_2(\nfs^n(\nfs^c_1(P^*, \nfs^n(R^\fa))), \nfs^c_2(Q^c, R^\fa))
&&\text{by \eqref{eq:nfsc15}} \\
&= \nfs^n(\nfs^c_1(P^*, \nfs^n(R^\fa))) \leftands (Q^c \leftands R^\fa)
&&\text{by induction hypothesis} \\
&= \neg(P^* \leftands \neg R^\fa) \leftands (Q^c \leftands R^\fa)
&&\text{as shown above} \\
&= (\neg P^* \leftors R^\fa) \leftands (Q^c \leftands R^\fa)
&&\text{by \eqref{SCL3}, \eqref{SCL2}} \\
&= (\neg P^* \leftors (R^\fa \leftands \fa)) \leftands (Q^c \leftands
  (R^\fa \leftands \fa))
&&\text{by Lemma~\ref{lem:ptpfs}} \\
&= (P^* \leftors Q^c) \leftands (R^\fa \leftands \fa)
&&\text{by Lemma~\ref{lem:seqs}} \\
&= (P^* \leftors Q^c) \leftands R^\fa.
&&\text{by Lemma~\ref{lem:ptpfs}}
\end{align*}
This completes the proof.
\end{proof}

\begin{lemma}
\label{lem:nfsc5}
For any $P, Q \in \SNF$, $\nfs^c(P, Q)$ is in $\SNF$ and
$\EqFSCL \vdash \nfs^c(P, Q) = P \leftands Q$.
\end{lemma}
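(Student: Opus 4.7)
The plan is to prove Lemma~\ref{lem:nfsc5} by a case analysis on the grammatical category of the first argument $P$, invoking the already-established Lemmas~\ref{lem:nfsc1}--\ref{lem:nfsc4} and adding one new auxiliary result for the only case not yet covered: when both $P$ and $Q$ are $\tr$-$*$-terms.

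First I would dispose of the easy cases. If $P$ is a $\tr$-term, Lemma~\ref{lem:nfsc1} gives the result directly. If $P$ is an $\fa$-term, Lemma~\ref{lem:nfsc2} applies. If $P \equiv P^\tr \leftand P^*$ is a $\tr$-$*$-term and $Q$ is a $\tr$-term or an $\fa$-term, Lemmas~\ref{lem:nfsc3} and~\ref{lem:nfsc4} apply respectively. Observe that these cases exhaust every combination except $P \equiv P^\tr \leftand P^*$ together with $Q \equiv Q^\tr \leftand Q^*$, which is governed by clauses~\eqref{eq:nfsc16}--\eqref{eq:nfsc19}.

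To handle the remaining case I would prove the following auxiliary claim: for every $*$-term $P^*$ and every $\tr$-$*$-term $Q^\tr \leftand R^*$, the term $\nfs^c_3(P^*, Q^\tr \leftand R^*)$ is a $*$-term and $\EqFSCL \vdash \nfs^c_3(P^*, Q^\tr \leftand R^*) = P^* \leftand (Q^\tr \leftand R^*)$. This proceeds by induction on the number of $\ell$-terms in $R^*$, following the three clauses~\eqref{eq:nfsc17}--\eqref{eq:nfsc19}. The base case $R^* \equiv R^\ell$ uses clause~\eqref{eq:nfsc17} and the already-proved correctness of $\nfs^c_1$ from the proof of Lemma~\ref{lem:nfsc3}, together with associativity~\eqref{SCL7}. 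The case $R^* \equiv R'^* \leftand S^d$ uses clause~\eqref{eq:nfsc18} with the induction hypothesis and~\eqref{SCL7}; the case $R^* \equiv R'^* \leftor S^c$ uses clause~\eqref{eq:nfsc19} with Lemma~\ref{lem:nfsc3} and~\eqref{SCL7}. In every branch the right-hand side matches the grammatical production of either $P^c$ or $P^d$, so it is a $*$-term.

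Given the auxiliary claim, the remaining case of the main lemma is immediate from~\eqref{eq:nfsc16}: $\nfs^c(P^\tr \leftand P^*, Q^\tr \leftand Q^*) = P^\tr \leftand \nfs^c_3(P^*, Q^\tr \leftand Q^*)$, which is the conjunction of a $\tr$-term with a $*$-term (hence a $\tr$-$*$-term) and equals $P^\tr \leftand (P^* \leftand (Q^\tr \leftand Q^*)) = (P^\tr \leftand P^*) \leftand (Q^\tr \leftand Q^*)$ by~\eqref{SCL7}. I do not expect a serious obstacle here: the lemma is essentially the bookkeeping summary of the preceding lemmas, and the only real work is the induction for $\nfs^c_3$, where the mild subtlety is just checking that the right-hand sides of~\eqref{eq:nfsc17}--\eqref{eq:nfsc19} genuinely stay within the syntactic categories $P^c$ and $P^d$ rather than accidentally producing a non-left-associated combination.
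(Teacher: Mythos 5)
Your proposal is correct and follows essentially the same route as the paper: reduce to the case where both arguments are $\tr$-$*$-terms via Lemmas~\ref{lem:nfsc1}--\ref{lem:nfsc4}, then prove by induction on the number of $\ell$-terms in the second argument's $*$-part that $\nfs^c_3(P^*, Q^\tr\leftand R^*)$ is a $*$-term derivably equal to $P^*\leftand(Q^\tr\leftand R^*)$, using clauses~\eqref{eq:nfsc17}--\eqref{eq:nfsc19}, the correctness of $\nfs^c_1$ from Lemma~\ref{lem:nfsc3}, and associativity~\eqref{SCL7}. The paper's proof is exactly this argument, stated more tersely.
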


\begin{proof}
By the four preceding lemmas it suffices to show that 
\[\nfs^c(P^\tr \leftands Q^*, R^\tr \leftands S^*)\]
is in $\SNF$ and that $\EqFSCL \vdash
\nfs^c(P^\tr \leftands Q^*, R^\tr \leftands S^*) = (P^\tr \leftands Q^*)
\leftands (R^\tr \leftands S^*)$. By \eqref{SCL7} and \eqref{eq:nfsc16}, in turn, it suffices to
prove that $\nfs^c_3(P^*, Q^\tr \leftands R^*)$ is a $*$-term and that
$\EqFSCL \vdash \nfs^c_3(P^*, Q^\tr \leftands R^*) = P^* \leftands (Q^\tr
\leftands R^*)$. We prove this by induction on the number of $\ell$-terms in
$R^*$. In the base case we have that $\nfs^c_3(P^*, Q^\tr \leftands R^\ell)
= \nfs^c_1(P^*, Q^\tr) \leftands R^\ell$ by \eqref{eq:nfsc17} and
the lemma's statement follows from Lemma~\ref{lem:nfsc3} and \eqref{SCL7}.

For conjunctions the lemma's statement follows from the induction hypothesis,
\eqref{SCL7} and \eqref{eq:nfsc18},
and for disjunctions it follows from Lemma~\ref{lem:nfsc3},
 \eqref{SCL7} and \eqref{eq:nfsc19}.
\end{proof}

We can now easily prove Theorem~\ref{thm:nfs}:
\begin{thm:nfs*}[Normal forms]
For any $P \in \ST$, $\nfs(P)$ terminates, $\nfs(P) \in \SNF$ and 
\[\EqFSCL\vdash \nfs(P) = P.\]
\end{thm:nfs*}

\begin{proof}
By induction on the structure of $P$. If $P$ is an atom, the result follows from
\eqref{eq:nfs1} and axioms~\eqref{SCL4}, \eqref{SCL5} and its dual. 
If $P$ is $\tr$ or $\fa$
the result follows from by \eqref{eq:nfs2} or \eqref{eq:nfs3}. 
For the induction we get the result from definitions \eqref{eq:nfs4}-\eqref{eq:nfs6}, Lemma
\ref{lem:nfsn}, Lemma~\ref{lem:nfsc5}, and axiom~\eqref{SCL2}.
\end{proof}

\subsection{\CP\ and evaluation trees}
\label{app:cp}
We finally show that equivalence~\eqref{**} holds: this is Theorem~\ref{thm:A1} below and
this text (excluding footnotes) is taken from~\cite{BP15}.

\bigskip

Let $\PS$ be the set of closed terms over $\SigCP$, and recall $se$'s definition on $\PS$
from Section~\ref{subsec:4.1}. 
\begin{definition}
\label{def:basic}
\textbf{Basic forms over $A$} are defined by the following grammar
\[t::= \tr\mid\fa\mid t\lef a \rig t\quad\text{for $a\in A$.}\]
We write $\BF$ for the set of basic forms over $A$. 
\end{definition}

The following lemma's exploit the structure of basic forms.\footnote{%
  We speak of ``basic forms'' instead of normal forms in order to avoid intuitions 
  from term rewriting: for example, the basic form associated with action $a$
  is $\tr\lef a\rig\fa$, whereas one would expect that the normal form of the latter is $a$.}

\begin{lemma}
\label{la:2.5}
For each $P\in\PS$ there exists $Q\in\BF$ such that
$\CP\vdash P=Q$.
\end{lemma}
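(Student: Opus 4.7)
The plan is to proceed by structural induction on $P\in\PS$, reducing the general case to an auxiliary claim about the conditional composition of three basic forms.

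Base cases. If $P=\tr$ or $P=\fa$ then $P\in\BF$ already, and we take $Q=P$. If $P=a\in A$ then axiom~\eqref{cp3} gives $\CP\vdash a=\tr\lef a\rig\fa$, and the right-hand side is a basic form.

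Inductive step. Suppose $P=P_1\lef P_2\rig P_3$. By the induction hypothesis applied to each $P_i$, there exist basic forms $Q_1,Q_2,Q_3\in\BF$ with $\CP\vdash P_i=Q_i$ for $i=1,2,3$, so by congruence $\CP\vdash P=Q_1\lef Q_2\rig Q_3$. It remains to show that this conditional composition of three basic forms can itself be turned into a basic form. I propose to isolate this as an auxiliary claim: \emph{for all $X,Y,Z\in\BF$, there exists $W\in\BF$ with $\CP\vdash X\lef Y\rig Z=W$.}

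Proof of the auxiliary claim by structural induction on the basic form $Y$. If $Y=\tr$, take $W=X$ and apply~\eqref{cp1}. If $Y=\fa$, take $W=Z$ and apply~\eqref{cp2}. If $Y=U\lef a\rig V$ with $U,V\in\BF$ and $a\in A$, then~\eqref{cp4} gives
\[
X\lef(U\lef a\rig V)\rig Z \;=\; (X\lef U\rig Z)\lef a\rig(X\lef V\rig Z).
\]
Since $U$ and $V$ are strict sub-basic-forms of $Y$, the inner induction hypothesis yields basic forms $W_1,W_2\in\BF$ with $\CP\vdash X\lef U\rig Z=W_1$ and $\CP\vdash X\lef V\rig Z=W_2$. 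Then $W=W_1\lef a\rig W_2$ is a basic form (its middle symbol is an atom and its branches are in $\BF$), and by congruence $\CP\vdash X\lef Y\rig Z=W$, as required.

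The main obstacle is organising the induction so that it actually terminates. A naive induction on $P$ alone stalls in the inductive step, because once the $P_i$ have been replaced by basic forms $Q_i$, one still has to normalise $Q_1\lef Q_2\rig Q_3$, and this is not obviously simpler than the original $P$. The key observation is that the only CP axiom that flattens conditionals in the \emph{test} position, namely~\eqref{cp4}, turns a compound test $U\lef a\rig V$ into two smaller tests $U$ and $V$ while leaving the branches $X,Z$ fixed. This makes structural induction on the middle argument $Y$ the right measure for the auxiliary claim, and together with~\eqref{cp1}--\eqref{cp3} it covers all cases.
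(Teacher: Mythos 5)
Your proposal is correct and follows essentially the same route as the paper: structural induction on $P$, reducing the composite case to an auxiliary claim that the conditional composition of three basic forms is provably equal to a basic form, which is then proved by structural induction on the middle argument using \eqref{cp1}, \eqref{cp2}, and \eqref{cp4}. The paper states the auxiliary result with proof deferred to ``structural induction on $Q$''; you have simply written out those details, including the correct observation that \eqref{cp4} is what drives the inner induction.
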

\begin{proof} First we establish an auxiliary result:
 if $P,Q,R$ are basic forms, then there is a basic form $S$ such that
$\CP\vdash P\lef Q\rig R=S$. This follows by structural induction on $Q$.

The lemma's statement follows by structural induction on $P$. The base
cases $P\in\{\tr,\fa,a\mid a\in A\}$ are trivial, and if $P= P_1\lef P_2\rig
P_3$ there exist by induction basic forms $Q_i$ such that $\CP\vdash P_i=Q_i$, hence
$\CP\vdash P_1\lef P_2\rig P_3=Q_1\lef Q_2\rig Q_3$. Now apply the auxiliary result.
\end{proof}

Recall that the symbol $\equiv$ denotes `syntactic equivalence'.

\begin{lemma}
\label{la:2.6}
For all basic forms $P$ and $Q$, $se(P)=se(Q)$ implies $P\equiv Q$.
\end{lemma}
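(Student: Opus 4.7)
The plan is to proceed by straightforward structural induction on $P$, exploiting the fact that on basic forms the function $se$ is patently injective: the root of $se(P)$ transparently reveals whether $P$ is $\tr$, $\fa$, or a conditional, and in the conditional case it also reveals the top-level guard atom together with the two immediate sub-basic-forms.

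First I would unfold $se$ on a basic form $P_1 \lef a \rig P_2$. Using $se(a) = \tr \unlhd a \unrhd \fa$ together with the extended clause $se(P\lef Q\rig R) = se(Q)[\tr \mapsto se(P),\, \fa \mapsto se(R)]$, an easy computation gives
\[
se(P_1 \lef a \rig P_2) \;=\; se(P_1)\,\unlhd\, a\,\unrhd\, se(P_2).
\]
Thus for any basic form $P$, the tree $se(P)$ is either the leaf $\tr$ (iff $P \equiv \tr$), the leaf $\fa$ (iff $P \equiv \fa$), or a tree whose root node is labelled by the outermost guard of $P$ and whose left and right subtrees are $se$-images of the immediate sub-basic-forms.

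Next I would run the induction on $P$. In the base cases $P \equiv \tr$ and $P \equiv \fa$, the evaluation tree is a leaf, so $se(Q) = se(P)$ forces $Q$ to be the same leaf: it cannot be the other constant, nor of the form $Q_1 \lef b \rig Q_2$ since the latter has a non-leaf root by the unfolding above. For the inductive step, let $P \equiv P_1 \lef a \rig P_2$. Then $se(P)$ has root $a$, so the case analysis above rules out $Q \equiv \tr$ and $Q \equiv \fa$, leaving $Q \equiv Q_1 \lef b \rig Q_2$. Comparing roots and subtrees of $se(Q_1)\,\unlhd\, b\,\unrhd\, se(Q_2) = se(P_1)\,\unlhd\, a\,\unrhd\, se(P_2)$ yields $b = a$, $se(Q_1) = se(P_1)$, and $se(Q_2) = se(P_2)$. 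The induction hypothesis then gives $P_i \equiv Q_i$ for $i=1,2$, and hence $P \equiv Q$.

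There is no real obstacle here; the entire argument is syntactic bookkeeping once the identity $se(P_1 \lef a \rig P_2) = se(P_1)\unlhd a\unrhd se(P_2)$ is in place. The only point that warrants a second of care is that the grammar of basic forms does not allow bare atoms $a$ as stand-alone terms, so every occurrence of an atom sits as the guard of a conditional; this is precisely what makes the root of $se(P)$ uniquely identifiable with $P$'s outermost syntactic constructor.
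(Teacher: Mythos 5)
Your proof is correct and follows essentially the same route as the paper: structural induction on $P$, using the fact that $se(P_1\lef a\rig P_2)=se(P_1)\unlhd a\unrhd se(P_2)$ on basic forms so that the root of the evaluation tree determines the outermost constructor and guard, after which the induction hypothesis applies to the subtrees. Your version merely makes explicit the unfolding of $se$ on conditionals with atomic guards, which the paper leaves implicit.
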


\begin{proof} By structural induction on $P$. 
The base cases $P\in\{\tr,\fa\}$ are trivial.
If $P\equiv P_1\lef a\rig P_2$, then $Q\not\in\{\tr,\fa\}$
and $Q\not\equiv Q_1\lef b\rig Q_2$ if $b\ne a$, so $Q\equiv Q_1\lef a\rig Q_2$
and $se(P_i)=se(Q_i)$. By induction we find $P_i\equiv Q_i$, and hence $P\equiv Q$.
\end{proof}

\begin{definition}
\label{def:freevc}
\textbf{Free valuation congruence}, notation $=_\fr$, is defined on \PS\ as follows: 
\[P=_\fr Q\quad\iff\quad se(P)=se(Q).\]
\end{definition}

\begin{lemma}
\label{la:x}
Free valuation congruence is a congruence relation.
\end{lemma}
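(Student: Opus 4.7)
The plan is to unpack the definition of $=_{\fr}$ and check the two ingredients of a congruence relation separately: that it is an equivalence relation and that it is compatible with the only operation of the signature $\SigCP$, namely the conditional $\_\lef\_\rig\_$ (the constants $\tr$, $\fa$, and $a\in A$ are $0$-ary and impose no congruence obligation).

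First I would verify that $=_{\fr}$ is an equivalence relation. Since $P =_{\fr} Q$ is defined as $se(P)=se(Q)$ in $\NT$, reflexivity, symmetry, and transitivity are inherited immediately from the corresponding properties of equality on evaluation trees. No induction or case analysis is required here.

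The only substantive step is compatibility with the conditional: I would show that if $P_i =_{\fr} Q_i$ for $i=1,2,3$, then $P_1 \lef P_2 \rig P_3 =_{\fr} Q_1 \lef Q_2 \rig Q_3$. Using the extended clause
\[se(R_1 \lef R_2 \rig R_3) = se(R_2)[\tr \mapsto se(R_1), \fa \mapsto se(R_3)]\]
recalled from Section~\ref{subsec:4.1}, the hypothesis $se(P_i)=se(Q_i)$ lets us substitute equals for equals inside the leaf-replacement, yielding
\[se(P_2)[\tr \mapsto se(P_1), \fa \mapsto se(P_3)] = se(Q_2)[\tr \mapsto se(Q_1), \fa \mapsto se(Q_3)],\]
which is exactly $se(P_1 \lef P_2 \rig P_3) = se(Q_1 \lef Q_2 \rig Q_3)$.

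There is no real obstacle here: the whole point of defining $=_{\fr}$ via the semantic function $se$ is that congruence becomes a one-line consequence of the compositional definition of $se$ on the conditional. The only care needed is to notice that the leaf-replacement operator $X[\tr\mapsto Y,\fa\mapsto Z]$ is a well-defined function of its three tree arguments (by the recursive definition from Section~\ref{subsec:SCLe}), so equal inputs give equal outputs. This suffices to close the proof.
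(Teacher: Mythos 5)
Your proof is correct and follows essentially the same route as the paper: unfold the definition of $se$ on the conditional and observe that leaf replacement is a function, so equal arguments yield equal trees. The only cosmetic difference is that you substitute all three arguments simultaneously (and spell out the trivial equivalence-relation part), whereas the paper treats one argument position at a time and leaves the other cases as ``similar''.
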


\begin{proof}
Let $P,Q,R\in\PS$ and assume $P=_\fr P'$, thus $se(P)=se(P')$. 
Then $se(P\lef Q\rig R)=se(Q)[\tr\mapsto se(P),\fa\mapsto se(R)]=
se(Q)[\tr\mapsto se(P'),\fa\mapsto se(R)]=se(P'\lef Q\rig R)$, and thus
$P\lef Q\rig R=_\fr P'\lef Q\rig R$. The two remaining cases
can be proved in a similar way.
\end{proof}

\begin{theorem}[Completeness of $\CP$ for closed terms]
\label{thm:A1}
For all $P,Q\in\PS$, 
\[\CP\vdash P=Q\quad\iff\quad P=_\fr Q.\]
\end{theorem}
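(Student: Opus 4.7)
The plan is to prove the two implications separately, leaning on the auxiliary Lemmas~\ref{la:2.5}, \ref{la:2.6}, and~\ref{la:x} which are already in place.

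For the $(\Rightarrow)$ direction (soundness), I would check that free valuation congruence $=_{\fr}$ is a model of the axioms in Table~\ref{CP}. Lemma~\ref{la:x} already gives that $=_{\fr}$ is a congruence, so it remains to verify that each of \eqref{cp1}--\eqref{cp4} is valid under the extended $se$. Each is a direct calculation with the leaf-replacement identity~\eqref{id:rp}: e.g.\ for~\eqref{cp1}, $se(P \lef \tr \rig Q) = se(\tr)[\tr\mapsto se(P),\fa\mapsto se(Q)] = \tr[\tr\mapsto se(P),\fa\mapsto se(Q)] = se(P)$; \eqref{cp2} is symmetric; \eqref{cp3} is $se(\tr \lef P \rig \fa) = se(P)[\tr\mapsto\tr,\fa\mapsto\fa] = se(P)$; and~\eqref{cp4} follows from~\eqref{id:rp} applied to $se(y)$ and $se(z)$. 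Soundness then propagates through equational logic in the standard way.

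For the $(\Leftarrow)$ direction, I would reduce to basic forms. Assume $P =_{\fr} Q$, i.e.\ $se(P) = se(Q)$. By Lemma~\ref{la:2.5} there are basic forms $P', Q' \in \BF$ with $\CP \vdash P = P'$ and $\CP \vdash Q = Q'$. By the already-proved soundness direction, $se(P) = se(P')$ and $se(Q) = se(Q')$, so $se(P') = se(Q')$. Now Lemma~\ref{la:2.6} delivers $P' \equiv Q'$, i.e.\ they are literally the same term, and therefore $\CP \vdash P = P' = Q' = Q$ by reflexivity and transitivity.

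The argument is entirely assembled from existing pieces, so there is no real obstacle; the only mildly delicate step is remembering to verify soundness of all four CP-axioms before using it in the completeness direction, since the completeness argument passes through the $se$-images of the normal forms $P'$ and $Q'$ and needs $se(P) = se(P')$. Everything else is bookkeeping: Lemma~\ref{la:2.5} supplies a CP-provable normal form, and Lemma~\ref{la:2.6} upgrades semantic equality of basic forms to syntactic identity, which together collapse the problem onto reflexivity.
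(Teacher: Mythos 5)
Your proposal is correct and follows essentially the same route as the paper: soundness of the four \CP-axioms under $se$ (with \eqref{cp4} via the repeated-replacement identity), then reduction to basic forms via Lemma~\ref{la:2.5}, syntactic identity of $se$-equal basic forms via Lemma~\ref{la:2.6}, and transitivity. If anything, your phrasing of the $(\Leftarrow)$ step — deriving $se(P')=se(Q')$ from soundness applied to $\CP\vdash P=P'$ and $\CP\vdash Q=Q'$ rather than asserting $\CP\vdash P'=Q'$ prematurely — is the cleaner rendering of the argument.
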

\begin{proof}
We first prove $\Rightarrow$.\footnote{%
  Without loss of generality it can be assumed that substitutions happen first in equational proofs, 
  that is, the rule (Substitution) in Table~\ref{tab:el}
  may only be used when $s=t$ is an axiom in $\SCLe$ (see, e.g., \cite{Aceto}).}
By Lemma~\ref{la:x}, 
$=_\fr$ is a congruence relation and it easily follows
that closed instances of \CP-axioms are valid. In the case of axiom~\eqref{cp4} 
this follows from
\begin{align*}
se(P\lef(&Q\lef R\rig S)\rig U)\\
&=se(Q\lef R\rig S)[\tr\mapsto se(P),\fa\mapsto se(U)]\\
&=\big(se(R)[\tr\mapsto se(Q),\fa\mapsto se(S)]\big)\;[\tr\mapsto se(P),\fa\mapsto se(U)]\\
&=se(R)[\tr\mapsto se(Q)[\tr\mapsto se(P),\fa\mapsto se(U)],
\fa\mapsto \,se(S)[\tr\mapsto se(P),\fa\mapsto se(U)]]\\
&=se(R)[\tr\mapsto se(P\lef Q\rig U),\fa\mapsto se(P\lef S\rig U)]\\
&=se((P\lef Q\rig U)\lef R\rig(P\lef S\rig U)).
\end{align*}

In order to prove $\Leftarrow$, let $P=_\fr Q$. According to Lemma~\ref{la:2.5}
there exist basic forms
$P'$ and $Q'$ such that $\CP\vdash P=P'$ and $\CP\vdash Q=Q'$, so
$\CP\vdash P'=Q'$. By ($\Rightarrow$) we find $P'=_\fr Q'$, so
by Lemma~\ref{la:2.6}, $P'\equiv Q'$. Hence, $\CP\vdash P=P'= Q'=Q$.
\end{proof}

\end{document}